\definecolor{ltblue}{rgb}{0,0.4,0.4}
\definecolor{dkblue}{rgb}{0,0.1,0.6}
\definecolor{dkgreen}{rgb}{0,0.35,0}
\definecolor{dkviolet}{rgb}{0.3,0,0.5}
\definecolor{dkred}{rgb}{0.5,0,0}
\tiny\color[gray]{0.3},
\newcommand{\haskell}[1]{\lstinline{#1}}
\NewDocumentCommand{\optionalParens}{s m m}{
    \IfBooleanTF{#2}{\left(#3\right)}{\IfBooleanTF{#1}{~#3}{#3}}
}
\NewDocumentCommand{\apply}{m s O{} m}{
    #1#3 \optionalParens*{#2}{#4}
}
\NewDocumentCommand{\applytwo}{m O{} s m s m}{ 
   #1#2~\optionalParens{#3}{#4}~\optionalParens{#5}{#6}
}
  \def\({}%
  \def\){}%
  \def\Zd{Zd}
  \def\lambdaC{lambda-C}
  \def\lambdaP{lambda-Pc}
  \def\qudits{qudits}
  \def\CCat{C}
  \def\PCat{Pc}
  \def\SCat{S}
  \def\psiof#1{psi(#1)}%
  \def\Delta{Delta}
  \def\cprod{star}
\newcommand{\letin}[3]{\text{let}~#1 = #2~\text{in}~#3}
\newcommand{\caseof}[2]{\text{case}~#1~\text{of}~\left\{#2\right\}}
\newcommand{\inl}[1]{\iota_1(#1)}
\newcommand{\inr}[1]{\iota_2(#1)}
\newcommand{\ie}{\emph{i.e.,}\xspace}
\newcommand{\lolli}{\multimap}
\newcommand{\ZZ}{\mathbb{Z}}
    \theoremstyle{acmplain}
\NewDocumentCommand{\Zd}{st'}{\IfBooleanTF{#1}{\tfrac12\mathbb{Z}_{d'}}
                               { \IfBooleanTF{#2}{\mathbb{Z}_{d'}}
                               { \mathbb{Z}_d
                               }}}
\RenewDocumentCommand{\ZZ}{st'}{\IfBooleanTF{#1}{\tfrac12\mathbb{Z}_{4}}
                               { \IfBooleanTF{#2}{\mathbb{Z}_{4}}
                               { \mathbb{Z}_2
                               }}}
\newcommand{\unitaryGroup}{\mathcal{U}_{2^n}}
\newcommand{\Pauligroup}{\mathrm{Pauli}_{d,n}}
\newcommand{\Pcliffordgroup}{\mathrm{PClif}_{d,n}}
\newcommand{\cliffordgroup}{\mathrm{Clif}_{n}}
\newcommand{\Cliffordgroup}{\mathrm{Clif}_{d,n}}
\newcommand{\Pobj}[1]{\mathrm Q_{#1}}
\newcommand{\Punit}{\bullet}
\newcommand{\Qdn}{\mathrm{Q}_{d,n}}
\newcommand{\idmorph}[1]{\mathrm{id}_{#1}}
\NewDocumentCommand{\pcl}{s}{\IfBooleanTF{#1}{\mathit{PCl}}{\mathbf{PCl}}}
\NewDocumentCommand{\pauligroup}{s}{\IfBooleanTF{#1}{\mathit{P}}{\mathbf{P}}}
\newcommand{\spgroup}{\mathrm{Sp}}
\newcommand{\cat}[1]{\mathcal{#1}}
\newcommand{\PCat}{\cat{P}_c}
\newcommand{\CCat}{\cat{L}}
\newcommand{\SCat}{\cat{S}}
\newcommand{\lambdaC}{\lambda^{\CCat}}
\newcommand{\lambdaP}{\lambda^{\PCat}}
\renewcommand{\textsf}{\text}
\newcommand{\cprod}{\star}
\newcommand{\ptensor}{\boxtimes}
\newcommand{\qudit}{\text{qu\emph{d}it}\xspace}
\newcommand{\qudits}{\text{qu\emph{d}its}\xspace}
\newcommand{\Unit}{\ZZ} 
\newcommand{\Unitd}{\Zd}
\newcommand{\force}[1]{#1}
\newcommand{\Pauli}{\textsc{Pauli}}
\newcommand{\Clifford}{\apply{\texttt{Clifford}}}
\newcommand{\compile}{\textsf{compile}}
\newcommand{\rank}[1]{\mathrm{rank}(#1)}
\newcommand{\flatten}[1]{\lvert #1 \rvert}
\newcommand{\Clf}{\textit{Clf}}
\newcommand{\CNOT}{\textsf{CNOT}}
\NewDocumentCommand{\CQSTATE}{O{}}{\textsf{CQ}^{#1}}
\newcommand{\lift}[1]{\lceil #1 \rceil}
\newcommand{\FinNat}[1]{\textrm{Nat}(#1)}
\NewDocumentCommand\interp{sm}
        {\IfBooleanTF{#1}
                {\conjugate{\left\llbracket #2 \right\rrrbracket}}
                {\left\llbracket #2 \right\rrbracket}
        }
\newcommand{\interpL}[1]{\interp{#1}^{\scalebox{0.5}{\ensuremath{\CCat}}}}
\newcommand{\interpP}[1]{\interp{#1}^{\scalebox{0.5}{\ensuremath{\PCat}}}}
\newcommand{\up}[1]{\lceil #1 \rceil}
\newcommand{\down}[1]{\lfloor #1 \rfloor}
\NewDocumentCommand\commute{smm}
        {#2 \IfBooleanT{#1}{\not}\upmodels #3}
\newcommand{\dom}{\textsf{dom}}
\newcommand{\qft}{\textrm{qft}}
\newcommand{\sympcaseof}[3]{\caseof{#1 \vartriangleright #2}{#3}}
\newcommand{\PauliZ}{\textbf{Z}}
\newcommand{\PauliX}{\textbf{X}}
\newcommand{\PauliY}{\textbf{Y}}
\newcommand{\PauliI}{\textbf{I}}
\newcommand{\vdashS}{\vdash^{\scalebox{0.5}{$\textbf{S}$}}}
\newcommand{\vdashL}{\vdash^{\scalebox{0.5}{$\CCat$}}}
\newcommand{\vdashP}{\vdash^{\scalebox{0.5}{$\PCat$}}}
\newcommand{\vdashPL}{\vdash^{\scalebox{0.5}{$\textbf{L}$}}}
\newcommand{\vdashT}{\vdash^{\scalebox{0.5}{\textbf{N}}}}
\newcommand{\vdashC}{\vdash^{\scalebox{0.5}{$\CCat$}}}
\newcommand{\pauliToClifford}{\apply{\texttt{pauliToClifford}}}
\newcommand{\R}{\Unitd}
\NewDocumentCommand{\symplecticform}{O{}mm}{\omega_{#1}\left(#2,#3\right)}
\NewDocumentCommand{\SYMPLECTICFORM}{O{}}{\overline{\omega}_{#1}}
\newcommand{\ctxequiv}{\approx^{\text{ctx}}}
\newcommand{\lrequiv}{\equiv}
\newcommand{\ValRelation}[1]{\mathcal{V}_{#1}}
\newcommand{\ExpRelation}[1]{\mathcal{E}_{#1}}
\newcommand{\inz}{\text{inz}}
\newcommand{\inx}{\text{inx}}
\newcommand{\PauliM}{\apply{\textsc{Phase}}*}
\newcommand{\Norm}[1]{\mathcal{N}[#1]}
\newcommand{\Val}[1]{\mathcal{V}[#1]}
\newcommand{\config}[2]{\left\langle #1 \right\rangle \xspace #2}
\NewDocumentCommand{\discard}{O{}m}{\texttt{discard}_{#1}(#2)}
\NewDocumentCommand{\dup}{O{}m}{\texttt{dup}_{#1}(#2)}
\newcommand{\zero}{\textbf{0}}
\newcommand{\obj}[1]{\textsf{Obj}(#1)}
\NewDocumentCommand{\norm}{O{} m}{\left\lVert #2 \right\rVert_{#1}}
\newcommand{\sgn}[1]{\apply{\text{sgn}}*{#1}}
\newcommand{\divd}[1]{\sgn{#1}}
\newcommand{\clambda}{\underline{\lambda}}
\newcommand{\CliffordFunction}[2]{\Clifford(#1,#2)}
\newcommand{\PseudoCliffordFunction}[2]{\lift{#1 \lolli #2}}
\newcommand{\Frame}[2]{\textsf{Tableau}(#1,#2)}
\NewDocumentCommand{\pseudoinverse}{O{} m}{\textsf{pinv}_{#1}(#2)}
\NewDocumentCommand{\pseudomuinverse}{O{} m}{\textsf{minv}_{#1}(#2)}
\newcommand{\orth}{\textrm{orth}}
\newcommand{\stype}{\sigma}
\newcommand{\pow}[2]{\textsf{pow}(#1,#2)}
\newcommand{\muof}[1]{{#1}^{\mu}}
\NewDocumentCommand{\psiof}{sm}{
    \IfBooleanTF{#1}{(#2)^{\boldsymbol{\psi}}}
                    {{#2}^{\boldsymbol{\psi}}}
}
\newcommand{\ntype}{\boldsymbol{\nu}}
\begin{document}
\title{Qudit Quantum Programming with Projective Cliffords}

\author{Jennifer Paykin}
\orcid{0009-0008-9502-3219}
\affiliation{%
  \institution{University of Vermont}
  \city{Burlington}
  \country{USA}
}
\affiliation{%
  \institution{Intel}
  \city{Hillsboro}
  \country{USA}
}
\email{jpaykin@uvm.edu}

\author{Sam Winnick}
\orcid{0009-0009-9511-6551}
\affiliation{%
  \institution{Simon Fraser University}
  \city{Vancouver}
  \country{Canada}
}
\affiliation{%
  \institution{University of Waterloo}
  \city{Waterloo}
  \country{Canada}
}
\email{swinnick@sfu.ca}


\begin{abstract}
    This paper introduces a novel abstraction for programming quantum operations, specifically \emph{projective Cliffords}, as functions over the \qudit Pauli group.
    Generalizing the idea behind Pauli tableaux, we introduce a type system and lambda calculus for projective Cliffords called $\lambdaP$ that captures well-formed Clifford operations via a Curry-Howard correspondence with a particular encoding of the Clifford and Pauli groups.
    In $\lambdaP$, users write functions that encode projective Cliffords $P \mapsto U P U^\dagger$, and such functions are compiled to circuits executable on modern quantum computers that transform quantum states $\ket{\varphi}$ into $U \ket{\varphi}$, up to a global phase. 
    Importantly, the language captures not just qubit operations, but \qudit operations for any dimension $d$.

    Throughout the paper we explore what it means to program with projective Cliffords through a number of examples and a case study focusing on stabilizer error correcting codes.
\end{abstract}

\begin{CCSXML}
<ccs2012>
   <concept>
       <concept_id>10010520.10010521.10010542.10010550</concept_id>
       <concept_desc>Computer systems organization~Quantum computing</concept_desc>
       <concept_significance>500</concept_significance>
       </concept>
   <concept>
       <concept_id>10003752.10003790.10003801</concept_id>
       <concept_desc>Theory of computation~Linear logic</concept_desc>
       <concept_significance>300</concept_significance>
       </concept>
   <concept>
       <concept_id>10003752.10003790.10011740</concept_id>
       <concept_desc>Theory of computation~Type theory</concept_desc>
       <concept_significance>500</concept_significance>
       </concept>
   <concept>
       <concept_id>10003752.10010124.10010131.10010137</concept_id>
       <concept_desc>Theory of computation~Categorical semantics</concept_desc>
       <concept_significance>300</concept_significance>
       </concept>
 </ccs2012>
\end{CCSXML}

\ccsdesc[500]{Computer systems organization~Quantum computing}
\ccsdesc[500]{Theory of computation~Type theory}
\ccsdesc[300]{Theory of computation~Linear logic}
\ccsdesc[300]{Theory of computation~Categorical semantics}

\keywords{Quantum computing, Clifford, Pauli, Qudits, Linear types}


\maketitle

\section{Introduction}
\label{sec:introduction}
In the usual model of quantum computing, a unitary operator $U$ is characterized by the fact that it transforms the quantum state $\ket{\varphi}$ into the state $U \ket{\varphi}$.\footnote{A unitary transformation on $n$ qubits is a complex matrix $U$ of size $2^n \times 2^n$ whose inverse is its conjugate transpose, denoted $U^\dagger$. An $n$-qubit quantum state is a complex vector of size $2^n$.} \xspace
In some cases, it is more useful to describe a unitary by its conjugation action $U P U^\dag$ on elements of the multi-qubit \emph{Pauli group} $P \in \pauligroup$~\citep{gottesman1998heisenberg}. This perspective is especially important \emph{Clifford operators}: unitaries that preserve the Pauli group under conjugation, i.e., $U P U^\dag \in \pauligroup$. A Clifford's conjugation action $P \mapsto U P U^\dag$ defines an equivalence class of physically indistinguishable Clifford operators $[U]$, which we refer to collectively as a \emph{projective Clifford operator}.\footnotemark

\footnotetext{Formally, a projective Clifford $[U]$ is an equivalence class of Cliffords up to global phase: $U_1\sim U_2$ if and only if $U_2=e^{i\theta}U_1$ for some $\theta$. Two unitaries are equivalent if and only if they have the same conjugation action on the Pauli group: $U_1 P U_1^\dagger = U_2 P U_2^\dagger$ for all $P \in \pauligroup$.}

This work explores how to program projective Cliffords directly through their conjugation action on Paulis, rather than as circuits or by their action on states. 
The result is a \emph{new programming paradigm} where users write functions encoding projective Cliffords $P \mapsto U P U^\dagger$, and such functions are compiled to circuits that implement $\ket{\varphi}$ into $U \ket{\varphi}$ (up to an indistinguishable global phase). This paradigm emphasizes the mathematical intuition behind Clifford-based algorithms such as stabilizer error correction, while enabling efficient compilation to circuits implementable on current quantum hardware using Pauli tableaux~\citep{aaronson2004,farinholt2014ideal}.

This paper makes three main contributions. 
First, we present a sound and complete lambda-calculus, $\lambdaP$, that precisely captures projective Cliffords and compiles efficiently to circuits.
Second, we provide detailed case studies demonstrating this programming paradigm.
Third, we generalize these frameworks to \qudits of arbitrary dimension, with the goal of extending this work towards universal quantum computing.

The rest of this introduction illustrates the main ideas behind $\lambdaP$ through examples. For ease of presentation, we start with qubit Cliffords before moving on to qu\(d\)its in \cref{sec:why-qudits}.




\subsection{Programming with Paulis}

In order to program projective Cliffords, we first need to talk about Pauli operators.

The single-qubit Pauli matrices $X$, $Y$, $Z$, and $I$ are defined as follows:
    \begin{align*}
        X=\begin{pmatrix}
            0 & 1 \\ 1 & 0
        \end{pmatrix}
        &&
        Z = \begin{pmatrix}
            1 & 0 \\ 0 & -1
        \end{pmatrix}
        &&
        Y = i X Z = \begin{pmatrix}
                        0 & -i \\ i & 0
                    \end{pmatrix}
        &&
        I = \begin{pmatrix}
            1 & 0 \\ 0 & 1
        \end{pmatrix}
    \end{align*}
Each single-qubit Pauli matrix can be encoded as a pair of bits\footnotemark\xspace $x,z \in \ZZ$ as $\Delta_{[x,z]} = i^{x z} X^{x} Z^{z}$:
\begin{align*}
    X = \Delta_{[1,0]}
    \qquad
    Z = \Delta_{[0,1]}
    \qquad
    Y = \Delta_{[1,1]}
    \qquad
    I = \Delta_{[0,0]}
\end{align*}
At the core of $\lambdaP$ is a sub-language of vectors $v$ that encode Pauli operators $\Delta_v$.
This sub-language, $\lambdaC$, supports addition and scalar multiplication of vectors, where scalars, of type $\Unit$, correspond to the integers modulo $2$. Its type system uses linear logic to ensure functions are linear transformations in their input, similar to the linear-algebraic $\lambda$-calculus of \citet{diazcaro2024linear}.

\footnotetext{Other encodings of Paulis in the literature, such as $D_{[x,z]} = X^{x} Z^{z}$, may appear simpler, but the $\Delta$ operators are better suited for encoding projective Cliffords, as discussed by \citet{winnick2024condensed}.}

In $\lambdaC$ we define linear transformations by how they act on the basis elements of the vector space. For example, the \emph{symplectic form} function $\omega(v_1,v_2)$ returns a scalar $s$ such that $\Delta_{v_1} \Delta_{v_2} = (-1)^s \Delta_{v_2} \Delta_{v_1}$. It is defined by its action on the basis elements $\inl{1}=[1,0]$ and $\inr{1}=[0,1]$ of $\Unit \oplus \Unit$:\footnote{Code blocks use Haskell-style pattern matching syntax as pseudo-code; see \cref{appendix:glossary} in the supplimentary materials.
}
\begin{lstlisting}[basicstyle=\footnotesize]
    omega ::^L Unit oplus Unit -o Unit oplus Unit -o Unit
    omega (in1 1) (in1 1) = 0       -- X X = (-1)^0 X X
    omega (in2 1) (in2 1) = 0       -- Z Z = (-1)^0 Z Z
    omega (in1 1) (in2 1) = 1       -- X Z = (-1)^1 Z X
    omega (in2 1) (in1 1) = 1       -- Z X = (-1)^1 X Z
\end{lstlisting}
When applied to non-basis arguments, the function breaks up its arguments into a linear transformation of these basis elements. For example, we can check that $Y Y = (-1)^s Y Y$, where:
\begin{lstlisting}[basicstyle=\footnotesize]
    s = omega [1,1] [1,1] =   omega ([1,0] + [0,1]) ([1,0] + [0,1])
                          ->* omega [1,0] [1,0] + omega [1,0] [0,1] + omega [0,1] [1,0] + omega [1,0] [0,1]
                          ->* 0 + 0 + 1 + 1 = 0 IN Unit
\end{lstlisting}

\paragraph*{From Vectors to the Pauli Group}

The Pauli matrices generate a group under matrix multiplication where all elements have the form $i^r \Delta_v$ for $r \in \ZZ'$: for example, $X Z = -i Y = i^3 \Delta_{[1,1]}$.
However, Cliffords always send $\Delta_v$ to $\pm \Delta_{v'}$, never $\pm i \Delta_{v'}$, and so $\lambdaP$ really only needs to consider \emph{Hermitian} Paulis of the form $(-1)^s \Delta_v$ where $s \in \ZZ$.
We indicate this \emph{phase} $s \in \ZZ$ with the syntax $\config{s} v$ for $(-1)^s \Delta_v$. For example, $-\Delta_v$ is represented by \haskell{<1> v}.

The \emph{Hermitian product}~\citep{PaykinSchmitz2023pcoast} $\Delta_{v_1} \cprod \Delta_{v_2} = (-i)^{\omega(v_1,v_2)} \Delta_{v_1} \Delta_{v_2}$ defines a non-associative group-like structure (a loop structure) on the set of Hermitian Paulis. For example:
\begin{lstlisting}[basicstyle=\footnotesize]
    X *.* Z ->* <1> Y -- aka -Y                             Z *.* X ->* <0> Y -- aka +Y
\end{lstlisting}

\paragraph*{Multi-Qubit Paulis}

A Pauli on $n$ qubits is the tensor product of $n$ single-qubit Paulis, which we represent as $\Delta_v$ where $v \in (\ZZ \oplus \ZZ)^n$. The type of a multi-qubit Pauli is written $\Pauli \ptensor \cdots \ptensor \Pauli$, so as not to confuse it with the $\otimes$ from linear logic. For example, $-I \otimes X \otimes Y \otimes I$ can be expressed as:
\begin{lstlisting}[basicstyle=\footnotesize]
    negx2y3 :: |^ Pauli ** Pauli ** Pauli ** Pauli ^|
    negx2y3 = |^ <1> I ** X ** Y ** I ^|
\end{lstlisting}

Every Pauli type $\tau$ has an identity $I_\tau \triangleq \config{0}\zero$ defined in terms of the zero vector, as well as a Hermitian product $\cprod_\tau$. When clear from context, we omit the subscripts and just write $I$ and $\cprod$. 

\subsection{Programming with Projective Cliffords}

The Clifford group is the set of unitaries that preserves the Pauli group under conjugation:
    \[
        \cliffordgroup \triangleq \{ U \in \unitaryGroup \mid \forall P \in \pauligroup.~ U P U^\dagger \in \pauligroup \}
    \]
%
    For example, we can see that the Hadamard unitary 
    $H=\tfrac{1}{\sqrt{2}} \left( \begin{smallmatrix}
        1 & 1 \\ 1 & -1
    \end{smallmatrix}\right)$ 
    is Clifford based on how it acts on Pauli matrices by conjugation:
    \[
        H X H^\dagger = Z 
        \qquad \qquad
        H Y H^\dagger = -Y
        \qquad \qquad
        H Z H^\dagger = X
    \]
    We encode this as a projective Clifford in $\lambdaP$ as the following function, now defined by case analysis on the basis elements $X,Z$ of type $\Pauli$:
\begin{lstlisting}[basicstyle=\footnotesize]
    hadamard :: |^ Pauli -o Pauli ^|
    hadamard |^X^| = Z
    hadamard |^Z^| = X
\end{lstlisting}
The type of \lstinline{hadamard}, written \lstinline{|^Pauli -o Pauli^|}, indicates a projective Clifford function from one single-qubit \lstinline{Pauli} to another. 
The brackets \haskell{|^-^|} in the type and in the pattern matching syntax distinguish these functions from ordinary linear transformations like \haskell{omega} above.

We can evaluate \lstinline{hadamard} on $Y$ by virtue of the fact that $Y = X \cprod Z$:
\begin{lstlisting}[basicstyle=\footnotesize]
    hadamard Y = hadamard (X *.* Z) -> hadamard X *.* hadamard Z -> Z *.* X -> <1>Y
\end{lstlisting}
Note that this computation relies on the assumption that $H (X \cprod Z) H^\dagger$ is the same as $H X H^\dagger \cprod H Z H^\dagger$. Since $P_1 \cprod P_2 = (-i)^{\omega(P_1,P_2)} P_1 P_2$, this is only true if $\omega(H X H^\dagger, H Z H^\dagger) = \omega(X,Z)=1$. Indeed, this property---that a function on Pauli operators respects the symplectic form---is one of the defining features that ensures a linear transformation is in fact a projective Clifford.

The type system of $\lambdaP$ ensures that every well-typed function is not only a linear transformation, but also respects the symplectic form. When we write a case analysis like \haskell{hadamard} above, the type system ensures that the branches $t_x$ and $t_z$ of the case statement satisfy $\omega(t_x,t_z)=1$. In contrast, the following function does \emph{not} type check, since $\omega(X,X)=0 \neq \omega(X,Z)$:
\begin{lstlisting}[basicstyle=\footnotesize]
    -- Does not type check
    illTyped |^X^| = X
    illTyped |^Z^| = X
\end{lstlisting}

To define a projective Clifford over a multi-qubit Pauli $P : \tau_1 \ptensor \tau_2$, we also proceed by case analysis on the basis elements $\inl{q}=[q,\zero]$ and $\inr{q}=[\zero,q]$ of $\tau_1 \ptensor \tau_2$. For example, the following \haskell{swap} function is a valid projective Clifford:
\begin{lstlisting}[basicstyle=\footnotesize]
    swap :: |^ tau_1 ** tau_2 -o tau_2 ** tau_1 ^|
    swap |^ in1 q_1 ^| = in2 q_1
    swap |^ in2 q_2 ^| = in1 q_2
\end{lstlisting}
This case analysis has its own symplectic form check: the two branches $t_1$ and $t_2$ corresponding to \haskell{in1 q_1} and \haskell{in2 q_2} should satisfy $\omega(t_1,t_2)=\omega(\text{\haskell{in1}}~q_1,\text{\haskell{in2}}~q_2)=0$ for all $q_1$ and $q_2$.

As another example, the controlled-not matrix $CNOT$ is a Clifford over 2-qubit Paulis:
\begin{center}
\begin{minipage}[t]{0.5\textwidth}
    \begin{lstlisting}[basicstyle=\footnotesize]
  cnot :: |^ Pauli ** Pauli -o Pauli ** Pauli ^|
  cnot |^ in1 X ^| = in1 X *.* in2 X
  cnot |^ in1 Z ^| = in1 Z
  cnot |^ in2 X ^| = in2 X
  cnot |^ in2 Z ^| = in1 Z *.* in2 Z
    \end{lstlisting}
\end{minipage}
\begin{minipage}[t]{0.44\textwidth}
    \begin{align*}
    \begin{array}{ccc}
        i & CNOT X_i CNOT^\dagger & CNOT Z_i CNOT^\dagger \\ \midrule
        1 & X_1 X_2 & Z_1 \\
        2 & X_2 & Z_1 Z_2
    \end{array}
    \end{align*}
\end{minipage}
\end{center}

\subsection{Why Cliffords? Why Qu\(d\)its?}
\label{sec:why-qudits}

Why introduce programming paradigm for Clifford operations? It is well-known that Cliffords are not universal for quantum computing, and in fact are efficiently simulatable on classical computers~\citep{aaronson2004}. Nevertheless, Cliffords play key parts in almost every quantum algorithm, and form the backbone of quantum error correction and fault tolerance~\citep{gottesman2024surviving,nielsen2002quantum}. A key contribution of this work is to explore how expressing these algorithms in terms of their conjugation action on Paulis highlights reveals underlying structure and could inspire new algorithmic approaches.

The algebraic structure of projective Cliffords is already widely used via Pauli tableaux, which serve as intermediate representations in quantum compilers~\citep{Wu2023,PaykinSchmitz2023pcoast}, simulation~\citep{aaronson2004,li2022paulihedral}, circuit synthesis~\citep{schneider2023sat,schmitz2024graph}, and equivalence checking~\citep{amy2018towards,kissinger2020,borgna2021hybrid}.
Another important contribution of this work is to elevate the tableau data structure into a full programming feature, supporting variables, functions, equational reasoning, polymorphism, and more.

Neither of these justifications get us to universal quantum computing, however; for this we turn to \qudits.
A qudit is a $d$-dimensional quantum system $\alpha_0 \ket{0} + \cdots + \alpha_{d-1} \ket{d-1}$, generalizing a qubit when $d=2$. While an $n$-qubit state has dimension $2^n$, an $n$-\qudit state has dimension $d^n$. The Pauli and Clifford groups generalize naturally to $\qudit$ systems~\citep{deBeadrap2013linearized}, but for any fixed $d$, the $\qudit$ Clifford group is still not universal.

However, universality \emph{can} be achieved by combining Cliffords of different dimension. Consider the quantum Fourier transform (QFT) on $n$ qubits. By interpreting that $n$-qubit state as a single $\qudit$ where $d=2^n$, the QFT operator becomes Clifford and is simple to express in $\lambdaP$. By combining QFT with qubit Cliffords, we obtain a universal set of unitaries.

Currently, $\lambdaP$ assumes a fixed dimension $d$, but future work will extend $\lambdaP$ to a universal quantum programming language by allowing polymorphism in $d$. To prepare for this, a major contribution of this work is establishing $\lambdaP$ not just for qubits, but for arbitrary $\qudits$.

\subsection{Outline}


\cref{sec:background} begins with an overview of the qudit Pauli and Clifford groups, in particular the condensed encodings that justify the correctness of the $\lambdaP$ language.
The calculus, developed in \cref{sec:lambdaC,sec:lambdaPC}, consists of two main parts: a linear type system $\lambdaC$ (\cref{sec:lambdaC}) to describe vectors and linear transformations; and a lambda calculus for projective Cliffords $\lambdaP$ (\cref{sec:lambdaPC}) that incorporates the necessary orthogonality check based on the symplectic form. We give both operational and categorical semantics of these languages and prove them sound and complete.

\cref{sec:extensions} extends the core $\lambdaP$ calculus to include programming features such as polymorphism, higher-order functions, meta-transformations, and custom data types.
\cref{sec:examples} uses these features for a case study of stabilizer error correcting codes. Finally, \cref{sec:related-work} discusses related and future work, including how to extend $\lambdaP$ to a universal Pauli-based programming paradigm.

\section{Background}
\label{sec:background}
The type system of $\lambdaP$ is inspired by Pauli tableaux, a binary encoding of qubit Cliffords used widely for circuit simulation, compilation, and optimization~\citep{aaronson2004}. 
A tableau represents an $n$-qubit Clifford operator by its projective action on the Pauli $X$ and $Z$ operators, and is well-formed only if this action preserves the canonical commutation relations of $X$ and $Z$.

Pauli tableaux have been generalized to \qudit Cliffords, and are well-behaved for odd and odd prime dimension  $d$~\citep{deBeadrap2013linearized}. For event dimensions, however, encodings become more complex, and several alternative formulations have been proposed.

The goal of $\lambdaP$'s type system is to ensure that functions correspond exactly to valid encodings of projective Cliffords. To do this, we need an encoding of qudit projective Cliffords in arbitrary dimension $d$ whose well-formedness properties are enforceable by a type system.

For odd dimensions this is relatively straight-forward: encodings are pairs consisting of a linear map and a symplectomorphism (a linear map that respects the symplectic form). 
Linearity can be captured via techniques from linear type systems~\citep{diazcaro2024linear}, and the symplectomorphism property is compositional in a way that is suitable for type-checking.

When $d$ is even, encodings of projective Cliffords are significantly more complicated. Projective Cliffords can be expressed as $X^xZ^z\mapsto e^{i\pi r(x,z)/d}X^{\psi_1(x,z)}Z^{\psi_2(x,z)}$ where $(r,\psi)$ satisfy certain constraints, but the constraint on $r$ depends on $\psi$, and $r$ need not even be linear.
In \citep{winnick2024condensed} we present \textit{condensed encodings}, which characterize projective Cliffords in any dimension as pairs of a linear map and symplectomorphism, with an additional subtle \emph{phase correction}. In $\lambdaP$, this correction is handled internally in the $\beta$-reduction rules and is invisible to the programmer. 


With this motivation in mind, the rest of this section reviews qudits in both even and odd dimensions and summarizes the main properties of condensed encodings.

\subsection{Qu\(d\)its for Even and Odd \(d\)}

A $d$-level quantum system, for an integer $d\geq2$, is called a \qudit. Throughout this work we fix $d$ and work with a system of multiple qudits. We use $\zeta$ to refer to a fixed primitive complex $d$th root of unity, meaning $d$ is the least positive integer such that $\zeta^d=1$. Further, we let $\tau$ be a primitive $d'$th root of unity squaring to $\zeta$, where $d'=d$ if $d$ is odd, and $d'=2d$ if $d$ is even. For example, if $d=2$, then $\zeta=-1$ and we may take $\tau=i$.

In the condensed encodings of \cref{sec:encodings}, some calculations take place in $\Zd$ while others take place in $\Zd'$.
We write $\overline{\,\cdot\,}:\Zd'^{2n}\to\Zd^{2n}$ for the reduction mod $d$ homomorphism and $\underline{\,\cdot\,}:\Zd^{2n}\to\Zd'^{2n}$ for the inclusion function, which is not a homomorphism. These subtleties only matter when $d$ is even; when $d$ is odd we have $d'=d$ and $\overline{\,\cdot\,}$ and $\underline{\,\cdot\,}$ are each the identity function.

The \emph{sign} of $r' \in \Zd'$ is $0$ if $0 \leq r' < d$ and $1$ if $d \le r' < d'$; \ie $\sgn{r'}=\tfrac{1}{d}(r'-\overline{~\underline{r'}~}) \in \mathbb{Z}_{d'/d}$.

We also introduce the additive group:
\begin{align*}\label{eqn-12zd'}
    \Zd*=\{0,\tfrac12,1,\tfrac32,\cdots,\tfrac{2d'-1}2\}=(1,\tfrac12\,|\,\underbrace{1+\cdots+1}_d=0\textrm{ and }\tfrac12+\tfrac12=1)
\end{align*}
In other words, $\Zd*$ is obtained from $\Zd$ by adding a new element $\tfrac12$ (and closing under addition) if and only if $2$ does not already have a multiplicative inverse, \ie $d$ is even. So if $d$ is odd, then $\Zd=\Zd'=\Zd*$. In either case, there is a group isomorphism $\tfrac12:\Zd'\to\Zd*$, which allows for uniform treatment of the even and odd cases. To do this, we interpret half-element exponents $t\in\Zd*$ of $\zeta$ using the square root $\tau$, that is, $\zeta^t=\tau^{2t}$, where $2:\Zd* \to \Zd'$ is the inverse of $\tfrac12$. 

\subsection{Encodings of the Qudit Pauli Group}

The single-qubit Pauli operators generalize to single-qudit Paulis in the following way, where blank spaces in the matrix indicate the value $0$:
    \begin{align*}
            X = \begin{psmallmatrix}
                    &&&1\\
                    1\\
                    &\cdots\\
                    &&1
                \end{psmallmatrix}
            && Z = \begin{psmallmatrix}
                    1\\
                    &\zeta\\
                    &&\cdots\\
                    &&&\zeta^{d-1}
                \end{psmallmatrix}
            && Y = \tau X Z
    \end{align*}
The $n$-qudit Pauli group $\Pauligroup$ is generated by $\langle X_i, Y_i, Z_i \rangle$, where $P_i = I_{2^{i-1}} \otimes P \otimes I_{2^{n-i}}$. 

Given $v=[[x_1,z_1],\ldots,[x_n,z_n]] \in (\Zd' \oplus \Zd')^{n} \cong \Zd'^{2n}$, we define $\Delta_v=\bigotimes_i \tau^{x_i z_i} X^{x_i} Z^{z_i}$. Observe that the product in the exponent of $\tau$ is that of $\Zd'$ and not $\Zd$.

\begin{example}
\begin{align*}
    \Delta_{[1,0]} &= \tau^0 X^1 Z^0 = X 
    &
    \Delta_{[0,1]} &= \tau^0 X^0 Z^1 = Z
    &
    \Delta_{[1,1]} &= \tau^{1} X^1 Z^1 = Y
\end{align*}
\end{example}

\begin{proposition}
    Every element of the $n$-qudit Pauli group $\Pauligroup$ can be expressed in the form $\zeta^r \Delta_v$ where $r \in \Zd*$ and $v \in \Zd'^{2n}$.
\end{proposition}

For $v\in \Zd'^{2n}$, it is often useful to convert between $\Delta_{v}$ and $\Delta_{\underline{\overline{v}}}$ as follows:
\begin{align}
    \Delta_{v} = (-1)^{\sgn{v}} \Delta_{\underline{\overline{v}}}
  \qquad \qquad
  \text{where}~
    \sgn{v}=\sgn{\omega'(v,\overline{\underline v})}=\tfrac{1}{d} \omega'(v, \underline{\overline{v}})
\end{align}
and where $\omega' : \Zd'^{2n}\otimes\Zd'^{2n}\to\Zd'$ is the \textit{extended symplectic form} 
$\omega'([x_1,z_1],[x_2,z_2])=z_1\cdot x_2 - z_2\cdot x_1$.
That is, the extended symplectic form $\omega'$ is computed using arithmetic mod $d'$, whereas the standard symplectic form $\omega : \Zd^{2n} \otimes \Zd^{2n} \to \Zd$ uses arithmetic mod $d$.

\subsection{Encodings of Qudit Projective Cliffords}
\label{sec:encodings}

Similar to qubits, the qudit Clifford operators take the qudit Pauli group to itself under conjugation.
\begin{align*}
        \Cliffordgroup \triangleq \{ U \in \mathcal{U}_{d^n} \mid \forall P \in \Pauligroup.~ U P U^\dagger \in \Pauligroup \}
\end{align*}
Two Cliffords $U_1,U_2\in\Cliffordgroup$ are \emph{projectively equivalent} if they have the same conjugation action for all Paulis: $U_1 P U_1^\dagger = U_2 P U_2^\dagger$. This is the case exactly when $U_2=e^{i \theta} U_1$, meaning that $U_1$ and $U_2$ are quantum-mechanically indistinguishable. The equivalence classes $[U]$ of $U\in\Cliffordgroup$ form a group $\Pcliffordgroup$ with composition $[U_2][U_1]=[U_2U_1]$.

In order to program projective Cliffords, we draw on a common practice of \emph{encoding} each class $[U]$ as a pair of functions that act on the vectors $v$ in the representation of $\Delta_v$.

\begin{definition}[\citep{winnick2024condensed}]
    The \emph{condensed encoding} of a projective Clifford is a pair of functions $(\mu,\psi)$ where $\psi: \Zd^{2n} \rightarrow \Zd^{2n}$ is a symplectomorphism (respects the symplectic form on $\Zd^{2n}$), and $\mu : \Zd^{2n} \rightarrow \Zd$ is a linear transformation. This encoding corresponds to a projective Clifford $[U]\in\Pcliffordgroup$ defined by its conjugation action on $\Delta_{\underline b}$ for every standard basis vector $b\in\Zd^{2n}$ as follows:
    \begin{align*}
        U\Delta_{\underline b}U^\dag = \zeta^{\mu(b)}\Delta_{\underline{\psi(b)}}
    \end{align*}
\end{definition}

Let us write $V^*$ for the group of homomorphisms $\mu:\Zd^{2n}\to\Zd$ and $\spgroup(\Zd^{2n})$ for the group of symplectomorphisms $\psi:\Zd^{2n}\to\Zd^{2n}$ \ie for all $v_1,v_2 \in \Zd^{2n}$, $\omega(\psi(v_1),\psi(v_2))=\omega(v_1,v_2)$.

\begin{theorem}[\citep{winnick2024condensed}]
    The assignment of the condensed encoding $(\mu,\psi)$ to each projective Clifford $[U]$ is a bijection between the underlying sets of $\Pcliffordgroup$ and $V^*\times\spgroup(\mathbb Z_d^{2n})$.
\end{theorem}

The condensed encoding is not the only choice of encoding, but it has several advantages. First,  both the condition that $\mu$ is a linear transformation and the condition that $\psi$ is a symplectomorphism can be enforced by the type system, leading to a robust Curry-Howard correspondence. Second, the condensed encoding is applicable for all dimensions of $d$ in a uniform way, not just odd or prime instances, and only involves $\Zd$, not $\Zd'$. As a result, the programmer does not have to worry about the irregularity between $\Zd$ and $\Zd'$ in the even case.



With this motivation in mind, we now describe the relevant structure of these encodings. For proofs, the reader is referred to \citep{winnick2024condensed}.


To \emph{evaluate} a condensed encoding $(\mu,\psi)$ on an arbitrary Pauli $\zeta^t \Delta_v$ for $v \in \Zd'^{2n}$, we use the following evaluation formula:
\begin{align}\label{eqn-evaluation-formula}
    U\Delta_vU^\dag = \zeta^{\mu(\overline v)+\frac d2K^\psi(v)}\Delta_{\underline{\psi(\overline v)}}=(-1)^{K^\psi(v)}\zeta^{\mu(\overline v)}\Delta_{\underline{\psi(\overline v)}}
\end{align}
The correction term $K^\psi:\Zd'^{2n}\to\mathbb Z_{d'/d}$ in \cref{eqn-evaluation-formula} is given by 
\begin{align}\label{eqn-K}
    K^\psi(v) 
    &= \frac1d \sum_{i=1}^n \left(z_i x_i
            + x_i z_i \omega'(\underline{\psi b_i^x},\underline{\psi b_i^z})
            + x_i \omega'(\underline{\psi b_i^x}, \underline{\psi(\overline{v})})
            + z_i \omega'(\underline{\psi b_i^z}, \underline{\psi(\overline v)})
            \right)
\end{align}
where $v=([x_1,z_1],\ldots [x_n,z_n]) \in \Zd'^{2n}$ and where $b_i^x \in \Zd'^n$ (respectively $b_i^z$) is the standard basis vector that is $[1,0]$ (respectively $[0,1]$) at index $i$ and $0$ elsewhere.
Note that $K^\psi$ is \emph{not} linear on its input, and must be calculated independently for each $v$.
When $v \in \Zd^{2n}$, we write $\kappa^\psi(v)$ for $K^\psi(\underline v)$.

As can be inferred from their types, the arithmetic operations in \cref{eqn-K} are operations on $\Zd'$. Note that the parenthesized part is either $0$ or $d\in\Zd'$, so the result is well-defined in $\mathbb{Z}_{d'/d}$.

The proofs of the following lemmas follow from \cref{eqn-K}.
\begin{lemma} \label{lem:kappa-basis-0}
    On standard basis vectors $b_i^x$ and $b_i^z$, the function $\kappa^\psi$ is $0$.
\end{lemma}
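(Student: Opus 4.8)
The plan is to substitute the two families of basis vectors directly into the closed form \cref{eqn-K} for $K^\psi$, observe that nearly every summand vanishes because one half of the input vector is zero, and kill the single surviving summand using the fact that $\omega'$ is alternating.

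First I would recall that $\kappa^\psi(v) = K^\psi(\underline v)$ and that a standard basis vector has all entries in $\{0,1\}$, so $\overline{\underline{[b_i,0]}} = [b_i,0]$ and $\overline{\underline{[0,b_i]}} = [0,b_i]$; hence $\psi(\overline v) = \psi[b_i,0]$ when $v = \underline{[b_i,0]}$ and $\psi(\overline v) = \psi[0,b_i]$ when $v = \underline{[0,b_i]}$. I would also record that $\omega'([x,z],[x,z]) = z\cdot x - z\cdot x = 0$ for every vector.

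Then the two cases are immediate. For $v = \underline{[b_i,0]}$, in the notation of \cref{eqn-K} we have $x_j = \delta_{ij}$ and $z_j = 0$ for all $j$, so every summand carrying a factor $z_j$ drops out, and of the summands $x_j\,\omega'(\underline{\psi[b_j,0]},\underline{\psi(\overline v)})$ only the $j=i$ one survives, equal to $\omega'(\underline{\psi[b_i,0]},\underline{\psi[b_i,0]}) = 0$; thus $K^\psi(\underline{[b_i,0]}) = \tfrac1d\cdot 0 = 0$. For $v = \underline{[0,b_i]}$ we have $x_j = 0$ for all $j$ and $z_j = \delta_{ij}$, so every summand carrying a factor $x_j$ vanishes, leaving only $z_i\,\omega'(\underline{\psi[0,b_i]},\underline{\psi(\overline v)}) = \omega'(\underline{\psi[0,b_i]},\underline{\psi[0,b_i]}) = 0$.

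I do not expect any real obstacle here; the only points needing a moment's attention are verifying $\overline{\underline{b_i}} = b_i$ (so the $x_i$- and $z_i$-weighted terms genuinely collapse to $\omega'(w,w)$ rather than $\omega'(w,w')$ for distinct $w,w'$) and recalling that $\omega'$, like $\omega$, is alternating. A slicker but less self-contained alternative is to read the claim off \cref{eqn-evaluation-formula} directly: $\Delta_{\underline{b_i}}$ is literally $X_i$ or $Z_i$, whose conjugate $U\Delta_{\underline{b_i}}U^\dagger = \zeta^{\mu(b_i)}\Delta_{\underline{\psi(b_i)}}$ carries no sign correction by the definition of the condensed encoding, so comparison with the $(-1)^{K^\psi(\underline{b_i})}$ prefactor in \cref{eqn-evaluation-formula}, together with $\kappa^\psi$ taking values in $\mathbb Z_{d'/d}$, forces $\kappa^\psi(b_i) = 0$.
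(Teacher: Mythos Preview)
Your proposal is correct and takes essentially the same approach as the paper: the paper's (commented-out) detailed argument substitutes the basis vector into \cref{eqn-K}, observes that all terms with a $z_j$ (respectively $x_j$) factor vanish, and then kills the lone survivor using $\omega'(\underline{\psi(v)},\underline{\psi(v)})=0$. Your alternative via \cref{eqn-evaluation-formula} is in fact what the paper cites as the source of the proof in the main text, so you have covered both routes.
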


\begin{lemma} \label{lem:kappa-R'-symplectomorphism}
    Suppose $\psi\in\spgroup(\Zd^{2n})$ has the property that for all $v_1,v_2\in\Zd^{2n}$,
    $\omega'(\underline{\psi(v_1)}, \underline{\psi(v_2)}) = \omega'(\underline{v_1},\underline{v_2})$.
    Then $\kappa^{\psi}(v) = 0$ for all $v\in\Zd^{2n}$.
\end{lemma}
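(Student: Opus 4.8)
The plan is to compute $\kappa^\psi(v) = K^\psi(\underline v)$ directly from the second line of \cref{eqn-K}, using the hypothesis to collapse each of the three occurrences of $\omega'$ applied to $\psi$-images down to $\omega'$ applied to the original vectors, which can then be read off from the definition of $\omega'$. (When $d$ is odd there is nothing to prove, since $\mathbb Z_{d'/d}$ is trivial; the content is the even case.) Write $v = [x,z] = [x_1\ldots x_n, z_1\ldots z_n] \in \Zd^{2n}$ and substitute $\underline v$ for the input of \cref{eqn-K}. Then $\overline{\underline v} = v$, each coordinate $x_i,z_i$ enters the formula as its inclusion $\underline{x_i},\underline{z_i}\in\Zd'$, and $\underline{\psi(\overline{\underline v})} = \underline{\psi(v)}$, so the $i$-th term of the sum in \cref{eqn-K} is
\[
\underline{z_i}\,\underline{x_i}
 + \underline{x_i}\,\underline{z_i}\,\omega'(\underline{\psi[b_i,0]},\underline{\psi[0,b_i]})
 + \underline{x_i}\,\omega'(\underline{\psi[b_i,0]},\underline{\psi(v)})
 + \underline{z_i}\,\omega'(\underline{\psi[0,b_i]},\underline{\psi(v)}).
\]

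Next I would apply the lemma's hypothesis to the three pairs $([b_i,0],[0,b_i])$, $([b_i,0],v)$, and $([0,b_i],v)$ of vectors of $\Zd^{2n}$, rewriting the three $\omega'$-terms as $\omega'(\underline{[b_i,0]},\underline{[0,b_i]})$, $\omega'(\underline{[b_i,0]},\underline v)$, and $\omega'(\underline{[0,b_i]},\underline v)$ respectively. From the definition $\omega'((x_1,z_1),(x_2,z_2)) = z_1\cdot x_2 - z_2\cdot x_1$, together with $b_i\cdot b_i = 1$ and $b_i\cdot\underline x = \underline{x_i}$, these equal $-1$, $-\underline{z_i}$, and $\underline{x_i}$ in $\Zd'$. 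Substituting back, the $i$-th term becomes $\underline{z_i}\,\underline{x_i} - \underline{x_i}\,\underline{z_i} - \underline{x_i}\,\underline{z_i} + \underline{z_i}\,\underline{x_i}$, which is $0$ because multiplication in $\Zd'$ is commutative; hence the whole parenthesized sum in \cref{eqn-K} is $0\in\Zd'$ and $\kappa^\psi(v) = \tfrac1d\cdot 0 = 0$.

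I do not expect a substantive obstacle: once the argument of $K^\psi$ is specialized to $\underline v$, every vector fed into $\omega'$ in \cref{eqn-K} is the $\psi$-image of a vector of $\Zd^{2n}$, so the hypothesis applies verbatim, and the remaining $\omega'$ values follow from its definition with no further appeal to $\psi$. The one point requiring care is tracking the distinction between $\Zd$ and $\Zd'$ — in particular that we use the $\omega'$ (mod $d'$) form of the hypothesis, not merely the $\omega$-symplectomorphism property $\psi\in\Symplecticgroup$; this is exactly what forces the four summands to cancel in pairs. The argument is the term-by-term analogue of the proof of \cref{lem:kappa-basis-0}, except that here every term vanishes for an arbitrary $v$, rather than all but one term vanishing at a standard basis vector.
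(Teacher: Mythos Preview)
Your proposal is correct and follows essentially the same route as the paper's own proof: unfold the second line of \cref{eqn-K} at $\underline v$, use the hypothesis to strip $\psi$ from each $\omega'$-argument, evaluate $\omega'$ on the basis vectors and $\underline v$ to obtain $-1$, $-\underline{z_i}$, $\underline{x_i}$, and observe the four summands cancel in $\Zd'$. Your remark that the substantive content lies in the even case and that the full $\omega'$-preservation (not just $\omega$-symplecticity) is what is used is also exactly the point.
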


\subsubsection{Composition and Inverses}

Let $(\mu_1,\psi_1)$ and $(\mu_2,\psi_2)$ be the condensed encodings of $[U_1]$ and $[U_2]$ respectively. The condensed encoding of $[U_2U_1]$ is $(\mu_3,\psi_2 \circ \psi_1)$, where $\mu_3$ is a linear map acting on each standard basis vector $b\in\Zd^{2n}$ by:
\begin{align}\label{eqn-composition-formula}
    \mu_3(b) = \mu_1(b) + \mu_2(\psi_1(b)) + \tfrac d2K^{\psi_2}(\underline{\psi_1(b)})
\end{align}
We caution that \cref{eqn-composition-formula} is only valid when applied to standard basis vectors, not arbitrary $v\in\Zd^{2n}$. From the composition formula one may work out that the identity element is $(0,\mathrm{id}_V)$, and that the inverse of $(\mu,\psi)$ is $(\mu_{\mathrm{inv}},\psi^{-1})$, where for each standard basis vector $b\in\Zd^{2n}$:
\begin{align}\label{eqn-inversion-formula}
    \mu_{\mathrm{inv}}(b) &= -\mu(\psi^{-1}(b))+\frac d2K^\psi(\underline{\psi^{-1}(b)}) \\
    \psi^{-1}(v) &= [[\omega(\psi(t_{n+1}),v),\omega(v,\psi(t_1))],\ldots,[\omega(\psi(t_{2n}),v),\omega(v,\psi(t_n))] \notag
\end{align}

\subsubsection{$\Qdn$ and the Condensed Product} \label{sec:background:qdn}

We know that elements of the Pauli group $\Pauligroup$ can be written uniquely as $\tau^{t'} \Delta_{\underline{v}}$ where $t' \in \Zd'$ and $v \in \Zd^{2n}$. However, both $\mu$ and $\psi$ are defined solely in terms of $\Zd$ rather than $\Zd'$. It is possible to avoid $\Zd'$ entirely by considering the following set:
\begin{align*}
    \Qdn = \{\zeta^t\Delta_{\underline v} \mid t \in \Zd ~\text{and}~ v\in\Zd^{2n}\}
\end{align*}
$\Qdn$ is a subset of $\Pauligroup$ with $d^{2n+1}$ elements. Note that since $\Qdn$ contains $X$, $Y$, and $Z$ but not $iY = \tau X Z$, it is not closed under matrix multiplication in the even case. 
Luckily, there is another operation $\cprod$ on $\Qdn$ that can be used as a replacement, the \emph{condensed product}:
\begin{align} \label{eqn:condensed-product-defn}
    \zeta^{r} \Delta_{\underline{u}} \cprod \zeta^{s} \Delta_{\underline{v}} 
    &\triangleq \zeta^{r+s} \tau^{-\underline{\omega(u,v)}} \Delta_{\underline{u}} \Delta_{\underline{v}}
    = \zeta^{r+s} (-1)^{\sgn{\omega'(\underline u,\underline v)}}\Delta_{\underline{u}+\underline{v}}
    \\
    &= \zeta^{r+s} (-1)^{\sgn{\omega'(\underline u,\underline v)} + \sgn{\underline{u}+\underline{v}}}\Delta_{\underline{u + v}}
        \notag
\end{align}

Lastly, we note that projective Clifford operations distribute over $\cprod$:
\begin{align} \label{lem:gamma-distributes-cprod}
    U(\Delta_u\cprod\Delta_v)U^\dag = U\Delta_uU^\dag\cprod U\Delta_vU^\dag
\end{align}

Using $\cprod$ together with the condensed encodings, we can build our type system and categorical semantics without needing to use the extended phase space $\Zd'^{2n}$.

\section{A Calculus for \(\Zd\)-Modules}
\label{sec:lambdaC}
Having established encodings $(\mu,\psi)$ for projective Cliffords, we can now design a type system for these encodings. This type system needs to check two properties: that both functions are linear in their input, and that $\psi$ respects the symplectic form \ie is a symplectomorphism. In this section we start with a type system for expressing linear maps, which we call $\lambdaC$. \cref{sec:lambdaPC} will extend $\lambdaC$ to encompass symplectomorphisms.

In $\lambdaC$, types correspond to $\Zd$-modules and expressions to $\Zd$-linear maps.\footnote{An $R$-module is just a generalization of a vector space where scalars are drawn from a ring $R$ rather than a field.} The operational semantics must be linear in the sense that function types respect addition and scalar multiplication of vectors. To achieve this we will use a type system based on linear logic but where the additive product $\&$ and additive sum $\oplus$ are combined into a single operation, which we denote by $\oplus$.

The types of $\lambdaC$ correspond to $\Zd$-modules built up inductively from the base type $\R$, the direct sum $\oplus$, and the space of linear transformations $\lolli$.
\begin{align*}
    \alpha &::= \R \mid \alpha_1 \oplus \alpha_2 \mid \alpha_1 \lolli \alpha_2
        \tag{$\CCat$-types}
\end{align*}

The syntax of linear $\CCat$-expressions is given by the following grammar:
\begin{align*}
    a &::= x \mid \letin{x}{a}{a'} 
        \tag{$\CCat$-expressions}  \\
        &\mid \zero_\alpha \mid r \mid a \cdot a' \mid a_1 + a_2 \\
        &\mid [a_1,a_2] \mid \caseof{a}{\inl{x_1} \rightarrow a_1 \mid \inr{x_2} \rightarrow a_2} \\
        &\mid \lambda x.a \mid a a' \\
    r &\in \Zd 
        \tag{constants}
\end{align*}
Here, $\zero_\alpha$ is the zero vector, $(\cdot)$ represents scalar multiplication, and $+$ represents vector addition.

The typing judgment (\cref{fig:typing-rules-C-simple}) has the form $\Delta \vdashL a : \alpha$, where $\Delta ::= \cdot \mid \Delta,x:\alpha$ is a map from linear variables to $\CCat$-types.
We write $\Delta_1, \Delta_2$ to mean the disjoint union of $\Delta_1$ and $\Delta_2$, under the condition that $\dom(\Delta_1) \cap \dom(\Delta_2)=\emptyset$. 

Because $\oplus$ is a biproduct in $\CCat$---both a product and a coproduct---its introduction rule in $\lambdaC$ mirrors the additive product rule from linear logic ($\&$), while its elimination rule mirrors the additive sum rule ($\oplus$). Intuitively, case analysis of a vector $a$ of type $\alpha_1 \oplus \alpha_2$ decomposes $a$ into the linear span of $[x_1,\zero]$ (written $\inl{x_1}$) and $[\zero,x_2]$ (written $\inr{x_2}$).

\begin{figure}
    \centering
    \[ \begin{array}{c}
        \inferrule*[right=$\CCat$-var]
            {\Delta = x:\alpha}
            {\Delta \vdashL x : \alpha}
        \qquad \qquad
        \inferrule*[right=$\CCat$-let]
            {\Delta \vdashL a : \alpha \\
             \Delta',x:\alpha \vdashL a' : \alpha'
            }
            {\Delta,\Delta' \vdashL \letin{x}{a}{a'} : \alpha'}
        \\ \\
        \inferrule*[right=$\CCat$-const]
            {r \in \Zd}
            {\cdot \vdashL r : \R}
        \qquad \qquad
        \inferrule*[right=$\CCat$-$\cdot$]
            {\Delta \vdashL a : \R \\
             \Delta' \vdashL a' : \alpha
            }
            {\Delta,\Delta' \vdashL a \cdot a' : \alpha}
        \qquad \qquad
        \inferrule*[right=$\CCat$-$\zero$]
            {~}
            {\Delta \vdashL \zero : \alpha}
        \\ \\
        \inferrule*[right=$\CCat$-$+$]
            { \Delta \vdashL a_1 : \alpha \\
              \Delta \vdashL a_2 : \alpha
            }
            { \Delta \vdashL a_1 + a_2 : \alpha }
        \qquad \qquad
        \inferrule*[right=$\CCat$-$\oplus$]
            { \Delta \vdashL a_1 : \alpha_1 \\ \Delta \vdashL a_2 : \alpha_2 }
            { \Delta \vdashL [a_1,a_2] : \alpha_1 \oplus \alpha_2 }
        \\ \\
        \inferrule*[right=$\CCat$-case]
            { \Delta \vdashL a : \alpha_1 \oplus \alpha_2 \\
              \Delta',x_1:\alpha_1 \vdashL a_1 : \alpha' \\
              \Delta',x_2:\alpha_2 \vdashL a_2 : \alpha'
            }
            { \Delta, \Delta' \vdashL \caseof{a}{\inl{x_1} \rightarrow a_1 \mid \inr{x_2} \rightarrow a_2} : \alpha' }
        \\ \\
        \inferrule*[right=$\CCat$-$\lambda$]
            { \Delta,x:\alpha \vdashL a : \alpha'}
            { \Delta \vdashL \lambda x.a : \alpha \lolli \alpha' }
        \qquad \qquad
        \inferrule*[right=$\CCat$-app]
            { \Delta_1 \vdashL a_1 : \alpha \lolli \alpha' \\
              \Delta_2 \vdashL a_2 : \alpha
            }
            { \Delta_1, \Delta_2 \vdashL a_1 a_2 : \alpha' }
    \end{array}\]
    \caption{Typing rules for $\CCat$-expressions.}
    \Description{A set of ten typing rules for $\CCat$-expressions, one for each type of expression.}
    \label{fig:typing-rules-C-simple}
\end{figure}

Notice that while non-zero constants in $\Zd$ must be typed under the empty context, the $\zero$ vector can be typed under an arbitrary context.

The small-step operational semantics, shown in \cref{fig:operational-semantics-C-simple}, has the form $a \rightarrow a'$ where $\cdot \vdashL a : \alpha$.
We write $a \rightarrow^\ast a'$ for the reflexive transitive closure of
$\rightarrow$.
In the next section we will show that the normal forms of the step relation $\rightarrow$ are the following values:
\begin{align*}
    v &::= r \mid [v_1,v_2] \mid \lambda x.a
        \tag{values}
\end{align*}

\begin{figure}
    \centering
    {\small
    \begin{align*}
      \begin{aligned}
        \letin{x}{v}{a'} 
            &\rightarrow_{\beta}
                a'\{v/x\}
            \\
        \caseof{[v_1,v_2]}{\inl{x_1} \rightarrow a_1 \mid \inr{x_2} \rightarrow a_2}
            &\rightarrow_\beta
            a_1\{v_1/x_1\} + a_2\{v_2/x_2\}
        \\
        (\lambda x.a) v &\rightarrow_\beta a\{v/x\}
       \end{aligned}
       \qquad
       \begin{aligned}
            \zero_{\R}
                &\rightarrow_\beta
                0
                \\
            \zero_{\alpha_1 \oplus \alpha_2}
                &\rightarrow_\beta
                [\zero_{\alpha_1}, \zero_{\alpha_2}]
                \\
            \zero_{\alpha \lolli \alpha'}
                &\rightarrow_\beta
                \lambda x.\zero_{\alpha'}
       \end{aligned}
    \end{align*}
    \newline
    \begin{align*}
      \begin{aligned}
        r_1 \cdot r_2
            &\rightarrow_\beta
            (r_1r_2) \in \Zd
        \\
        r \cdot [v_1, v_2]
            &\rightarrow_\beta
            [r \cdot v_1, r \cdot v_2]
        \\
        r \cdot \lambda x.a
            &\rightarrow_\beta
            \lambda x.r \cdot a
      \end{aligned}
      \qquad \qquad
      \begin{aligned}
        r_1 + r_2
            &\rightarrow_\beta
            (r_1 + r_2) \in \Zd
        \\
        [v_1,v_2] + [v_1',v_2']
            &\rightarrow_\beta
            [v_1+v_1', v_2+v_2']
        \\
        (\lambda x_1.a_1) + (\lambda x_2.a_2)
            &\rightarrow_\beta
            \lambda x. a_1\{x/x_1\} + a_2\{x/x_2\}
      \end{aligned}
    \end{align*}
    \normalsize}
    \caption{$\beta$-reduction rules for $\CCat$-expressions. The full call-by-value small-step operational semantics rules can be found in the supplementary material (\cref{app:lambdaC-reduction-rules}).}
    \Description{A set of twelve beta reduction rules for \(\CCat\)-expressions. The rules are grouped into four sets of three. The first set of rules are tradition beta reduction rules for let statements, case statements, and function application. The second set of rules describes how to reduce expressions of the form 0 for each type alpha. The third set of rules describes how to reduce expressions of the form \(r \cdot v\). The fourth set of rules describes how to reduce expressions of the form \(v1 + v2\).}
    \label{fig:operational-semantics-C-simple}
\end{figure}

\subsection{Type Safety and Other Meta-theory}

\begin{theorem}[Progress] \label{thm:progress}
    If $\cdot \vdashL a : \alpha$ then either $a$ is a value, or there is some $a'$ such that $a \rightarrow a'$.
\end{theorem}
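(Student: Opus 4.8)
The plan is to prove Progress by induction on the typing derivation $\cdot \vdashL a : \alpha$, with the help of an auxiliary \emph{canonical forms} lemma describing the shape of closed values at each type: if $\cdot \vdashL v : \R$ then $v = r$ for some $r \in \Zd$; if $\cdot \vdashL v : \alpha_1 \oplus \alpha_2$ then $v = [v_1,v_2]$ with $\cdot \vdashL v_i : \alpha_i$; and if $\cdot \vdashL v : \alpha \lolli \alpha'$ then $v = \lambda x.a$. Each case follows by inspecting which typing rules can conclude with a term drawn from the value grammar $v ::= r \mid [v_1,v_2] \mid \lambda x.a$, which is syntax-directed and disjoint from the remaining expression forms.

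For the induction itself I would go case by case on the last typing rule. The variable rule is vacuous since the context is empty. The rules for $r$ and $\lambda x.a$ produce values outright. For $\zero_\alpha$ the term is never a value, but it always takes a step via one of the three $\zero$-reduction rules, selected by the outermost constructor of $\alpha$ (and $\R$, $\oplus$, $\lolli$ exhaust the type grammar). For a pair $[a_1,a_2]$, apply the induction hypothesis to $a_1$: if $a_1$ steps, so does the pair by the congruence rule focusing on the first component; if $a_1$ is a value, apply the IH to $a_2$ likewise; and if both are values, $[v_1,v_2]$ is itself a value. The elimination and operation forms $\letin{x}{a}{a'}$, $\caseof{a}{\cdots}$, $a_1 a_2$, $a \cdot a'$, and $a_1 + a_2$ all follow the same template: push the IH through the subterm(s) in evaluation position, lift any reduction via the appropriate congruence rule, and once the relevant subterms are values invoke canonical forms to see that the matching $\beta$-rule applies. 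For example, in $a_1 a_2$ with $a_1 : \alpha \lolli \alpha'$, canonical forms gives $a_1 = \lambda x.b$, so as soon as $a_2$ is a value the rule $(\lambda x.b)\,v \rightarrow_\beta b\{v/x\}$ fires; in $a_1 + a_2$ at type $\alpha$, canonical forms forces $a_1$ and $a_2$ to share the same outermost shape, so exactly one of the three $+$-rules is applicable.

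The argument is largely routine, so the only care needed is bookkeeping: one must confirm that the call-by-value evaluation contexts supplied in the appendix do cover every congruence position used above (each form needs its principal argument, and then its later arguments, to be reducible in place), and one must note that $\zero_\alpha$ is the single non-standard wrinkle---a well-typed non-value that always reduces---so that no case is left stuck; since that reduction is defined by recursion on $\alpha$ and the type grammar has exactly three constructors, this is immediate. I do not expect a genuine obstacle here; the subtleties of the calculus (linearity of contexts, the biproduct character of $\oplus$) bear on Preservation and on the denotational soundness theorems rather than on Progress.
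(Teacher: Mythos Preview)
Your proposal is correct and follows the same approach as the paper: induction on the typing derivation, using canonical forms of closed values to fire the appropriate $\beta$-rule once the principal subterm(s) are values. The paper's own proof is much terser---it states ``by induction on the typing judgment,'' illustrates only the $\caseof{a'}{\cdots}$ case, and declares the remaining cases similar---so your write-up is in fact more explicit about the canonical forms lemma and the $\zero_\alpha$ case than the paper is.
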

\begin{proof}
    By induction on the typing judgment.

    To demonstrate, consider the case when $a = \caseof{a'}{\inl{x_1} \rightarrow a_1 \mid \inr{x_2} \rightarrow a_2}$.
    If $a'$ can take a step, so can $a$. If not, $a'$ is a value of type $\alpha_1 \oplus \alpha_2$, in which case $a$ can take a step via $\beta$-reduction.

    The remainder of the cases are similar.
\end{proof}

We can also prove that substitution and $\beta$-reduction preserve the typing relation.

\begin{lemma}[Substitution]
    If $\cdot \vdashL v : \alpha$ and $\Delta,x:\alpha \vdashL a : \alpha'$ then $\Delta \vdashL a\{v/x\} : \alpha'$.
\end{lemma}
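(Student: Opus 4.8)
The plan is to prove this by structural induction on the derivation of $\Delta,x:\alpha \vdashL a : \alpha'$, exploiting the fact that $v$ is closed (it is typed under the empty context) so that the substitution $a\{v/x\}$ never introduces new free variables. First I would record a routine auxiliary fact, proved by its own induction on typing derivations: if $\Gamma \vdashL a_0 : \beta$ then $\FV(a_0) \subseteq \dom(\Gamma)$. This lets me treat $a_0\{v/x\}$ as $a_0$ whenever $x$ does not occur in the context assigned to $a_0$, which is exactly what is needed to handle the context-splitting rules. I would also adopt the usual Barendregt convention, choosing the bound variables introduced by $\letin{x'}{a_1}{a_2}$, $\lambda x'.a_0$, and the case rule to be distinct from $x$ and from $\FV(v)$ so that no capture occurs; exchange on contexts is admissible, since contexts are finite maps rather than lists.

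The base cases are immediate. The variable rule forces $\Delta = \cdot$, $a = x$, and $\alpha' = \alpha$, so $a\{v/x\} = v$ and the goal $\cdot \vdashL v : \alpha$ is the hypothesis. The constant rule $\cdot \vdashL r : \R$ cannot have produced a judgment with the nonempty context $\Delta,x:\alpha$, so that case is vacuous. For $\zero_{\alpha'}$ we have $\zero_{\alpha'}\{v/x\} = \zero_{\alpha'}$, which is typable under the arbitrary context $\Delta$ by the same rule; here $v$ is simply discarded, which is sound precisely because $v$ is closed. For the inductive cases I would split the rules into two groups. In the additive-style rules --- vector addition $a_1+a_2$, pairing $[a_1,a_2]$, and the two branches $a_1,a_2$ of a case expression --- the premises share the context $\Delta,x:\alpha$, so I substitute into each shared subterm, apply the induction hypothesis to each, and reassemble with the same rule over context $\Delta$. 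In the multiplicative-style rules --- $\letin{x'}{a_1}{a_2}$, scalar multiplication, application $a_1\,a_2$, and the split of scrutinee against branches in the case rule --- the context $\Delta,x:\alpha$ is a disjoint union, so $x$ lies in exactly one part; I apply the induction hypothesis to that subterm (after an exchange to place $x:\alpha$ at the end of its context), leave the other subterm untouched via the auxiliary fact, and recombine. The $\lambda$ rule follows by applying the induction hypothesis under the extended context $\Delta,x':\beta$ together with $x:\alpha$ and then reapplying the $\lambda$ rule.

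I do not anticipate a genuine obstacle: since the substituted value is closed, the content of the proof is purely the bookkeeping of which side of each multiplicative split carries $x$, together with the free-variables-in-context fact that justifies discarding the substitution on the other side. The one minor wrinkle is that the $\zero$ rule is not linear in the usual one-occurrence sense, but discarding a closed term is harmless, so it presents no real difficulty.
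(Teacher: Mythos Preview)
Your proposal is correct and takes essentially the same approach as the paper, which simply states ``By induction on $\Delta,x:\alpha \vdashL a : \alpha'$.'' You have spelled out the routine details that the paper omits.
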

\begin{proof}
    By induction on $\Delta,x:\alpha \vdashL a : \alpha'$.
\end{proof}

\begin{theorem}[Preservation] \label{thm:preservation}
    If $\cdot \vdashL a : \alpha$ and $a \rightarrow a'$ then $\cdot \vdashL a' : \alpha$.
\end{theorem}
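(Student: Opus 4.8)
The plan rests on the shape of the reduction relation. Since $\rightarrow$ is the closure of the $\beta$-reductions of \cref{fig:operational-semantics-C-simple} under call-by-value evaluation contexts (given in full in the supplementary material), every step decomposes as $A[a_0] \rightarrow A[a_0']$ with $a_0 \rightarrow_\beta a_0'$, so it suffices to prove (a) that root reductions preserve typing---if $\cdot \vdashL a_0 : \beta$ and $a_0 \rightarrow_\beta a_0'$ then $\cdot \vdashL a_0' : \beta$---and (b) a \emph{replacement} lemma: if $\cdot \vdashL A[a_0] : \alpha$ then $\cdot \vdashL a_0 : \beta$ for some $\beta$, and $\cdot \vdashL A[a_0'] : \alpha$ for any $a_0'$ with $\cdot \vdashL a_0' : \beta$. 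Before that I would record two standard auxiliary facts. First, a \emph{canonical forms} lemma, obtained by inversion on the typing rules: a closed value of type $\R$ is a constant $r \in \Zd$, one of type $\alpha_1 \oplus \alpha_2$ is a pair $[v_1,v_2]$ with $\cdot \vdashL v_i : \alpha_i$, and one of type $\alpha \lolli \alpha'$ is an abstraction $\lambda x.a$ with $x:\alpha \vdashL a : \alpha'$. Second, the Substitution lemma already established, together with the routine fact that substitution is stable under renaming of bound variables. Because the hypothesis is $\cdot \vdashL a : \alpha$, every subterm that arises during the argument is typed in the empty context, which is exactly what makes these lemmas apply without side conditions.

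For part (a) I would case on which $\beta$-rule of \cref{fig:operational-semantics-C-simple} fires. The three ``classical'' cases---$\letin{x}{v}{a'} \rightarrow_\beta a'\{v/x\}$, $(\lambda x.a)\,v \rightarrow_\beta a\{v/x\}$, and the elimination $\caseof{[v_1,v_2]}{\inl{x_1}\rightarrow a_1 \mid \inr{x_2}\rightarrow a_2} \rightarrow_\beta a_1\{v_1/x_1\}+a_2\{v_2/x_2\}$---all follow by inverting the typing derivation to expose the sub-derivation under the extended context and then applying Substitution. The mildly non-standard one is the case reduction, whose contractum is an addition: inversion yields $\Delta',x_i:\alpha_i \vdashL a_i : \alpha'$ and $\Delta \vdashL [v_1,v_2] : \alpha_1 \oplus \alpha_2$ with $\Delta=\Delta'=\cdot$, canonical forms gives $\cdot \vdashL v_i : \alpha_i$, Substitution gives $\cdot \vdashL a_i\{v_i/x_i\} : \alpha'$ for each $i$, and since both substituted branches are now typed under the \emph{same} context at the \emph{same} type, the addition rule reassembles them. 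The three $\zero$-unfoldings re-typecheck immediately using that $\zero_\beta$ is typable at $\beta$ in any context, and the scalar-multiplication and addition reductions ($r_1\cdot r_2$, $r\cdot[v_1,v_2]$, $r\cdot\lambda x.a$, $r_1+r_2$, $[v_1,v_2]+[v_1',v_2']$, $(\lambda x_1.a_1)+(\lambda x_2.a_2)$) follow by inversion and re-application of the scalar, addition, pairing, and abstraction rules; for $r\cdot\lambda x.a \rightarrow_\beta \lambda x.r\cdot a$ and the analogous $\lambda$-addition one pushes the operation under the binder using the body sub-derivation $x:\alpha \vdashL a : \alpha'$ and a bound-variable renaming.

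Part (b) is a straightforward induction on the evaluation context $A$: since no evaluation context binds a variable around its hole, the hole is always filled under the empty context, so the common type $\beta$ is well defined and replacing $a_0$ with any $a_0'$ of type $\beta$ leaves the surrounding typing derivation unchanged. Combining (a) and (b) gives the theorem. The proof is essentially routine; the step demanding the most care is the context bookkeeping in the non-standard rules---in particular checking, in the case reduction, that the two substituted branches land in identical contexts so that the additive rule applies, and confirming that the special status of $\zero$ (typable under arbitrary contexts rather than only the empty one) does not disturb preservation when $\zero$ is unfolded.
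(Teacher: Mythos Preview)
Your proposal is correct and takes essentially the same approach as the paper, which simply says ``by induction on $a \rightarrow a'$''; you have just spelled out what that induction actually entails, splitting it into root $\beta$-reductions plus a replacement lemma for evaluation contexts. The only minor remark is that the paper presents the contextual rules as explicit inference rules (Fig.~\ref{fig:operational-semantics-C-contextual}) rather than via a grammar of evaluation contexts, so its induction handles the congruence cases directly rather than through your decomposition-plus-replacement formulation, but the two organizations are equivalent and your more detailed case analysis is sound.
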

\begin{proof}
    By induction on $a \rightarrow a'$.
\end{proof}


Finally, we prove that every closed well-typed expression does normalize to a unique value. The proof, which uses a logical relation, is given in the supplementary material (\cref{app:strong-normalization}).

\begin{theorem}[Strong normalization, \cref{app:strong-normalization}] \label{thm:normalization}
    If $\cdot \vdashL a : \alpha$ then there exists a unique value $v$ (up to the usual notions of $\alpha$-equivalence) such that $a \rightarrow^\ast v$.
\end{theorem}


\subsection{Equivalence relation}
\label{sec:equivalence-relation}

In this section we define equivalence of open terms $\Delta \vdashL a_1 \lrequiv a_2 : \alpha$ as a logical relation. Later we will show that this implies equality of the underlying $\Zd$-linear maps in the categorical semantics.

We start by defining relations on values ($\ValRelation{\alpha}$) and closed expressions $(\ExpRelation{\alpha})$.
\begin{align*}
    \ValRelation{\R} &\triangleq \{(r,r) \mid r \in \Zd\} \\
    \ValRelation{\alpha_1 \oplus \alpha_2} &\triangleq \{([v_1,v_2],[v_1',v_2']) \mid
        (v_1,v_1') \in \ValRelation{\alpha_1} \wedge (v_2,v_2') \in \ValRelation{\alpha_2} \} \\
    \ValRelation{\alpha \lolli \alpha'} &\triangleq
        \{ (\lambda x_1.a_1, \lambda x_2.a_2) \mid \forall (v_1,v_2) \in \ValRelation{\alpha}.~ (a_1\{v_1/x_1\}, a_2\{v_2/x_2\}) \in \ExpRelation{\alpha'} \} \\
    \ExpRelation{\alpha} &\triangleq \left\{(a_1,a_2) \mid \cdot \vdashL a_1 : \alpha ~\wedge~ \cdot \vdashL a_2 : \alpha ~\wedge~ \forall v_1 v_2.~ (a_1 \rightarrow^\ast v_1) \wedge (a_2 \rightarrow^\ast v_2) \Rightarrow (v_1,v_2) \in \ValRelation{\alpha} \right\}
\end{align*}

Let $\delta$ be a value context mapping variables $x:\alpha$ in $\Delta$ to values of type $\alpha$. We write $\delta(a)$ for the usual capture-avoiding substitution of each value $\delta(x)$ for $x$ in $a$.
We can define an equivalence relation on value contexts $\delta$ as follows:
\begin{align*}
    \ValRelation{\Delta} &\triangleq \{(\delta_1,\delta_2) \mid \forall x \in \dom(\Delta).~ (\delta_1(x), \delta_2(x)) \in \ValRelation{\Delta(x)} \}
\end{align*}

\begin{definition}[$\lrequiv$]
    Let $\Delta \vdashL a_1 : \alpha$ and $\Delta \vdashL a_2 : \alpha$. We say $a_1$ and $a_2$ are equivalent, written $\Delta \vdashL a_1 \lrequiv a_2 : \alpha$, when, for all $(\delta_1, \delta_2) \in \ValRelation{\Delta}$ we have $(\delta_1(a_1), \delta_2(a_2)) \in \ExpRelation{\alpha}$.
\end{definition}


\begin{theorem}[Fundamental property, \cref{app:lambdaC-equivalence-relation}] \label{thm:lrequiv-fundamental-property}
    If $\Delta \vdashL a : \alpha$ then $\Delta \vdashL a \lrequiv a : \alpha$.
\end{theorem}

\subsection{Categorical Model}
\label{sec:lambdaC-categorical-model}

The categorical semantics for $\lambdaC$ is defined in terms of free finitely-generated $\Zd$-modules and $\Zd$-linear maps.
Let $\CCat$ be the category of finitely generated free $\Zd$-modules with ordered bases.
That is, the objects of $\CCat$ consist of an underlying set $A$ together with an ordered basis $b^A_1,\ldots,b^A_n$ of $A$, along with addition and scalar multiplication operations that satisfy a  list of axioms, with scalars taken from the ring $\Zd$.
The morphisms in $\CCat$ are $\Zd$-linear maps.

When $A$ has the form $\Zd^n$, we write $b_1,\ldots,b_n$ for its standard basis.
The \emph{rank} of $A$ is the size of its basis,
and the canonical isomorphism $\flatten{-} : A \to \Zd^{\rank{A}}$ maps $b^A_i$ to $b_i$. 
%
\cref{app:CCat-properties} establishes that $\CCat$ is a compact closed category with respect to the tensor product $\otimes$, and has biproducts $\oplus$.


Every $\CCat$-type $\alpha$ and typing context $\Delta$ corresponds to an object in $\CCat$.
\begin{align*}
  \begin{aligned}[t]
    \interpL{\R} &\triangleq \Zd \\
    \interpL{\alpha_1 \oplus \alpha_2} &\triangleq \interpL{\alpha_1} \oplus \interpL{\alpha_2} \\
    \interpL{\alpha_1 \lolli \alpha_2} &\triangleq \interpL{\alpha_1} \lolli \interpL{\alpha_2}
  \end{aligned}
  \qquad\qquad
  \begin{aligned}[t]
    \interpL{\cdot} &\triangleq \Zd \\
    \interpL{\Delta,x:\alpha} &\triangleq \interpL{\Delta} \otimes \interpL{\alpha}
  \end{aligned}
\end{align*}


For every well-typed expression $\Delta \vdash a : \alpha$ we associate a $\Zd$-linear transformation $\interpL{a} \in \CCat(\interpL{\Delta},\interpL{\alpha})$, defined in \cref{fig:cat-semantics}, on basis elements by induction on the typing rules of \cref{fig:typing-rules-C-simple}. Note that constants $r \in \Zd$ correspond to generalized elements $\interpL{r} \in \CCat(\Zd,\Zd)$ defined by $x \mapsto rx$. Between any two $\Zd$-modules in $\CCat$ there is a zero morphism $x \mapsto \zero$ where $\zero$ is the zero element.

\begin{figure}
    \centering
\begin{align*}
    \interpL{x}(s)
        &\triangleq s
        \\
    \interpL{\letin{x}{a}{a'}}(s \otimes s')
        &\triangleq \interpL{a'}\left(s' \otimes \interpL{a}(s)\right)
    \\
    \interpL{r}(s)
        &\triangleq r s
    \\
    \interpL{a \cdot a'}(s \otimes s')
        &\triangleq \interpL{a}(s) \cdot \interpL{a'}(s')
    \\
    \interpL{\zero}(s)
        &\triangleq 0
    \\
    \interpL{a_1+a_2}(s)
        &\triangleq \interpL{a_1}(s) + \interpL{a_2}(s)
    \\
    \interpL{[a_1,a_2]}(s)
        &\triangleq \interpL{a_1}(s) \oplus \interpL{a_2}(s)
    \\
    \interpL{\caseof{a}{\inl{x_1} \rightarrow a_1 \mid \inr{x_2} \rightarrow a_2}}(s \otimes s')
        &\triangleq \interpL{a_1}(s' \otimes c_1) + \interpL{a_2}(s' \otimes c_2)
        \tag{where $\interpL{a}(s)=c_1 \oplus c_2$}
    \\
    \interpL{\lambda x:\alpha.a}(s)
        &\triangleq
        b \mapsto \left(\interpL{a}(s \otimes b)\right)
    \\
    \interpL{a_1 a_2}(s_1 \otimes s_2)
        &\triangleq \left(\interpL{a_1}(s_1)\right)\left(\interpL{a_2}(s_2)\right)
\end{align*}
    \caption{Categorical semantics of $\CCat$-expressions $\Delta \vdash a : \alpha$ as $\Zd$-linear maps 
    $\interpL{a} \in \CCat(\interpL{\Delta},\interpL{\alpha})$, up to isomorphism of $\interpL{\Delta}$.
    For example, in the rule for $\letin{x}{a}{a'}$ typed by $\Delta,\Delta'$ where $\Delta \vdashL a : \alpha$ and $\Delta',x:\alpha \vdashL a' : \alpha'$, we assume we have $s \in \interpL{\Delta}$ and $s' \in \interpL{\Delta'}$.
    }
    \Description{A recursive definition of the categorical semantics of \(\PCat\)-expressions into morphisms in \(\PCat\).}
    \label{fig:cat-semantics}
\end{figure}

For a value map $\delta$ of $\Delta$, we define a morphism $\interpL{\delta}_{\Delta} \in \CCat(\Zd,\interpL{\Delta})$ by induction on $\Delta$:
\begin{align*}
    \interpL{\delta}_{\cdot}(s) &\triangleq s
    &
    \interpL{\delta}_{\Delta',x:\alpha}(s) &\triangleq \interpL{\delta}_{\Delta'}(s) \otimes \interpL{\delta(x)}(1)
\end{align*}

\subsubsection{Soundness}
Next we will prove that if $\Delta \vdash a_1 \lrequiv a_2 : \alpha$, then $\interpL{a_1} = \interpL{a_2}$.
We sketch the proof here and give the full details in \cref{app:lambdaC-soundness}.

\begin{theorem} \label{thm:categorical-equivalence-sound}
    If $\Delta \vdash a_1 \lrequiv a_2 : \alpha$ then $\interpL{a_1}=\interpL{a_2}$.
\end{theorem}
\begin{proof}[Proof sketch]
    First we prove two key lemmas:
    \begin{enumerate}
        \item (\cref{thm:step-sound}) If $\cdot \vdash a : \alpha$ and $a \rightarrow a'$, then $\interpL{a} = \interpL{a'}$.
        \item (\cref{lem:val-relation-sound}) If $(v_1,v_2) \in \ValRelation{\alpha}$ then $\interpL{v_1}=\interpL{v_2}$.
    \end{enumerate}

    To show $\interpL{a_1}=\interpL{a_2}$, it suffices to show $\interpL{a_1}(g)=\interpL{a_2}(g)$ for all basis elements $g \in \interpL{\Delta}$. So our first step is to show that for each such $g$ there exists a value context $\up{g}$ of $\Delta$
    such that $\interpL{\up{g}}(1)=g$ (\cref{lem:contexts-complete}). 
    Next, we show that composing $\interpL{a_i}$ with $\interpL{\up{g}}$ is the same as $\interpL{\up{g}(a_i)}$ (\cref{lem:substitution-composition}), so it suffices to prove  $\interpL{\up{g}(a_1)} = \interpL{\up{g}(a_2)}$.
    
    Because $\Delta \vdash a_1 \lrequiv a_2 : \alpha$, we know that $(\up{g}(a_1),\up{g}(a_2)) \in \ExpRelation{\alpha}$, so it must be the case that $\interpL{\up{g}(a_1)} = \interpL{\up{g}(a_2)}$ (\cref{lem:environment-interp-append}).
\end{proof}

\subsubsection{Completeness}
Finally, we prove that every vector (including linear maps) in $\CCat$ can be expressed in $\lambdaC$, and that $\lrequiv$ is complete with respect to the categorical semantics.

\begin{theorem}[Completeness of linear maps] \label{thm:lambdaC-completeness}
    For any $a \in \interpL{\alpha}$, there exists a $\lambdaC$-expression $\up{a}$ such that
    $\cdot \vdashC \up{a} : \alpha$ and $\interpL{\up{a}}(1) = a$.
\end{theorem}
\begin{proof}
    In \cref{app:lambdaC-completeness}, \cref{lem:values-complete} we prove the corresponding statements regarding basis elements. This extends easily to arbitrary vectors: we can always write $a$ as a finite linear combination $s_1 b_1 + \cdots + s_m b_m$ of basis elements,
    in which case we define $\up{a}$ as $s_1 \cdot \up{b_1} + \cdots + s_m \cdot \up{b_m}$.
\end{proof}

\begin{theorem}[Completeness of $\lrequiv$] \label{thm:lambdaC-lrequiv-completeness}
    If $\Delta \vdashC a_1,a_2 : \alpha$ such that $\interpL{a_1} = \interpL{a_2}$, then $\Delta \vdashC a_1 \lrequiv a_2$.
\end{theorem}
\begin{proof}
    See \cref{app:lambdaC-completeness-2}, \cref{app:lambdaC-lrequiv-completeness}.
\end{proof}

\subsection{The Symplectic Form in \(\lambdaC\)}

Recall from \cref{sec:background} that the condensed encoding of a projective Clifford is a pair of functions $(\mu,\psi)$ where $\mu$ is a $\Zd$-linear map and $\psi$ is a symplectomorphism---a linear map respecting the symplectic form $\omega : \Zd^{2n} \otimes \Zd^{2n} \to \Zd$. We can now formally define the symplectic form in $\lambdaC$.

We start by picking out a subset of $\CCat$-types corresponding to the types for which $\omega$ is defined.
\begin{align*}
    \sigma &::= \R \oplus \R \mid \sigma_1 \oplus \sigma_2
        \tag{symplectic types}
\end{align*}
Clearly, every such \emph{symplectic type} is also a linear type $\alpha$, and each can be given a symplectic form.

\begin{lemma}
    For each $\sigma$ there exists a linear expression $\cdot \vdashL \omega_\sigma : \sigma \lolli \sigma \lolli \R$ satisfying
    \begin{align*}
        \vdashL \omega_{\R \oplus \R} [r_x, r_z] [r_x', r_z'] &\lrequiv r_x' r_z - r_x r_z' \\
        \vdashL \omega_{\sigma_1 \oplus \sigma_2} [v_1, v_2] [v_1', v_2'] &\lrequiv \omega_{\sigma_1} v_1 v_1' + \omega_{\sigma_2} v_2 v_2'
    \end{align*}
\end{lemma}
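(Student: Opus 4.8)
The plan is to induct on the structure of the symplectic type $\sigma$, exhibiting an explicit closed $\lambdaC$-term $\omega_\sigma$ in each case and then checking the two displayed equations by reduction. Throughout I will use that, at the base type $\R$, two closed well-typed terms are $\lrequiv$-equivalent exactly when they have the same normal form: this is immediate from the definitions of $\ValRelation{\R}$ and $\ExpRelation{\R}$ together with strong normalization (\cref{thm:normalization}), so every verification below reduces to a finite computation with the operational semantics of \cref{fig:operational-semantics-C-simple}. (For the inductive case I read the displayed equation with $v_1,v_2,v_1',v_2'$ ranging over closed values; the general open-term version then follows from the compatibility lemmas for $\lrequiv$ and the fact that related values at symplectic types coincide.)

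For the base case $\sigma = \R \oplus \R$, I would take
\[
    \omega_{\R\oplus\R} \triangleq \lambda p.\, \lambda p'.\, \caseof{p}{\inl{x_1} \rightarrow a_L \mid \inr{x_2} \rightarrow a_R}
\]
where $a_L \triangleq \caseof{p'}{\inl{x_1'} \rightarrow \zero_\R \mid \inr{x_2'} \rightarrow (-1)\cdot(x_1 \cdot x_2')}$ and $a_R \triangleq \caseof{p'}{\inl{x_1'} \rightarrow x_2 \cdot x_1' \mid \inr{x_2'} \rightarrow \zero_\R}$, with $-1$ the constant $d-1 \in \Zd$. This is well-typed because the two off-diagonal branches are $\zero_\R$, which is typable in an arbitrary context, so the unused bound variables are discarded there; in the diagonal branches each variable is used exactly once. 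For the computation, recall that $\caseof{[u,w]}{\inl{x_1}\rightarrow a_1 \mid \inr{x_2}\rightarrow a_2}$ reduces to $a_1\{u/x_1\} + a_2\{w/x_2\}$, so applying $\omega_{\R\oplus\R}$ to $[r_x,r_z]$ and $[r_x',r_z']$ and reducing the outer and then the inner case expressions yields $\bigl(\zero_\R + (-1)\cdot(r_x \cdot r_z')\bigr) + \bigl(r_z \cdot r_x' + \zero_\R\bigr)$, which reduces further to the constant $r_x' r_z - r_x r_z'$, as required.

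For the inductive step $\sigma = \sigma_1 \oplus \sigma_2$, with $\omega_{\sigma_1}$ and $\omega_{\sigma_2}$ supplied by the induction hypothesis, I would set
\[
    \omega_{\sigma_1 \oplus \sigma_2} \triangleq \lambda p.\, \lambda p'.\, \caseof{p}{\inl{x_1} \rightarrow N_1 \mid \inr{x_2} \rightarrow N_2}
\]
with $N_1 \triangleq \caseof{p'}{\inl{x_1'} \rightarrow \omega_{\sigma_1} x_1 x_1' \mid \inr{x_2'} \rightarrow \zero_\R}$ and $N_2 \triangleq \caseof{p'}{\inl{x_1'} \rightarrow \zero_\R \mid \inr{x_2'} \rightarrow \omega_{\sigma_2} x_2 x_2'}$. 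The same reduction pattern as in the base case shows that $\omega_{\sigma_1 \oplus \sigma_2} [v_1,v_2] [v_1',v_2']$ reduces to $\bigl(\omega_{\sigma_1} v_1 v_1' + \zero_\R\bigr) + \bigl(\zero_\R + \omega_{\sigma_2} v_2 v_2'\bigr)$, which has the same normal form as $\omega_{\sigma_1} v_1 v_1' + \omega_{\sigma_2} v_2 v_2'$ (using strong normalization to know both sides normalize), establishing the second equation.

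The proof is essentially bookkeeping; the one point needing care is packaging a bilinear form into the $\lambdaC$ syntax under the linearity discipline. Because case analysis on a pair produces the \emph{sum} over its two branches, nesting a case on $p$ inside a case on $p'$ exposes all four cross terms $x_i x_j'$ at once, and the two unwanted ones must be killed with $\zero_\R$ — which is exactly where the rule allowing $\zero$ to be typed in an arbitrary context is used. Everything else (typability of the diagonal products and of the applications $\omega_{\sigma_k} x_k x_k'$, and the reductions themselves) is routine.
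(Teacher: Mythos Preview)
Your proposal is correct and takes essentially the same approach as the paper: the paper defines $\omega_\sigma$ by exactly the same nested case analysis you give (same branch structure, same use of $\zero$ in the off-diagonal branches), and simply omits the reduction-based verification of the two equivalences that you spell out.
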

\begin{proof}
    We define $\omega$ by induction on $\sigma$ as follows:    
    {\small \begin{align*}
        \omega_{\R \oplus \R} &\triangleq \lambda x.~\lambda x'.~
            \caseof{x}{\begin{aligned}
                \inl{x_x} &\rightarrow \caseof{x'}{\begin{aligned}
                    \inl{x_x'} &\rightarrow \zero \\
                    \inr{x_z'} &\rightarrow -x_x x_z'
                \end{aligned} } \\
                \inr{x_z} &\rightarrow \caseof{x'}{\begin{aligned}
                    \inl{x_x'} &\rightarrow x_z x_x' \\
                    \inr{x_z'} &\rightarrow \zero
                \end{aligned} }
            \end{aligned} }
        \\
        \omega_{\sigma_1 \oplus \sigma_2} &\triangleq \lambda x.~\lambda x'.~
            \caseof{x}{\begin{aligned}
                \inl{x_1} &\rightarrow \caseof{x'}{\begin{aligned}
                    \inl{x_1'} &\rightarrow \omega_{\sigma_1} x_1 x_1' \\
                    \inr{x_2'} &\rightarrow \zero
                \end{aligned} } \\
                \inr{x_2} &\rightarrow \caseof{x'}{\begin{aligned}
                    \inl{x_1'} &\rightarrow \zero \\
                    \inr{x_2'} &\rightarrow \omega_{\sigma_2} x_2 x_2'
                \end{aligned} }
            \end{aligned} }
            \qedhere
    \end{align*}
    \normalsize}
\end{proof}

\section{A Calculus for Projective Cliffords}
\label{sec:lambdaPC}

Now that we have a type system of $\Zd$-modules, we can use it to build up the type system for $\lambdaP$. In particular, closed terms of $\lambdaP$ correspond to Paulis in $\Qdn$ (\cref{sec:background:qdn}), and open terms correspond to condensed encodings of projective Cliffords.



With that in mind, $\PCat$-types $\tau$ are generated from single-\qudit Paulis ($\Pauli$) and $\ptensor$.
\begin{align*}
    \tau &::= \Pauli \mid \tau_1 \ptensor \tau_2
        \tag{$\PCat$-types}
\end{align*}
Every $\tau$ corresponds to a symplectic type $\sigma$.
\begin{align*}
    \overline{\Pauli} &\triangleq \R \oplus \R
    &
    \overline{\tau_1 \ptensor \tau_2} &\triangleq \overline{\tau_1} \oplus \overline{\tau_2}
\end{align*}
The syntax of $\PCat$-expressions is given by the following grammar:
\begin{align*}
    t &::= x \mid \letin{x}{t}{t'} \\
        &\mid a \mid \config{a} t \mid t_1 \cprod t_2 \mid \pow{t}{r} \\
        &\mid \caseof{t}{\PauliX \rightarrow t_x \mid \PauliZ \rightarrow t_z} \\
        &\mid \inl{t} \mid \inr{t} \mid \caseof{t}{\inl{x_1} \rightarrow t_1 \mid \inr{x_2} \rightarrow t_2}
        \tag{$\PCat$-expressions}
\end{align*}
Closed $\PCat$-expressions $t$ of type $\tau$ normalize to a pair of a $\lambdaC$ value $v \in \Val{\overline{\tau}}$ and a phase $r \in \R$, written $\config{r}{v}$. Intuitively, these correspond to Paulis $\zeta^r \Delta_{\underline{v}}$.
Every $\CCat$-expression $a$ is also a $\PCat$-expression with implicit phase $\config{0}$. The expression $\config{a}t$ adds a phase $a$ of $\CCat$-type $\R$ to the $\PCat$-expression $t$. The operator $t_1 \cprod t_2$ implements the condensed product (\cref{sec:encodings}), while $\pow{t}{r}$ scales both the Pauli representation and phase by the scalar $r$, as in $(\zeta^s \Delta_v)^r$.

\subsection{Typing Rules}

When a $\PCat$-expression has a free variable as in $x:\tau \vdashP t : \tau'$, it corresponds to a condensed encoding $\interpP{t} = (\mu,\psi)$. We can be explicit about $\psi$ in particular: for every such $t$ we can define an $\CCat$-expression $x:\overline{\tau} \vdashL \psiof{t} :\overline{\tau'}$, defined in \cref{fig:psiof}, that ignores the phase of $t$ and satisfies $\interpL{\psiof{t}} = \psi$.
Later we will prove that $\psiof{t}$ respects the symplectic form of its input variable $x$.

\begin{figure}
    \centering
    \begin{align*}
    \psiof{x} &\triangleq x \\
    \psiof*{\letin{x}{t}{t'}} &\triangleq \letin{x}{\psiof{t}}{\psiof*{t'}} \\
    \psiof{a} &\triangleq a \\
    \psiof*{\config{a}{t}} &\triangleq \psiof{t} \\
    \psiof*{t_1 \cprod t_2} &\triangleq \psiof{t_1} + \psiof{t_2} \\
    \psiof*{\pow{t}{r}} &\triangleq r \cdot \psiof{t} \\
    \psiof*{\caseof{t}{\PauliX \rightarrow t_x \mid \PauliZ \rightarrow t_z}}
        &\triangleq \caseof{\psiof{t}}{\inl{x_1} \rightarrow x_1 \cdot \psiof{t_x} 
                                        \mid
                                        \inr{x_2} \rightarrow x_2 \cdot \psiof{t_z}} \\
    \psiof*{\iota_i{t}} &\triangleq \iota_i{\psiof{t}} \\
    \psiof*{\caseof{t}{\inl{x_1} \rightarrow t_1 \mid \inr{x_2} \rightarrow t_2}}
        &\triangleq \caseof{\psiof{t}}{\inl{x_1} \rightarrow \psiof{t_1}
                                        \mid
                                        \inr{x_2} \rightarrow \psiof{t_2}}
\end{align*}
    \caption{Projecting out the non-phase component of a \(\PCat\)-expression to form a \(\CCat\)-expression.}
    \Description{A definition of a function \(\psiof{t}\) that takes \(\PCat\)-expressions corresponding to pairs \((\mu,\psi)\) to the linear morphism \(\psi\). The definition is given recursively on the expression \(t\).}
    \label{fig:psiof}
\end{figure}

The typing judgment for $\lambdaP$ has the form $\Theta \vdashP t : \tau$. The judgment is made up of two parts: a linearity check, which ensures that both the phase ($\mu$) and vector ($\psi$) components of the expression are linear in their inputs; and a symplectomorphism check, which ensures that the $\psi$ component respects the symplectic form.
\begin{align*}
    \inferrule*
    {\Theta \vdashPL t : \tau \\
     \Theta \vdashS t : \tau
    }
    { \Theta \vdashP t : \tau }
\end{align*}
The linearity check is straightforward, and the rules are shown in \cref{fig:lambdaPC-L-typing-rules}.

\begin{figure}
    \centering
    \[ \begin{array}{c}
        \inferrule*[right=L-var]
            {\Theta=x:\tau}
            {\Theta \vdashPL x : \tau}
        \qquad \qquad
        \inferrule*[right=L-let]
            {\Theta_1 \vdashPL t : \tau \\ \Theta_2,x:\tau \vdashPL t' : \tau'}
            {\Theta_1,\Theta_2 \vdashPL \letin{x}{t}{t'} : \tau'}
        \\ \\
        \inferrule*[right=L-phase]
            {\overline{\Theta} \vdashL a : \R \\
             \Theta \vdashPL t : \tau
            }
            {\Theta \vdashPL \config{a} t : \tau}
        \qquad \qquad
        \inferrule*[right=L-$\CCat$]
            {\overline{\Theta} \vdashL a : \overline{\tau}}
            {\Theta \vdashPL a : \tau}
        \\ \\
        \inferrule*[right=L-Pauli-E]
            {\Theta_1 \vdashPL t : \Pauli \\
             \Theta_2 \vdashPL t_x : \tau \\
             \Theta_2 \vdashPL t_z : \tau
            }
            {\Theta_1,\Theta_2 \vdashP \caseof{t}{\PauliX \rightarrow t_x \mid \PauliZ \rightarrow t_z} : \tau}
        \\ \\
        \inferrule*[right=L-$\cprod$]
            {\Theta \vdashPL t_1 : \tau \\ \Theta \vdashPL t_2 : \tau}
            {\Theta \vdashPL t_1 \cprod t_2 : \tau}
        \qquad \qquad
        \inferrule*[right=L-pow]
            { \Theta_1 \vdashPL t : \tau \\ \overline{\Theta_2} \vdashL a : \Zd }
            { \Theta_1,\Theta_2 \vdashPL \pow{t}{a} : \tau }
        \\ \\
        \inferrule*[right=L-$\ptensor$-I1]
            {\Theta \vdashPL t : \tau_1}
            {\Theta \vdashPL \inl{t} : \tau_1 \ptensor \tau_2}
        \qquad \qquad
        \inferrule*[right=L-$\ptensor$-I2]
            {\Theta \vdashPL t : \tau_2}
            {\Theta \vdashPL \inr{t} : \tau_1 \ptensor \tau_2}
        \\ \\
        \inferrule*[right=L-$\ptensor$-E]
            {\Theta_1 \vdashPL t : \tau_1 \ptensor \tau_2 \\
             \Theta_2,x_i:\tau_i \vdashPL t_i : \tau'
            }
            {\Theta_1,\Theta_2 \vdashPL \caseof{t}{\inl{x_1} \rightarrow t_1 \mid \inr{x_2} \rightarrow t_2} : \tau'}
    \end{array}\]
    \caption{Linearity typing rules for $\lambdaP$ expressions.}
    \Description{A set of ten typing rules for expressions in \(\lambdaP\).}
    \label{fig:lambdaPC-L-typing-rules}
\end{figure}

The symplectomorphism check is trivial when $\Theta$ is empty, and is defined in terms of the $\lambdaC$ equivalence relation when $\Theta$ is non-empty.
\begin{align*}
        \inferrule*[right=S0]
        {~}
        {\cdot \vdashS t : \tau}
    \qquad
        \inferrule*[right=S1]
            { x_1:\tau, x_2:\tau \vdash \omega(t^\psi\{x_1/x\}, t^\psi\{x_2/x\}) \lrequiv \omega(x_1,x_2) }
            { x:\tau \vdashS t : \tau }
\end{align*}
Note that the symplectomorphism check explicitly does not allow open expressions with more than one free variable. As we saw in the introduction, Cliffords on multi-qudit systems are captured by a variable $x : \tau_1 \ptensor \tau_2$, which is a coproduct in the category of condensed encodings, and not a tensor product in the sense of linear logic.
We hypothesize that multiple linear variables $x_1:\tau_1,x_2:\tau_2 \vdashP t : \tau$ might correspond not to a multi-qudit operation, but instead to operations in higher levels of the Clifford hierarchy---see \cref{sec:future-work} for more discussion.

It can be useful to derive typing rules that combine the linearity and symplectomorphism checks.

\begin{lemma}
    The following typing rules are valid:
\[\begin{array}{c}
        \inferrule*[right=$\PCat$-Pauli-E]
            {\Theta \vdashP t : \Pauli \\
             \cdot \vdashP t_x : \tau \\
             \cdot \vdashP t_z : \tau \\
             \cdot \vdashP \symplecticform[\tau]{t_z}{t_x} \lrequiv 1
            }
            {\Theta \vdashP \caseof{t}{\PauliX \rightarrow t_x \mid \PauliZ \rightarrow t_z} : \tau}
        \\ \\
        \inferrule*[right=$\PCat$-$\ptensor$-E]
            {\Theta \vdashP t : \tau_1 \ptensor \tau_2 \\
             x_i:\tau_i \vdashP t_i : \tau' \\
             x_1:\tau_1,x_2:\tau_2 \vdashP \symplecticform[\tau']{t_1}{t_2} \lrequiv \zero
            }
            {\Theta \vdashP \caseof{t}{\inl{x_1} \rightarrow t_1 \mid \inr{x_2} \rightarrow t_2} : \tau'}
\end{array}\]
\end{lemma}
\begin{proof}
    For the first rule, we start by checking that, for $\Theta = q_1:\Pauli,q_2:\Pauli$: 
    \[
        \Theta \vdashP \omega(\caseof{q_1}{\PauliX \rightarrow t_x \mid \PauliZ \rightarrow t_z}, \caseof{q_2}{\PauliX \rightarrow t_x \mid \PauliZ \rightarrow t_z}) \lrequiv \omega(q_1,q_2)
    \]
    which follows from instantiating $q_1$ and $q_2$ with arbitrary values $[r_x,r_z]$ and $[r_x',r_z']$ respectively.
    Then we can check that
    \begin{align*}
        q_1':\tau',q_2':\tau'
        &\vdash \omega\left(
            \caseof{t\{q_1'/q'\}}{\PauliX \rightarrow t_x \mid \PauliZ \rightarrow t_z},
            \caseof{t\{q_2'/q'\}}{\PauliX \rightarrow t_x \mid \PauliZ \rightarrow t_z}
        \right) \\
        &\lrequiv \omega(t\{q_1'/q\}, t\{q_2'/q\}) 
        \lrequiv \omega(q_1',q_2')
    \end{align*}

  The proof of the second rule is similar to the first.
\end{proof}

\begin{example} \label{eqn:h-example-well-typed}
    The projective Clifford $[H]$ corresponding to the qubit ($d=2$) Hadamard matrix can be expressed as $\caseof{x}{\PauliX \rightarrow [0,1] \mid \PauliZ \rightarrow [1,0]}$. It is well-typed thanks to the $\PCat-\Pauli-E$ rule because of the following symplectic form condition:
    \begin{align*}
        \symplecticform[\Pauli]{[1,0]}{[0,1]}
        &= \symplecticform[\R \oplus \R]{[1,0]}{[0,1]}
        = -1 = 1 \mod{2}
    \end{align*}
\end{example}

\subsection{Operational Semantics}

Next we will define operational semantics rules for closed Pauli expressions. The $\beta$-reduction rules are shown in \cref{fig:operational-semantics-Pauli}, and we also allow reduction under call-by-value evaluation contexts.

Three of these rules involve adding an extra phase $k$ during the evaluation, which comes from the composition and normalization rules of condensed encodings, specifically from the difference between $\Zd$ and $\Zd'$. As in \cref{sec:encodings}, for a value $r' \in \Zd'$ we write $\sgn{r'}$ for $\tfrac{1}{d}(r'-\underline{\overline{r'}}) \in \mathbb{Z}_{d'/d} \subseteq \{0,1\}$, where $\overline{r'}=r' \mod d$. In addition, for $v' \in \Zd'^n$ we write $\sgn{v'}$ for $\frac1d \omega'(v,\overline{\underline{v}}) \in \mathbb{Z}_{d'/d}$.

When $d$ is odd, $r'=r$ and so all the extra phases are trivial. When $d=2$, the phases for $\text{pow}$ and $\text{case}$ also become $0$, as for any $b_1,b_2 \in \mathbb{Z}_2$, we have $\underline{b_1}~\underline{b_2}=\underline{b_1 b_2}$.

\begin{example}
    Consider the qubit $[H]$ example from \cref{eqn:h-example-well-typed}. If we substitute the encoding of $Y=\Delta_{[1,1]}$ for $x$, we should obtain the encoding of $H Y H = -Y$.
    As a first step,
    \begin{align*}
        \caseof{\config{0}[1,1]}{\PauliX \rightarrow [0,1] \mid \PauliZ \rightarrow [1,0]}
        &\rightarrow_{\beta} \config{\sgn{1}} \pow{[1,0]}{1} \cprod \pow{[0,1]}{1} \\
        &= \pow{[1,0]}{1} \cprod \pow{[0,1]}{1}
    \end{align*}
    because $\sgn{1}=0$.
    Next, we can see that $\pow{v}{1} \rightarrow \config{0}{v}$ because the coefficient $\sgn{\underline{1}~\underline{v}}$ is always equal to $0$. 
    Finally, we can see that
    \begin{align*}
        \config{0}[1,0] \cprod \config{0}[0,1]
        &\rightarrow_\beta \config{k}{[1,1]}
        &\text{where}~k&=\sgn{\omega'(\underline{[1,0]},\underline{[0,1]})} + \sgn{\underline{[1,0]}+\underline{[0,1]}}
    \end{align*}
    In the expression of $k$ above, the second component is $\sgn{[1,1]}=0$, while the first is
    \begin{align*}
        \sgn{\omega'(\underline{[1,0]},\underline{[0,1]})} 
        &= \sgn{-1 \mod{d'}} = \sgn{3} = 1
    \end{align*}
    since $d'=4$. Thus our example normalizes to $\config{1}[1,1]$, which corresponds to the Pauli $i^1 X^1 Z^1 = -Y$.
\end{example}

\begin{figure}
    \centering
    \begin{align*}
        \letin{x}{\config{r}{v}}{t'}
            \rightarrow_\beta
            &\config{r} t'\{v/x\} \\
        \config{r'}\left( \config{r}{v} \right)
            \rightarrow_\beta
            &\config{r' + r} v \\
        (\config{r_1} v_1) \cprod (\config{r_2} v_2)
            \rightarrow_\beta
            &\config{r_1 + r_2 + k} (v_1+v_2) \\
            & k = \tfrac{d}{2}\left( \divd{\omega'(\underline{v_1},\underline{v_2})} + \divd{\underline{v_1} + \underline{v_2}} \right) \\
        \pow{\config{r}{v}}{r'}
            \rightarrow_\beta
            &\config{r' r + k} (r' \cdot v) \\
            &k = \tfrac{d}{2}\sgn{\underline{r'}~\underline{v}} \\
        \caseof{\config{r}{[r_x, r_z]}}{\PauliX \rightarrow t_x \mid \PauliZ \rightarrow t_z}
            \rightarrow_\beta
            &\config{r + k}
                \pow{t_z}{r_z} \cprod \pow{t_x}{r_x} \\
            &k = \tfrac d2 \sgn{\underline{r_x}~\underline{r_z}} \\
        \iota_i{\config{r}{v}}
            \rightarrow_\beta
            &\config{r}{\iota_i(v)} \\
        \caseof{\config{r}{[v_1,v_2]}}
                {\inl{x_1} \rightarrow t_1 \mid \inr{x_2} \rightarrow t_2}
            \rightarrow_\beta
            &\config{r} t_1\{v_1/x_1\} \cprod t_2\{v_2/x_2\}
            \\
        v \rightarrow_\eta &\config{0}{v}
    \end{align*}
    \caption{$\beta$-reduction rules for closed $\lambdaP$ expressions.}
    \Description{A set of eight beta reduction rules for \(\lambdaP\) expressions. Three of the rules add phase coefficients \(k\) which add addition plus or minus signs to the phase.}
    \label{fig:operational-semantics-Pauli}
\end{figure}

\begin{lemma} The following properties hold of $\lambdaP$:

    \begin{description}
        \item[Progress:] If $\cdot \vdashP t : \tau$ then either $t$ is a normal form of the form $\config{r}{v}$ for a $\CCat$-value $v$,
    or there exists some $t'$ such that $t \rightarrow t'$.
        \item[Preservation:] If $\cdot \vdashP t : \tau$ and $t \rightarrow t'$, then $\cdot \vdashP t' : \tau$.
        \item[Normalization:] If $\cdot \vdashP t : \tau$ there is a unique normal form $\config{r}{v}$ such that $t \rightarrow^\ast \config{r}{v}$.
    \end{description}
\end{lemma}
\begin{proof}
    Straightforward by induction.
\end{proof}

\subsection{The Category of Projective Cliffords}

In this section, we define the \textit{symplectic category} $\SCat$ and the \textit{Pauli category} $\PCat$ that arise in the categorical semantics of $\lambdaP$.

The \emph{symplectic category}\footnote{Note that our use of ``symplectic category'' varies slightly from Weinstein symplectic categories~\citep{weinstein2010symplectic} since we only include $\Zd$-linear functions that respect the symplectic form.} has objects $(V,\omega)$ where $V\in\obj{\CCat}$ and $\omega:V\otimes V\to\Zd$ is a symplectic form on $V$. Its morphisms $\psi:(V_1,\omega_1)\to(V_2,\omega_2)$ are \textit{symplectic morphisms}, that is, $\Zd$-linear maps that respect the symplectic form:
\begin{align*}
    \omega_1(v,v')=\omega_2(\psi(v),\psi(v'))&&\textrm{for all }v,v'\in V_1
\end{align*}
The composition of $\psi$ and $\psi'$ is just their composition in $\CCat$. Symplectic morphisms are necessarily injective, so if $\psi\in\CCat(V_1,V_2)$ then $\rank{V_1}\leq\rank{V_2}$. 

Objects of $\CCat$ of the form 
$(\Zd \oplus \Zd)^n$
have a canonical symplectic form $\omega$ defined in terms of their canonical ordered basis, which we will write $b^x_1,b^z_1,\ldots,b^x_n,b^z_n$.
Since this is the form of objects that will appear in the categorical semantics as the interpretation of symplectic types $\sigma$, such objects comprise the subcategory of $\SCat$ we are most concerned with.

We can see $\SCat$ as a subcategory of $\CCat$ by forgetting the symplectic form $\omega$ on each object $(V,\omega)\in\obj{\SCat}$. The underlying object of the coproduct $(V_1,\omega_1)\oplus(V_2,\omega_2)$ in $\SCat$ is the biproduct $V_1 \oplus V_2$ in $\CCat$. Note that while we still denote it using the $\oplus$ symbol, $\oplus$ is not in general a (bi)product in $\SCat$.

We now define the category $\PCat$ whose morphisms are condensed encodings of projective Cliffords.
For each object $V \in \obj{\SCat}$, we introduce a symbol $\Pobj{V} \in \obj{\PCat}$ as well as a special unit object $\Punit$.
\begin{align*}
    \obj{\PCat} &\triangleq \{\Pobj{V} \mid V \in \SCat \} \cup \{\Punit\}
\end{align*}
The morphisms of $\PCat$ are then defined as follows:
\begin{align*}
  \begin{aligned}
    \PCat(\Pobj{V},\Pobj{V'}) &\triangleq \CCat(V,\Zd) \times \SCat(V,V')
  \end{aligned}
  \qquad
  \begin{aligned}
    \PCat(\Punit,\Pobj{V}) &\triangleq \CCat(\Zd,\Zd) \times \CCat(\Zd,V) \\
    \PCat(\Punit,\Punit) &\triangleq \{ \idmorph\Zd \}
  \end{aligned}
\end{align*}
Intuitively, if $V$ is the type of $n$ qu$d$it phase space $(\Zd \oplus \Zd)^{n}$, then $\Pobj{V}$ should be thought of as the $\cprod$-closed subset $\Qdn\subseteq\Pauligroup$ defined in \cref{sec:encodings}. A morphism $\gamma : \Pobj{V} \to \Pobj{V}$ can be thought of as the encoding $(\mu,\psi)$ of a projective Clifford $[U]\in\Pcliffordgroup$. 
On the other hand, morphisms $\gamma : \Punit \to \Pobj{V}$ can be thought of as the global elements of $\Pobj{V}$, that is, the set of Paulis in $\Qdn$.

The identity morphism in $\PCat(\Pobj{V},\Pobj{V})$ is $(\zero,\idmorph{V})$, and in $\PCat(\Punit,\Punit)$ is $\idmorph\Zd$.

The composition of two morphisms $(\mu_2,\psi_2) \in \PCat(\Pobj{V},\Pobj{V'})$ with $(\mu_1,\psi_1) \in \PCat(A,\Pobj{V})$ (for $A$ an arbitrary object of $\PCat$) is defined as
\[
    (\mu_2,\psi_2) \circ (\mu_1,\psi_1) = (\mu_3,\psi_2 \circ \psi_1) \in \PCat(A,\Pobj{V'})
\]
where $\mu_3 \in \CCat(A, \Zd)$ is defined linearly on its basis elements by
    \begin{align*}
        \mu_3(b) &\triangleq \mu_1(b) + \mu_2(\psi_1(b)) + \kappa^{\psi_2}\flatten{\psi_1(b)}
    \end{align*}
and where $\kappa^\psi$ is the function (not a linear map) from $\Zd^{\rank{V}}$ to $\Zd$ defined in \cref{sec:encodings}:
    \begin{align*}
        \kappa^\psi(v)
            &\triangleq \frac{1}{d} \sum_{i=1}^n \left(
                \underline{x_i}~\underline{z_i}
                (1 + \omega'(\underline{\psi(b_i^x)}, \underline{\psi(b_i^z)}))
                + \underline{x_i} \omega'(\underline{\psi(b_i^x)}, \underline{\psi(v)})
                + \underline{z_i} \omega'(\underline{\psi(b_i^z)}, \underline{\psi(v)})
                \right) \\
        &\text{where}~v=x_1 \oplus z_1 \oplus \cdots \oplus x_n \oplus z_n
    \end{align*}

\subsection{Categorical Semantics of \(\lambdaP\)}
\label{sec:lambdaP-categorical-semantics}

\cref{fig:cat-semantics-pauli} identifies every $\PCat$ expression $\Theta \vdashPL t : \tau$ with a pair of morphisms 
\[
    \interpP{t} = (\mu,\psi) 
    \in \PCat'(\Theta,\tau) \triangleq \CCat(\interpL{\overline{\Theta}}, \Zd) \times \CCat(\interpL{\overline{\Theta}}, \interpL{\overline{\tau}})
\]

\begin{figure}
    \centering
    \begin{align*}
        \interpP{x} &\triangleq (\zero, id) \\
        \interpP{\letin{x}{t}{t'}} &\triangleq \interpP{t'} \circ (\interpP{t} \otimes \idmorph{~}) \\
        \interpP{a} &\triangleq (\zero, \interpL{a}) \\
        \interpP{\config{a}{t}} &\triangleq \config{\interpL{a}} \interpP{t} \\
        \interpP{t_1 \cprod t_2} &\triangleq \interpP{t_1} \cprod \interpP{t_2} \\
        \interpP{\pow{t}{a}} &\triangleq \pow{\interpP{t}}{\interpL{a}} \\
        \interpP{\caseof{t}{\PauliX \rightarrow t_x \mid \PauliZ \rightarrow t_z}}
            &\triangleq \left(\interpP{t_x} \boxplus \interpP{t_z} \right) \circ \interpP{t} \\
        \interpP{\iota_i(t)} &\triangleq (\zero, \iota_i) \circ \interpP{t} \\
        \interpP{\caseof{t}{\inl{x_1} \rightarrow t_1 \mid \inr{x_2} \rightarrow t_2}}
            &\triangleq \left(\interpP{t_1} \boxplus \interpP{t_2} \right) \circ \interpP{t}
    \end{align*}
    \caption{Categorical semantics of $\lambdaP$ expressions $\Theta \vdashPL t : \tau$ as morphisms in $ \CCat(\interpL{\overline{\Theta}}, \Zd) \times \CCat(\interpL{\overline{\Theta}}, \interpL{\overline{\tau}})$.}
    \Description{Recursive definition of a function that maps \(\lambdaP\) expressions into morphisms.}
    \label{fig:cat-semantics-pauli}
\end{figure}

Composition of pairs of morphisms $(\mu,\psi)$ in \cref{fig:cat-semantics-pauli} implicitly refers to  composition in $\PCat$. \cref{fig:cat-semantics-pauli} uses constructions on pairs of morphisms that mirror their respective programming abstractions:

\begin{enumerate}
    \item If $(\mu,\psi) \in \PCat'(\Theta,\tau)$ and $\mu' \in \CCat(\interpL{\overline{\Theta}},\Zd)$,
    then $\config{\mu'}(\mu,\psi) \triangleq (\mu'+\mu, \psi) \in \PCat'(\Theta, \tau)$.

    \item If $(\mu_i,\psi_i) \in \PCat'(\Theta,\tau)$, then $(\mu_1,\psi_1) \cprod (\mu_2,\psi_2) \triangleq (\mu_0,\psi_1+\psi_2) \in \PCat'(\Theta,\tau)$, where:
    \[
        \mu_0(b) \triangleq \mu_1(b) + \mu_2(b)
            + \tfrac{d}{2}\left( \divd{\omega'(\underline{\psi_1(b)},\underline{\psi_2(b)})}
                               + \divd{\underline{\psi_1(b)} + \underline{\psi_2(b)}}
                          \right)    
    \]

    \item If $(\mu,\psi) \in \PCat'(\Theta,\tau)$ and
        $a \in \CCat(\interpL{\overline{\Theta'}}, \Zd)$,
        then $\pow{(\mu,\psi)}{a} \triangleq (\mu_0,\psi_0) \in \PCat'((\Theta,\Theta'), \tau)$, where:
    \begin{align*}
        \psi_0(b \otimes b') &\triangleq a(b') \psi(b) 
      &
        \mu_0(b \otimes b') &\triangleq a(b') \mu(b) + \tfrac{d}{2} \divd{\underline{a(b')} ~\underline{\psi(b)}}
    \end{align*}

    \item For $\mu_i \in \CCat(\alpha_i \otimes \alpha, \Zd)$
        and $\psi_i \in \CCat(\alpha_i \otimes \alpha, \alpha')$, define
        $(\mu_1,\psi_1) \boxplus (\mu_2,\psi_2) \triangleq (\mu_1 \boxplus \mu_2,\psi_1 \boxplus \psi_2)$, where:
    \begin{align*}
        f_1 \boxplus f_2((b_1 \oplus b_2) \otimes b)
            &\triangleq f_1(b_1 \otimes b) + f_2(b_2 \otimes b)
    \end{align*}
\end{enumerate}

\begin{theorem}
    If $\Theta \vdashP t : \tau$ then $\interpP{t} \in \PCat(\interpP{\Theta}, \interpP{\tau})$ where
\begin{align*}
    \interpP{\tau} &\triangleq \Pobj{\interpL{\overline{\tau}}}
    &
    \interpP{\cdot} &\triangleq \Punit 
    &
    \interpP{x:\tau} &\triangleq \interpP{\tau}
\end{align*}
\end{theorem}
\begin{proof}
    If $\Theta$ is non-empty, the symplectomorphism judgment $\Theta \vdashS t : \tau$ and the soundness of the equivalence relation (\cref{thm:categorical-equivalence-sound}) ensures that $\psi \in \SCat$.
    If $\Theta$ is empty, the result is trivial.
\end{proof}

The categorical semantics as defined in \cref{fig:cat-semantics-pauli} preserves the following invariant:

\begin{lemma} \label{lem:psiof-sound}
    If $\Theta \vdashP t : \tau$ and $\interpP{t} = (\mu,\psi)$, then $\interpL{\psiof{t}} = \psi$.
\end{lemma}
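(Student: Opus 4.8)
The plan is to proceed by structural induction on the typing derivation $\Theta \vdashP e : Q$, in each case reading off $\interpP{e}$ from \cref{fig:cat-semantics-pauli}, extracting its second (symplectic) component, and comparing it against $\interpL{\psiof{e}}$, where $\psiof{e}$ is given by \cref{fig:psiof} and its $\CCat$-interpretation by \cref{fig:cat-semantics}. Since the statement already hands us $\interpP{e} = (\mu,\psi)$ as a well-defined morphism --- in particular, that the morphisms produced by $\boxplus$ in the case rules really are morphisms of $\PCat$, which is exactly \cref{lem:boxplus-well-defined} together with \cref{lemma:psiof-symplectic} --- we never need to re-establish well-definedness; we only track the $\psi$-component.

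Most cases are immediate from the way each categorical operation acts on the symplectic component. For $x$ we have $\interpP{x} = (\zero, \mathrm{id})$ while $\psiof{x} = x$, interpreted as the identity. For a $\CCat$-expression $a$, $\interpP{a} = (\zero, \interpL{a})$ and $\psiof{a} = a$. For $\config{a}{e}$, the operation adding a phase $\interpL a$ leaves the $\psi$-component untouched, and $\psiof*{\config{a}{e}} = \psiof{e}$, so the claim follows from the induction hypothesis for $e$. For $\letin{x}{e}{e'}$, composition in $\PCat$ sends $(\mu_2,\psi_2)\circ(\mu_1,\psi_1)$ to a pair with $\psi$-component $\psi_2\circ\psi_1$, which matches the $\CCat$-semantics of $\letin{x}{\psiof{e}}{\psiof{e'}}$ (namely $\interpL{\psiof{e'}}$ precomposed with $\interpL{\psiof{e}}$, up to the $\Zd\otimes(-)$ unitor), so the two induction hypotheses finish it. The cases $\iota_i(e)$ and $e_1 \cprod e_2$ are analogous: the $\iota_i$ morphism of $\PCat$ has $\psi$-component $\iota_i$, and $\cprod$ on $\PCat$ (\cref{defn:odot-morphisms}) has $\psi$-component $\psi_1+\psi_2$, matching $\psiof*{\iota_i(e)} = \iota_i(\psiof{e})$ and $\psiof*{e_1\cprod e_2} = \psiof{e_1}+\psiof{e_2}$. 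For $\pow{e}{r}$ the semantics composes $\interpP{e} \in \PCat(\Punit,\Pobj{V})$ with $\interpL r \in \PCat(\Punit,\Punit)$; by the composition formula for $\PCat(\Punit,-)$ stated at the end of \cref{sec:encodings}, this scales the second component by $r$, matching $\psiof*{\pow{e}{r}} = r\cdot\psiof{e}$ via linearity of $\interpL{\psiof{e}}$.

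The one case needing genuine unfolding is $\caseof{e}{\PauliX\rightarrow e_x \mid \PauliZ\rightarrow e_z}$ (the $\ptensor$-case statement is the same but simpler, since $\psiof$ introduces no scalar factors there). Here $\interpP{e}$ is $\bigl(\interpP{e_x}\boxplus\interpP{e_z}\bigr)\circ\interpP{e}$, whose $\psi$-component is $(\psi_x\boxplus\psi_z)\circ\psi$ for $\psi,\psi_x,\psi_z$ the $\psi$-components of $\interpP{e},\interpP{e_x},\interpP{e_z}$. On the syntactic side $\psiof$ of this expression is $\caseof{\psiof{e}}{\inl{x_1}\rightarrow x_1\cdot\psiof{e_x} \mid \inr{x_2}\rightarrow x_2\cdot\psiof{e_z}}$; evaluating the $\CCat$-semantics of $\caseof{}$ on an input $s$ with $\interpL{\psiof{e}}(s) = c_1\oplus c_2$ yields $\interpL{x_1\cdot\psiof{e_x}}(c_1) + \interpL{x_2\cdot\psiof{e_z}}(c_2)$, and because $e_x,e_z$ are closed, $\psiof{e_x},\psiof{e_z}$ are closed $\CCat$-expressions, so $\interpL{x_i\cdot\psiof{e_x}}(c_i) = c_i\cdot\interpL{\psiof{e_x}}(1) = \interpL{\psiof{e_x}}(c_i)$ by linearity. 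Thus the whole thing equals $\bigl(\interpL{\psiof{e_x}}\boxplus\interpL{\psiof{e_z}}\bigr)\bigl(\interpL{\psiof{e}}(s)\bigr)$, and the three induction hypotheses identify this with $(\psi_x\boxplus\psi_z)\circ\psi$. The main obstacle is precisely this bookkeeping: keeping the context-split and the $\Zd\otimes(-)$ unitors in the $\CCat$-interpretation of $\caseof{}$ (and $\letin{}$) straight, and recognizing that the scalar-multiplication idiom $x_i\cdot\psiof{e_x}$ in \cref{fig:psiof} is exactly what makes $\interpL{\psiof{\cdot}}$ of a case term agree with the $\boxplus$ construction on $\PCat$. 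Once that correspondence is pinned down, the induction goes through mechanically.
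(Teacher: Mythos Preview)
Your proposal is correct and takes the same approach as the paper, which simply states ``By induction on $e$'' without spelling out the cases. Your case analysis is accurate, including the handling of the $\PauliX/\PauliZ$ case statement where the scalar multiplication $x_i\cdot\psiof{e_x}$ in $\psiof{-}$ is exactly what is needed to match the $\boxplus$ on the $\psi$-components.
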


Finally we can prove the soundness and completeness of $\lambdaP$.

\begin{theorem}[Soundness of $\lambdaP$] \label{thm:phase-types-semantics-sound}
    If $\cdot \vdashP t : \tau$ and $t \rightarrow t'$ then $\interpP{t} = \interpP{t'}$
\end{theorem}
\begin{proof}[Sketch]
    First, we prove that substitution corresponds to composition of morphisms in the category (\cref{lem:phase-semantics-substitution-sound})---specifically, that $\interpP{t\{v/x\}}=\interpP{t} \circ \interpP{v}$. From there, we proceed by case analysis on the $\beta$ reduction rules. The full details are shown in \cref{appendix:categorical-semantics-pauli}, \cref{app:thm:phase-types-semantics-sound}.
\end{proof}

\begin{theorem}[Completeness of $\lambdaP$] \label{thm:lambdaP-completeness}
    For every $(\mu,\psi) \in \PCat(\interpP{\tau},\interpP{\tau'})$ there is some $x : \tau \vdashP t : \tau'$ such that $\interpP{t} = (\mu,\psi)$.
    In other words, every projective Clifford can be represented in $\lambdaP$.
\end{theorem}
\begin{proof}
    Define $t=\config{\up{\mu}} \up{\psi}$, where $x:\overline{\tau} \vdashC \up{\mu} : \Zd$ and $x:\overline{\tau} \vdashC \up{\psi} : \overline{\up{\psi}}$ as in \cref{thm:lambdaC-completeness}. To show $x:\tau \vdashP t : \tau'$, we need to show that $\psiof{t}=\up{\psi}$ respects the symplectic form \ie
    \[
        x_1:\overline{\tau},x_2:\overline{\tau} \vdashC \omega(\up{\psi}x_1,\up{\psi}x_2) \lrequiv \omega(x_1,x_2)
    \]
    This follows from the completeness of $\lrequiv$ (\cref{thm:lambdaC-lrequiv-completeness}). 
\end{proof}

\section{Extensions of \(\lambdaP\)}
\label{sec:extensions}
\subsection{Linear/Non-linear Types}
\label{sec:lnl}

Up until now we have only worked with linear types and linear maps in the sense of linear logic. To make $\lambdaP$ into a realistic programming language that supports data structures (\cref{sec:data-structures}) and real-world examples (\cref{sec:examples}), we need to add in support for other kinds of programming features such as data structures, polymorphism, and modularity. To do this, we will incorporate non-linear data structures in the style of linear/non-linear logic~(LNL)~\citep{benton1994mixed}.

In LNL type systems, there are two kinds each of terms, types, typing contexts, and typing judgments: linear and non-linear. 
They are useful for adding programming features to a linear language in a way that does not conflict with properties of the linear system, and have specifically been used for quantum programming languages in several settings~\citep{paykin2017qwire,paykin2017linearity,paykin2019hott,jia2022semantics,fu2020linear,fu2023proto}.
%

We start by defining a set of non-linear types $\ntype$, which include ordinary type formers such as function types, units, products, and sums. Other data structures such as recursive types, polymorphism, and dependent types can also be added in a straightforward way; these features are orthogonal to the relationship between the linear and non-linear type systems~\citep{paykin2018linear}. Non-linear types also include \emph{lifted Pauli types} $\lift{\tau}$ and \emph{lifted projective Clifford types} $\PseudoCliffordFunction{\tau_1}{\tau_2}$, which are reminiscent of boxed circuit types from Quipper~\citep{green2013quipper} and \textsc{QWIRE}~\citep{paykin2017qwire} in that they capture a first-order quantum operation inside a classical sub-language.
\begin{align*}
    \ntype &::= () \mid \ntype_1 \times \ntype_2 \mid \ntype_1 + \ntype_2 \mid \ntype_1 \rightarrow \ntype_2
        \mid \lift{\tau} \mid \PseudoCliffordFunction{\tau_1}{\tau_2}
\end{align*}
The syntax of non-linear terms $n$ is standard for the non-linear type systems, with $n_1;n_2$ being the elimination rule for the unit $()$. We add introduction rules $\lift{t}$ for $\lift{\tau}$ and $\lambda \lift{x}.t$ for $\PseudoCliffordFunction{\tau_1}{\tau_2}$ that \emph{lift} closed or single-variable $\lambdaP$ expressions to non-linear terms.
\begin{align*}
    n &::= () \mid n_1;n_2 \mid (n_1,n_2) \mid \pi_1 n \mid \pi_2 n \\
    &\mid \inl{n} \mid \inr{n} \mid \caseof{n}{\inl{x_1} \rightarrow n_1 \mid \inr{x_2} \rightarrow n_2} \\
    &\mid \lambda x.t \mid n_1 n_2 \mid \lift{t} \mid \lambda \lift{x}.t
\end{align*}
Non-linear contexts  $\Gamma ::= \cdot \mid \Gamma,x:\ntype$ map variables to non-linear types $\ntype$.
The non-linear typing judgment has the form $\Gamma \vdashT t : \ntype$ and is entirely standard except the rules for $\lift{\tau}$ and $\PseudoCliffordFunction{\tau_1}{\tau_2}$:
    \[ \begin{array}{c}
        \inferrule*
            {\Gamma;\cdot \vdashP t : \tau}
            {\Gamma \vdashT \lift{t} : \lift{\tau}}
        \qquad \qquad
        \inferrule*
            {\Gamma;x:\tau \vdashP t : \tau'
            }
            {\Gamma \vdashT \lambda \lift{x}.t : \PseudoCliffordFunction{\tau}{\tau'}}
    \end{array} \]
We write $\CliffordFunction{\tau}{\tau'}$ for $\PseudoCliffordFunction{\tau}{\tau'}$ when $|\tau|=|\tau'|$, \ie when $\tau$ and $\tau'$ have the same rank.

As seen in these rules, we amend the $\PCat$ typing judgments with non-linear contexts, for example $\Gamma; \Theta \vdashPL t : \tau$. The existing typing rules do not change, except that the non-linear context is preserved between them. This is true even in cases where the linear context $\Theta$ is restricted, as in:
\begin{align*}
    \inferrule*
    {\Gamma;\Theta \vdashP t : \Pauli \\
     \Gamma;\Theta' \vdashP t_x : \tau \\
     \Gamma;\Theta' \vdashP t_z : \tau \\
    }
    {\Gamma;\Theta,\Theta' \vdashP \caseof{t}{\PauliX \rightarrow t_x \mid \PauliZ \rightarrow t_z} : \tau}
    \label{eqn:lnl-case}
\end{align*}
In addition, we add elimination rules for non-linear types $\ntype$ in $\lambdaP$.
\begin{align*}
    t &::= \cdots \mid \letin{x}{n}{t} \mid n;t \mid \caseof{n}{\inl{x_1} \rightarrow t_1 \mid \inr{x_2} \rightarrow t_2} \mid \force{n} \mid \apply{n}{t}
\end{align*}
The last two expressions $\force{n}$ and $\apply{n}{t}$ are elimination forms for $\lift{\tau}$ and $\PseudoCliffordFunction{\tau}{\tau'}$ respectively; they can be used in $\lambdaP$ expressions an unrestricted number of times. Their typing rules are shown here:
\[
        \inferrule*
            {\Gamma \vdashT n : \lift{\tau}}
            {\Gamma;\cdot \vdashP \force{n} : \tau}
        \qquad \qquad
        \inferrule*
            {\Gamma \vdashT n : \PseudoCliffordFunction{\tau_1}{\tau_2} \\
             \Gamma;\Theta \vdashP t : \tau_1
            }
            {\Gamma;\Theta \vdashP \apply{n}{t} : \tau_2}
\]
Non-linear values of type $\lift{\tau}$ are any expressions $\lift{t}$ (not necessarily normalized), and values of type $\PseudoCliffordFunction{\tau_1}{\tau_2}$ are any lambda abstraction $\lambda \lift{x}.t$.
The $\beta$-reduction rules are:
\begin{align*}
    \force{\lift{t}} &\rightarrow_\beta t
    &
    (\lambda \lift{x}.t)(\config{r}{v}) &\rightarrow_\beta \config{r} t\{v/x\}
\end{align*}



\subsection{Compiling to Pauli Tableaux and Circuits}
\label{sec:compilation}

The idea of encoding projective cliffords based on their basis elements is inspired by the use of Pauli tableaux~\citep{aaronson2004}.
A tableau can be defined as a non-linear data type $\Frame{\tau}{\tau'}$ as follows:
\begin{align*}
    \Frame{\Pauli}{\tau'} &\triangleq \lift{\tau'} \times \lift{\tau'}
    &\qquad
    \Frame{\tau_1 \ptensor \tau_2}{\tau'} &\triangleq \Frame{\tau_1}{\tau'} \times \Frame{\tau_2}{\tau'}
\end{align*}
Intuitively, a tableau is a list of pairs $(P_{i,x}, P_{i,z})$ of $n$-\qudit Paulis, indicating that $U X_i U^\dagger = P_{i,x}$ and $U Z_i U^\dagger = P_{i,z}$ for each $i$.

Not all terms of this type correspond to well-formed tableaux---it must be the case that each $(P_{i,x}, P_{i,z})$ pair satisfies
$\omega(P_{i,z}, P_{i,x})=1$ and, for $i \neq j$, $\omega(P_{i,x},P_{j,x})=\omega(P_{i,z},P_{j,z})=0$. For this reason, programming directly with Pauli tableaux in other languages can be dangerous for the programmer, as the type system does not ensure tableaux are well-formed. 
However, every projective Clifford $\Gamma \vdashT n : \PseudoCliffordFunction{\tau}{\tau'}$ can be compiled to a well-formed Pauli tableau $\Gamma \vdashT \compile_\tau(n) : \Frame{\tau}{\tau'}$ by induction on $\tau$:
\begin{align*}
    \compile_\Pauli(n) &\triangleq (\lift{\force{n} \PauliX}, \lift{\force{n} \PauliZ}) \\
    \compile_{\tau_1 \ptensor \tau_2}(n)
        &\triangleq \left(\compile_{\tau_1}\left(\lambda \lift{x_1}.n(\inl{x_1})\right),
                          \compile_{\tau_2}\left(\lambda \lift{x_2}.n(\inr{x_2})\right)\right)
\end{align*}
This compilation can then be used in conjunction with circuit synthesis from Pauli tableaux to compile Clifford functions all the way to circuits~\citep{aaronson2004,PaykinSchmitz2023pcoast,schneider2023sat,brandl2024efficient}.

\subsection{Inverses}
\label{sec:inverse}

The projective Clifford group is closed under inverses, meaning that if $\gamma$ is a projective Clifford, then so is $\gamma^{-1}$. In this section we explore how to express the inverse of a Clifford in $\lambdaP$.

We can characterize inverses in $\PCat$ as $(\mu,\psi)^{-1}=(\mu_{\mathrm{inv}},\psi^{-1})$ where
\begin{align*}
    \psi^{-1}(v) &= \flatten{ (\omega(\psi(b_1^z),v) \oplus \omega(v,\psi(b_1^x)))
                        \oplus \cdots \oplus
                        (\omega(\psi(b_n^z),v) \oplus \omega(v,\psi(b_n^x))) }^{-1} \\
    \mu_{\mathrm{inv}}(b) &= \frac d2K^\psi\flatten{\psi^{-1}(b)}-\mu(\psi^{-1}(b))
\end{align*}
where $\flatten{-}$ is the canonical isomorphism $A \xrightarrow{\sim} \Zd^{\rank{A}}$, and $\mu_{\mathrm{inv}}$ is defined by extending its action on standard basis vectors $b$.

We start out with defining the equivalent of $\psi^{-1}$ in $\lambdaC$. Let $\vdashT n : \PseudoCliffordFunction{\tau}{\tau'}$. We can always create a linear function from $\overline{\tau'}$ to $\overline{\tau}$ by induction on $\tau$ as follows:
\begin{align*}
    \cdot \vdashL \pseudoinverse[\tau]{n} &: \overline{\tau'} \lolli \overline{\tau} \\
    \pseudoinverse[\Pauli]{n} &\triangleq \lambda q'.[\omega(\psiof*{\apply{n}{\PauliZ}}, q'), \omega(\psiof*{\apply{n}{\PauliX}}, q')] \\
    \pseudoinverse[\tau_1 \ptensor \tau_2]{n} &\triangleq \lambda q'.
        [\pseudoinverse[\tau_1]{n_1}(q'), \pseudoinverse[\tau_2]{n_2}(q')]
\end{align*}
where $n_i : \PseudoCliffordFunction{\tau_i}{\tau'}$ is defined as $\lambda \lift{x}.n(\iota_i(x))$.
Recall that if $\Gamma;\Theta \vdashP t : \tau$ then $\Gamma;\overline{\Theta} \vdashL \psiof{t} : \overline{\tau}$.

Lifting this inverse operation to a Clifford (assuming $\rank{\tau}=\rank{\tau'}$) is a bit more complicated. We would like an operation with the following signature:
\begin{align*}
    \inferrule*
    {\Gamma \vdashT f : \CliffordFunction{\tau}{\tau'} \\ \Gamma;\Theta \vdashP t : \tau'}
    {\Gamma;\Theta \vdashP f^{-1}(t) : \tau}
\end{align*}
One approach to defining $f^{-1}(t)$ could proceed as follows:
\begin{enumerate}
    \item Define a translation $\muof{f}$ similar to $\psiof{f}$ that projects out the phase morphism of a Pauli term.
    \item Define a non-linear function $K : \CliffordFunction{\tau}{\tau'} \rightarrow \lift{\tau} \rightarrow \lift{\R}$ that computes the phase $K$ for a given condensed Clifford encoding.
    \item Define $\pseudomuinverse{f} : \overline{\tau'} \lolli \R$ by induction on $\tau'$ as in terms of $K$, $\muof{f}$, and $\pseudoinverse{f}$.
    \item Finally, define $f^{-1}(t) \triangleq \config{\pseudomuinverse{f}(t)} \pseudoinverse{f}(t)$.
\end{enumerate}

Alternatively, we could compute the phase of $\pseudoinverse{f}(t)$ on a case-by-case basis by defining
\begin{align*}
    f^{-1}(\config{r}v) \rightarrow_\beta \config{r-s}(\pseudoinverse{f}(v))
\end{align*}
where $s$ is the unique phase satisfying $f(\pseudoinverse{f}(v)) \rightarrow^\ast \config{s}v$.

\subsection{Data Structures}
\label{sec:data-structures}

We can layer data structures on top of $\lambdaP$ types using techniques similar to data structures in Quipper~\citep{green2013quipper} and QWIRE~\citep{paykin2017qwire}. For the sake of this paper, we will focus on the type $\tau^n \cong \tau \ptensor \cdots \ptensor \tau$, which is often useful in real applications. Notice that this type introduces some limited dependent type such as the type $\FinNat{n}$ of natural numbers less than $n$, which we assume are supported by the classical components of $\lambdaP$.

The typing rules for $\tau^n$ are just a generalization of $\cprod$. Here, $\delta_{i_1,i_2}$ is $1$ if $i_1=i_2$ and $0$ otherwise.
\begin{align*}\begin{array}{c}
    \inferrule*
    {\Gamma \vdashT m : \FinNat{n} \\ \Gamma;\Theta \vdashPL t : \tau}
    {\Gamma;\Theta \vdashPL \iota_m(t) : \tau^n}
  \quad\qquad
    \inferrule*
    {\Gamma; \Theta \vdashT t : \tau^n \\
     \Gamma,i:\FinNat{n}; x:\tau \vdashPL t' : \tau' \\
    }
    {\Gamma; \Theta \vdashPL \caseof{t}{\iota_i(x) \mapsto t'} : \tau' }
\end{array}\end{align*}

\section{Case Study: Stabilizer Error-correcting Codes}
\label{sec:examples}

In this section we explore how programming quantum algorithms with $\lambdaP$ arises from thinking about the action of Cliffords on Paulis rather than states. The results are cleaner and more compact programs, compared to their circuit analogues, that convey some intuition about the action of quantum algorithms on Paulis, which still being implementable on real devices. For this case study, we focus specifically on stabilizer error-correcting codes.

We start by re-introducing the more practical programming style syntax used in the introduction, but now  now in the case of general qudits. We provide a formal translation between the two syntaxes in \cref{appendix:glossary}. After that, we introduce the basics of stabilizer codes, which include three main unitary programming tasks, all of which are Clifford: encoding, performing logical operations, and preparing for the syndrome measurement.\footnote{Since $\lambdaP$ only encompasses unitary operations, we separate the preparation of syndrome measurement from the measurement itself. We also elide the details of decoding, which are generally classical.}

\subsection{Qu\(d\)it Standard Library}

The single-qudit Paulis are defined as follows:

\noindent
\begin{minipage}{0.32\textwidth}
\begin{lstlisting}
    X :: |^Pauli^|
    X =  |^[1,0]^|
\end{lstlisting}
\end{minipage}
\begin{minipage}{0.32\textwidth}
\begin{lstlisting}
    Y :: |^Pauli^|
    Y =  |^[1,1]^|
\end{lstlisting}
\end{minipage}
\begin{minipage}{0.32\textwidth}
\begin{lstlisting}
    Z :: |^Pauli^|
    Z =  |^[0,1]^|
\end{lstlisting}
\end{minipage}
%

The multi-qudit Clifford group is generated from: the quantum fourier transform, which generalizes the Hadamard gate; the phase-shift gate, which generalizes $S$; and the two-qudit SUM gate, which generalizes CNOT~\citep{farinholt2014ideal}. All three of these can be expressed naturally in $\lambdaP$.
\begin{center}
\begin{minipage}[t]{0.26\textwidth}
\begin{lstlisting}[basicstyle=\footnotesize]
qft :: |^Pauli -o Pauli^|
qft |^X^| = Z
qft |^Z^| = pow(X, -1)
\end{lstlisting}
\end{minipage}
\begin{minipage}[t]{0.28\textwidth}
\begin{lstlisting}[basicstyle=\footnotesize]
phase :: |^Pauli -o Pauli^|
phase |^X^| = Y
phase |^Z^| = Z
\end{lstlisting}
\end{minipage}
\begin{minipage}[t]{0.41\textwidth}
\begin{lstlisting}[basicstyle=\footnotesize]
sum :: |^Pauli ** Pauli -o Pauli ** Pauli^|
sum |^in1 X^| *= in2 X
sum |^in1 Z^| *= I
sum |^in2 X^| *= I
sum |^in2 Z^| *= in1 pow(Z,-1)
\end{lstlisting}
\end{minipage}
\end{center}
In the definition of \lstinline{sum}, we write \lstinline{f |^p^| *= P} as shorthand for \lstinline{f |^p^| = p * P}; this is a recurring pattern in many examples.

Other useful parametric operations compose projective Cliffords in sequence and in parallel. The parentheses indicate an infix operation.
\begin{lstlisting}[basicstyle=\footnotesize]
    (.) :: |^ tau2 -o tau3 ^| -> |^ tau1 -o tau2 ^| -> |^ tau1 -o tau3 ^|
    g . f |^q^| = g (f q)
        
    (**) :: |^ tau1 -o tau2 ^| -> |^ tau1' -o tau2' ^| -> |^ tau1 ** tau1' -o tau2 ** tau2' ^|
    f ** g |^in1 q ^| = in1 (f q)
    f ** g |^in2 q'^| = in2 (g q')
\end{lstlisting}

$\lambdaP$ functions don't need to be defined by pattern matching. Consider for example that every every Pauli unitary $P$ is also in the Clifford group because it satisfies $P Q P^{\dagger}=\tau^{\symplecticform{P}{Q}} Q$:
\begin{lstlisting}[basicstyle=\footnotesize]
    pauliToClifford :: |^tau^| -> |^ tau -o tau ^|
    pauliToClifford p |^q^| = <omega p q> q
\end{lstlisting}

Another useful observation is that for any $n$-qudit Pauli operator $P$, the block-diagonal matrix 
$\mathrm{diag}(I,P,\cdots,P^{d-1})$ is also Clifford, generalizing $CNOT$ and $SUM$ even further.
This ``controlled Pauli'' can be expressed in $\lambdaP$ as follows:
\begin{lstlisting}[basicstyle=\footnotesize]
    control-pauli :: |^ tau ^| -> |^ Pauli ** tau -o Pauli ** tau ^|
    control-pauli p |^ in1 X ^| *= in2 p
    control-pauli p |^ in1 Z ^| *= I
    control-pauli p |^ in2 q ^| *= in1 (pow(Z,omega p q))
\end{lstlisting}
Note that the symplectomorphism check succeeds for the final line because:
\begin{align*}
    \omega([[\omega(p,x_1), 0], x_1], [[\omega(p,x_2), 0], x_2])
    &\lrequiv \omega([\omega(p,x_1), 0], [\omega(p,x_2), 0]) + \omega(x_1, x_2)) \\
    &\lrequiv 0 + \omega(x_1,x_2) \lrequiv \omega(x_1,x_2)
\end{align*}

\subsection{Basics of Stabilizer Codes}

A stabilizer code is a quantum error-correcting code characterized by an Abelian group $\mathbf{S} \subseteq \Pauligroup$ of stabilizers, with $\tau^{s} I \in \mathbf{S}$ if and only if $s=0$. 
The fact that $\mathbf{S}$ is Abelian means that all of the Paulis in $\mathbf{S}$ commute with each other. 
The code space of such a stabilizer code is the set of states stabilized by $\mathbf{S}$, \ie $\mathcal{C}(\mathbf{S}) = \{ \ket{\phi} \mid \forall P \in \mathbf{S}.~ P \ket{\phi}=\ket{\phi} \}$.

We use the meta-variable $n$ to refer to the number of physical qubits, $k$ for the number of logical qubits, and $r$ for the number of stabilizer generators $S_0,\ldots,S_{r-1}$ in an $[[n,k]]$ code. Note that we always have $k+r=n$~\citep[Chapter~3]{gottesman2024surviving}.
For additional details, see \citet{gottesman2024surviving}.

\subsection{Encoders} \label{sec:encoder}

An encoder of an $[[n,k]]$ quantum error correction code $\mathbf{S}$ is an $n$-qudit unitary $U_e$ that maps $k$ logical qudits $\ket{\phi}$ to $n=r+k$ physical qudits  $\overline{\ket{\phi}} = U_e(\ket{0}^{\otimes r} \otimes \ket{\phi})$.

Consider a stabilizer code $\mathbf S$ and a Clifford $U$ on $\mathbb{C}^{2^n}$ that satisfies $U Z_i U^\dagger = S_i$ for each of the $r$ stabilizer generators $S_i$. Then for any $\ket{\phi} \in \mathbb{C}^{2^k}$, it follows that
\begin{align*}
    S_i (U (\ket{0}^{\otimes r} \otimes \ket{\phi}))
    = U Z_i (\ket{0}^{\otimes r} \otimes \ket{\phi}))
    = U (\ket{0}^{\otimes r} \otimes \ket{\phi}))
\end{align*}
Thus $U (\ket{0}^{\otimes r} \otimes \ket{\phi}) \in \mathcal{C}(\mathbf{S})$, and so $U$ is an encoder for $\mathbf S$.


In order to fully define $U$ as a projective Clifford, we also need to define its action on all $Z_i$ (not just for $i < r$) and $X_i$. In other words, we need to extend the set $\{S_1,\ldots,S_r\}$ of stabilizer generators with additional operators $\{S_{r+1},\ldots,S_n\}$ as well as $\{T_1,\ldots,T_{n}\}$ such that the sets $\{S_i\}$ and $\{T_i\}$ are each commutative and each $S_i$ anticommutes with each $T_i$.
The operators $S_{r+1},\cdots,S_n$ stabilize the logical $\ket{0}$ state, and determine the rest of the logical computational basis states up to phase.

Assume that in $\lambdaP$ we encode our operators $\{S_i\}$ and $\{T_i\}$ as functions from \lstinline{Nat n}---the type of natural numbers less than $n$---to a Pauli type.
Then we can define our encoder as follows:
\begin{center}
\begin{minipage}[t]{0.27\textwidth}
\begin{lstlisting}[basicstyle=\footnotesize]
stab :: Nat n -> |^ Pauli^n ^|
...
\end{lstlisting}
\end{minipage}
\begin{minipage}[t]{0.3\textwidth}
\begin{lstlisting}[basicstyle=\footnotesize]
destab :: Nat n -> |^ Pauli^n ^|
...
\end{lstlisting}
\end{minipage}
\begin{minipage}[t]{0.3\textwidth}
\begin{lstlisting}[basicstyle=\footnotesize]
encoder ::|^ Pauli^n -o Pauli^n ^|
encoder |^ in i Z ^| = stab i
encoder |^ in i X ^| = destab i
\end{lstlisting}
\end{minipage}
\end{center}


Note that it should be possible to give a partial definition for the Steane encoder by omitting the action on the $T_i$ destabilizers and have the $\lambdaP$ compiler generate suitable choices for the action of the encoder on $\iota_i Z$. Indeed, such a partial definition could be procedurally incorporated into $\lambdaP$ in a way that is consistent between invocations using the symplectic Gram-Schmidt algorithm~\citep{Sil08}, though we leave that for future work. In that case we would give the following partial definition:
\begin{lstlisting}[basicstyle=\footnotesize]
    encoderPartial :: |^ Pauli^k -o Pauli^n ^|
    encoderPartial |^ in i Z ^| = stab i
\end{lstlisting}

\begin{example}
    The Steane code is a $[[7,1]]$ CSS code for qubits, i.e. it encodes 1 logical qubit using seven physical qubits, over which all Clifford gates can be implemented transversally. 
    Its encoder is expressed in \cref{fig:steane-encoder} in terms of its six stabilizer generators.
\end{example}

\begin{figure}
    \centering
    \begin{center}
\begin{minipage}[t]{0.45\textwidth}
\begin{lstlisting}[basicstyle=\footnotesize]
steaneStabilizer :: Nat 7 -> |^ Pauli^7 ^|
steaneStabilizer 0 = X ** I ** I ** X ** X ** X ** I
steaneStabilizer 1 = I ** X ** I ** X ** I ** X ** X
steaneStabilizer 2 = I ** I ** X ** I ** X ** X ** X
steaneStabilizer 3 = Z ** I ** I ** Z ** Z ** Z ** I
steaneStabilizer 4 = I ** Z ** I ** Z ** I ** Z ** Z
steaneStabilizer 5 = I ** I ** Z ** I ** Z ** Z ** Z
-- S6 commutes with S0-S5
steaneStabilizer 6 = Z ** Z ** Z ** Z ** Z ** Z ** Z
\end{lstlisting}
\end{minipage}
\begin{minipage}[t]{0.51\textwidth}
\begin{lstlisting}[basicstyle=\footnotesize]
steaneEncoder :: |^ Pauli^7 -o Pauli^7 ^|
steaneEncoder |^in i Z^| = steaneStabilizer i
--destabilizers obtained from stabilizers by
-- exchanging X and Z on the last qubit
steaneEncoder |^in i X^| = inj qft 6 (steaneStabilizer i)
  where
    inj :: |^ tau1 -o tau2 ^| -> Nat n -> |^ tau_1^n -o tau_2^n ^|
    inj f i |^in j q^| = if i==j then in j (f q)
                         else in j q
\end{lstlisting}
\end{minipage}
\end{center}
    \caption{An encoder for the qudit Steane code}
    \Description{Left: a function \texttt{steaneStabilizer} of type \texttt{Nat 7 -> | Pauli\^^7 |}, which sends index $i$ to the $ith$ Pauli stabilizer. Right: a projective Clifford function \texttt{steaneEncoder} which sends \texttt{in i Z} to \texttt{steaneStabilizer i}.}
    \label{fig:steane-encoder}
\end{figure}

\subsection{Logical Operators}

Any projective Clifford $f$ on logical qubits can be implemented by inverting the encoder as follows:
\begin{lstlisting}[basicstyle=\footnotesize]
    logicalOperator :: |^ Pauli^k -o Pauli^k ^| -> |^ Pauli^n -o Pauli^n ^|
    logicalOperator f = encoder . f . encoder^-1
\end{lstlisting}

We can also define transversal operators directly, where a logical operator is implemented by applying that operator to every physical qubit.
\begin{lstlisting}[basicstyle=\footnotesize]
    transversal :: |^ tau_1^m -o tau_2^m ^| -> |^ (tau_1^n)^m -o (tau_2^n)^m ^|
    transversal f |^ in j (in i q) ^| = map (in i) (f (in j q))
      where
        map :: |^ tau_1 -o tau_2 ^| -> |^ tau_1^m -o tau_2^m ^|
        map g |^ in j q' ^| = in j (g q')
\end{lstlisting}

\subsection{Syndrome Preparation}

Lastly, in order to detect and correct errors we must prepare and then measure an error syndrome.

Syndrome measurement involves measuring the eigenvalue of each of the $r$ stabilizer generators $S_i$ into the $r$ ancilla qubits. In particular, suppose a codeword $\ket{\phi} \in \mathcal{C}(\mathbf{S})$ is subjected to a Pauli error $E$. Then measuring the eigenvalue of $S_i$ results in:
\begin{align*}
    \bra{\phi} E^\dagger S_i E \ket{\phi}
    = \bra{\phi} \zeta^{\omega(S_i, E)} E^\dagger E S_i \ket{\phi}
    =  \zeta^{\omega(S_i, E)} \braket{\phi \mid \phi} = \zeta^{\omega(S_i, E)}
\end{align*}

\begin{figure}
    \centering
\begin{subfigure}{0.39\textwidth}
\centering
\begin{tikzpicture} \node[scale=0.6] {
\begin{quantikz}[column sep=0.3cm,row sep=0.1cm]
    \lstick[3]{$r$} &\gate{\qft} & \ctrl{3}      & \gate{\qft^{-1}}&        &             &                & & \\
    \setwiretype{n} & \vdots     &               &                 & \ddots &             &                & & \\
                    &            &               &                 &        & \gate{\qft} & \ctrl{1}       & \gate{\qft^{-1}}& \\
                    &\qwbundle{n}& \gate{S_0}    &                 &        &             & \gate{S_{r-1}} & & \\
\end{quantikz}
}; \end{tikzpicture}
\end{subfigure}%
~
\begin{subfigure}{0.59\textwidth}
\begin{lstlisting}[basicstyle=\footnotesize]
    syndromePrep :: |^ Pauli^r ** Pauli^n -o Pauli^r ** Pauli^n ^|
    syndromePrep |^ in1 (in i X) ^| *= I
    syndromePrep |^ in1 (in i Z) ^| *= in2 (pow (stab i, d-1))
    syndromePrep |^ in2 q ^|        *=
        star r (\j -> pow(in1 (in j X), omega(stab j, q)))
\end{lstlisting}
\end{subfigure}
    \caption{Stabilizer syndrome preparation as an informal circuit diagram (left) and as a $\lambdaP$ program (right). 
        $\text{star}~r~f$ is the $r$-fold $\cprod$-product $f(0) \cprod \cdots \cprod f (m-1)$.
    }
    \Description{Left: A circuit implementing stabilizer syndrome preparation made up of \texttt{qft} gates and controlled-$S_i$ gates. Right: a projective Clifford function \texttt{syndromePrep} expressed in $\lambdaP$.}
    \label{fig:syndrome-preparation}
\end{figure}

\noindent
Measuring each stabilizer results in a vector $(\omega(S_0,E),\ldots,\omega(S_{r-1},E))$ called the \emph{syndrome}, which is used for detection and/or correction. 
Here we only consider the unitary part of this circuit (\cref{fig:syndrome-preparation}), which we call syndrome preparation. It is made up of repeated calls to controlled stabilizers.

If we expand out this operation as a projective Clifford, we see some patterns. For $0 \le i < r$, the action of this operator sends $X_i$ to $X_i$ and $Z_i$ to $Z_i \cprod S_i^{d-1}$. On the other hand, it sends Paulis  $I \otimes Q$ (with support entirely on the $n$ physical qudits), to $(I \otimes Q) \cprod X_0^{\omega(S_0,Q)} \cprod \cdots \cprod X_{r-1}^{\omega(S_{r-1}, Q)}$.

\section{Related and Future Work}
\label{sec:related-work}

\subsection{Related Work in Quantum Programming Languages}

The majority of quantum programming languages today are based on the quantum gate model, where programs are formed from primitive quantum gates together with classical~\citep{selinger2009quantum,green2013quipper,altenkirch2010quantum,khalate2022llvm} or quantum~\citep{Badescu2015quantum,chiribella2013quantum,voichick2023qunity} control flow. $\lambdaP$ takes a different approach, inspired by quantum programming languages based the linear-algebraic structure of quantum computing.

\paragraph{Linear-algebraic Programming Languages}

The $\lambdaC$ calculus is closely related to a line of work on linear-algebraic lambda calculi stemming from QML~\citep{altenkirch2005qml} and Lineal~\citep{arrighi2017lineal}, where terms in the calculus correspond to linear transformations over vectors spaces.
%
More recently, \citet{diazcaro2024linear} combined a linear-algebraic lambda calculus with a linear logic, which allows them to prove well-typed functions are linear instead of defining function application pointwise. In contrast to their work, $\lambdaC$ does not include all of the connectives of intuitionistic multiplicative linear logic, notably the tensor product $\otimes$. The tensor product is not necessary for programming projective Cliffords, and by excluding it we avoid much of the complexity of their calculus. For example, while sums of values of type $\alpha_1 \oplus \alpha_2$ can be combined e.g. $[v_1,v_2] + [v_1',v_2'] \rightarrow [v_1+v_1',v_2+v_2']$, the same cannot be done for sums of values of type $\alpha_1 \otimes \alpha_2$. As such, their type system does not allow reduction under $t_1 + t_2$ if $t_1$ and $t_2$ have type $\alpha_1 \otimes \alpha_2$.
%
Both type systems have equational theories based on logical relations, and both satisfy the fact that $f(a_1+a_2)=f(a_1)+f(a_2)$.



Linear-algebraic lambda calculi are closely related to quantum computing. QML~\citep{altenkirch2005qml}, Lambda-$S_1$~\citep{diaz2019realizability}, and symmetric pattern matching calculi~\citep{sabry2018symmetric} limit linear transformations to unitary ones by ensuring that the branches of a quantum case statement are appropriately orthogonal. These orthogonality checks mirror the symplectic form condition in $\lambdaP$---we too are restricting linear transformations to a particular shape. However, as these other languages target unitary transformations, their orthogonality checks involve simulating a linear transformation on vectors of size $2^n$. In contrast, the orthogonality check on symplectic encodings only involves vectors of size $2n$.

\paragraph{Pauli-based Programming and Optimization}

We can also compare this work to representations of quantum algorithms based on Paulis, such as Pauli tableaux~\citep{aaronson2004,schmitz2024graph}, Pauli exponentials/Pauli rotations~\citep{li2022paulihedral,Zhang2019,PaykinSchmitz2023pcoast}, phase polynomials~\citep{amy2018towards,amy2019formal}, and the ZX calculus~\citep{kissinger2020}. These representations have been used in the verification, optimization, and synthesis of quantum algorithms, but are not necessarily well-suited for programming. For one, they do not generally support abstractions like data structures and parametricity. For another, while phase polynomials and ZX-calculus processes can be used for circuit synthesis, they also include operations that are not physically realizable: for example, synthesizing a circuit from a ZX diagram is \#P-hard in general~\citep{debeaudrap2022circuit}. Pauli exponentials are synthesizable into circuits in a straightforward way, but they are only fully general for unitaries. Pauli exponentials can be extended beyond unitaries, for example in PCOAST~\citep{PaykinSchmitz2023pcoast,schmitz2023optimization}, but only for primitive preparation and measurement operators.

The lines of work most closely-related to $\lambdaP$ are Gottesman types~\citep{rand2020gottesman} and Heisenberg logic~\citep{sundaram2023hoare}. In those works, the projective action of Cliffords are used as a specification for Clifford circuits; for example, the Hadamard gate is given the specification that it maps $X$ to $Z$ and vice versa. The advantage of this approach is that it can be used for \emph{partial} specification, allowing for properties like separability as well as non-Clifford gates~\citep{rand2021extending}. While $\lambdaP$ focuses on describing high-level algorithms and allowing a compiler to efficiently generate the circuits, Gottesman types and Heisenberg logic focus on verification of circuits that already exist, which may have additional constraints such as limited connectivity or depth. These two approaches are complimentary however; $\lambdaP$ could be used to inspire a richer specification language for Gottesman types, and Heisenberg logic could be used to verify the correctness of a $\lambdaP$ compiler.

\paragraph{Qu\(d\)it Quantum Programming}

In recent years there has been significant interest in higher-dimensional (qudit) systems, inspired by the capabilities of real hardware~\citep{chi2022programmable,hrmo2023native,fischer2023universal}. Simulations confirm that qubit-based implementations of quantum algorithms can be significantly more efficient compared to their qubit versions~\citep{litteken2022communication,nadkarni2021quantum,keppens2025qudit}. As such,  researchers have investigated techniques for compiling qubit quantum programs to qudit circuits using a variety of techniques including gatewise transpilation~\citep{drozhzhin2024transpiling}, phase polynomials~\citep{heyfron2019quantum}, and unitary decomposition~\citep{mato2023compilation,volya2023qudcom,lysaght2024quantum,sharma2024compilation,kiktenko2025qudits}.

While some gate-based programming frameworks support qudit simulation and development~\citep{mato2023mixed,chatterjee2023QuDiet,volya2023qudcom,deSouzaFarias2025QuForge}, to our knowledge $\lambdaP$ is the first in the style of linear-algebraic programming languages.

\subsection{Future Work}
\label{sec:future-work}

\paragraph{Implementation}

The examples given in this paper are written in pseudocode, and an implementation of $\lambdaP$ is a next logical step. Such an implementation should include both an interpreter and a compiler to Pauli tableaux and thus circuits, as described in \cref{sec:compilation}.

An efficient type checker is necessary for practical use of $\lambdaP$, as checking the symplectomorphism relation is tricky by hand.  Consider a judgment $x_1:\tau_1,x_2:\tau_2 \vdashP \symplecticform[\tau']{t_1}{t_2} \lrequiv r$. Naively checking this condition would require evaluating the symplectic form on each of the $2^{|\tau_1|}\times 2^{|\tau_2|}$ basis values of type $\tau_1$ and $\tau_2$ respectively, where the cost of each evaluation is unknown.
Instead, we propose implementing symplectomorphism-checking via symbolic evaluation, which would require just a single call to an SMT solver. 

\paragraph{Beyond Cliffords}
\label{sec:beyond-cliffords}

As discussed in the introduction, $\lambdaP$ is not universal for quantum computing since it focuses solely on the Clifford group. However, we speculate that there are several directions in which to extend the language to encompass general quantum computation.

One avenue is to integrate $\lambdaP$ into representations of Pauli exponentials such as PCOAST~\citep{PaykinSchmitz2023pcoast}. PCOAST is an optimization framework for universal quantum computation where typical gates are replaced with Pauli-based components---Pauli rotations (unitaries of the form $e^{-i P \theta}$), Pauli tableaux, and generalized versions of Pauli measurements and state preparation. Existing PCOAST representations are not well-suited for programming, as they are not parametric on the number of qubits or modular over their inputs. We hypothesize that by replacing the tableaux with $\lambdaP$ expressions, we could very naturally express programs for universal quantum computing in the style of PCOAST terms but with the added benefit of compositionality, modularity, and type checking.

PCOAST supports measurement and preparation, which we note \emph{are} efficiently simulatable on Pauli tableaux, with subtleties for even dimensions $d$. Incorporating measurement into $\lambdaP$, even without arbitrary Pauli rotations, would be extremely useful for applications like error correction.

Another approach towards universaility is to extend $\lambdaP$ to be parametric in the dimension $d$ of \qudits (\cref{sec:why-qudits}), which would extend $\lambdaP$s computational power along the lines of mixed-dimensional quantum computing~\citep{mato2023mixed,mato2023compression,volya2023qudcom,chatterjee2023QuDiet,poor2023completeness}. 
Clifford unitaries in higher dimensions are not all Clifford in lower dimensions: for example, the 2 qubit Fourier transform $F_4$ is single-\qudit Clifford (with $d=4$) but not multi-qubit Clifford. In fact, $F_4$ together with $F_2=H$ is universal.


A third avenue to explore is moving up the Clifford hierarchy. The Clifford hierarchy starts with the Pauli group, and the $(k+1)$st level of the hierarchy maps the Pauli group to the $k$th level:
\begin{align*}
    \Clf~^0_{d,n} &\triangleq \Pauligroup 
    &
    \Clf~^{k+1}_{d,n} &\triangleq \{U \mid \forall P \in \Pauligroup.~ U P U^\dagger \in \Clf~^k_{d,n} \}
\end{align*}
Extending this to a programming abstraction, if a projective Clifford is a map from a single Pauli to the Pauli group, then a projective $k$th-level Clifford is a map from $k$ Paulis to the Pauli group.
We hypothesize that a $k$th-level Clifford could be an expression in a variant of $\lambdaP$ with $k$ free variables: $x_1:\tau_1,\ldots,x_k:\tau_k \vdashP t : \tau$. For example, the $T$ gate acts on two single-\qudit Paulis as follows:
\begin{lstlisting}
T :: |^ Pauli -o Pauli -o Pauli ^|
T |^Z^| |^q^| = pauliToClifford Z q 
T |^X^| |^X^| = Y
T |^X^| |^Z^| = <1>Z 
\end{lstlisting}

There are challenges with this approach, however, as the Clifford hierarchy is neither very well-behaved or well-understood. Condensed encodings would need to be extended up the Clifford hierarchy, perhaps using observations by \citet{chen2024characterizing}. Higher levels of the Clifford hierarchy do not form a group, and are not closed under composition. 
%
Regardless, the Clifford hierarchy \emph{is} universal in that compositions of Clifford hierarchy gates are enough to simulate any unitary to arbitrary accuracy.
It would be sufficient to extend the type system to a universal superset of the Clifford group by including a single non-Clifford operation like $T$, or extending it to a subclass such as 3-Cliffords or $k$-level semi-Cliffords~\citep{chen2024characterizing,de_Silva_2021}.


\subsection{Conclusion}

This paper presents a type system for programming projective Clifford unitaries as functions on the Pauli group. We establish a Curry-Howard correspondence with the category $\PCat$ of projective Clifford encodings built out of $\Zd$-linear maps and symplectic morphisms. We present the type systems of $\lambdaC$ and $\lambdaP$ and prove they are well-behaved with respect to the categorical semantics. Finally, we explore what it is like to program with $\lambdaP$ via extensions and examples.

\begin{acks} 
Many thanks to Albert Schmitz and Jon Yard for numerous fruitful discussions about this work.
In addition, we would like to thank the anonymous reviewers for their extremely constructive feedback.
The work of Sam Winnick was supported in part by the NSERC Discovery under Grant No. RGPIN-2018-04742 and the NSERC project FoQaCiA under Grant No. ALLRP-569582-21.
\end{acks}

\bibliographystyle{ACM-Reference-Format}
\bibliography{bibliography}


\begin{thebibliography}{66}


\ifx \showCODEN    \undefined \def \showCODEN     #1{\unskip}     \fi
\ifx \showISBNx    \undefined \def \showISBNx     #1{\unskip}     \fi
\ifx \showISBNxiii \undefined \def \showISBNxiii  #1{\unskip}     \fi
\ifx \showISSN     \undefined \def \showISSN      #1{\unskip}     \fi
\ifx \showLCCN     \undefined \def \showLCCN      #1{\unskip}     \fi
\ifx \shownote     \undefined \def \shownote      #1{#1}          \fi
\ifx \showarticletitle \undefined \def \showarticletitle #1{#1}   \fi
\ifx \showURL      \undefined \def \showURL       {\relax}        \fi
\providecommand\bibfield[2]{#2}
\providecommand\bibinfo[2]{#2}
\providecommand\natexlab[1]{#1}
\providecommand\showeprint[2][]{arXiv:#2}

\bibitem[Aaronson and Gottesman(2004)]%
        {aaronson2004}
\bibfield{author}{\bibinfo{person}{Scott Aaronson} {and}
  \bibinfo{person}{Daniel Gottesman}.} \bibinfo{year}{2004}\natexlab{}.
\newblock \showarticletitle{Improved simulation of stabilizer circuits}.
\newblock \bibinfo{journal}{\emph{Physical Review A—Atomic, Molecular, and
  Optical Physics}} \bibinfo{volume}{70}, \bibinfo{number}{5}
  (\bibinfo{year}{2004}), \bibinfo{pages}{052328}.
\newblock
\href{https://doi.org/10.1103/physreva.70.052328}{doi:\nolinkurl{10.1103/physreva.70.052328}}


\bibitem[Altenkirch and Grattage(2005)]%
        {altenkirch2005qml}
\bibfield{author}{\bibinfo{person}{Thorsten Altenkirch} {and}
  \bibinfo{person}{Jonathan Grattage}.} \bibinfo{year}{2005}\natexlab{}.
\newblock \bibinfo{title}{QML: Quantum data and control}.
\newblock


\bibitem[Altenkirch and Green(2010)]%
        {altenkirch2010quantum}
\bibfield{author}{\bibinfo{person}{Thorsten Altenkirch} {and}
  \bibinfo{person}{Alexander Green}.} \bibinfo{year}{2010}\natexlab{}.
\newblock \showarticletitle{The {Quantum} {IO} {Monad}}.
\newblock In \bibinfo{booktitle}{\emph{Semantic Techniques in Quantum
  Computation}}, \bibfield{editor}{\bibinfo{person}{Simon Gay} {and}
  \bibinfo{person}{Ian McKie}} (Eds.). \bibinfo{publisher}{Cambridge University
  Press}, \bibinfo{pages}{173--205}.
\newblock


\bibitem[Amy(2019a)]%
        {amy2019formal}
\bibfield{author}{\bibinfo{person}{Matthew Amy}.}
  \bibinfo{year}{2019}\natexlab{a}.
\newblock \emph{\bibinfo{title}{Formal methods in quantum circuit design}}.
\newblock \bibinfo{thesistype}{Ph.\,D. Dissertation}.
  \bibinfo{school}{University of Waterloo}.
\newblock


\bibitem[Amy(2019b)]%
        {amy2018towards}
\bibfield{author}{\bibinfo{person}{Matthew Amy}.}
  \bibinfo{year}{2019}\natexlab{b}.
\newblock \showarticletitle{Towards Large-scale Functional Verification of
  Universal Quantum Circuits}.
\newblock \bibinfo{journal}{\emph{Electronic Proceedings in Theoretical
  Computer Science}}  \bibinfo{volume}{287} (\bibinfo{date}{Jan.}
  \bibinfo{year}{2019}), \bibinfo{pages}{1–21}.
\newblock
\showISSN{2075-2180}
\href{https://doi.org/10.4204/eptcs.287.1}{doi:\nolinkurl{10.4204/eptcs.287.1}}


\bibitem[Arrighi and Dowek(2017)]%
        {arrighi2017lineal}
\bibfield{author}{\bibinfo{person}{Pablo Arrighi} {and} \bibinfo{person}{Gilles
  Dowek}.} \bibinfo{year}{2017}\natexlab{}.
\newblock \showarticletitle{Lineal: A linear-algebraic lambda-calculus}.
\newblock \bibinfo{journal}{\emph{Logical Methods in Computer Science}}
  \bibinfo{volume}{13} (\bibinfo{year}{2017}).
\newblock


\bibitem[B{\u{a}}descu and Panangaden(2015)]%
        {Badescu2015quantum}
\bibfield{author}{\bibinfo{person}{Costin B{\u{a}}descu} {and}
  \bibinfo{person}{Prakash Panangaden}.} \bibinfo{year}{2015}\natexlab{}.
\newblock \showarticletitle{Quantum Alternation: Prospects and Problems}.
\newblock \bibinfo{journal}{\emph{Electronic Proceedings in Theoretical
  Computer Science}}  \bibinfo{volume}{195} (\bibinfo{date}{Nov.}
  \bibinfo{year}{2015}), \bibinfo{pages}{33–42}.
\newblock
\showISSN{2075-2180}
\href{https://doi.org/10.4204/eptcs.195.3}{doi:\nolinkurl{10.4204/eptcs.195.3}}


\bibitem[Benton(1994)]%
        {benton1994mixed}
\bibfield{author}{\bibinfo{person}{P~Nick Benton}.}
  \bibinfo{year}{1994}\natexlab{}.
\newblock \showarticletitle{A mixed linear and non-linear logic: Proofs, terms
  and models}. In \bibinfo{booktitle}{\emph{International Workshop on Computer
  Science Logic}}. \bibinfo{publisher}{Springer}, \bibinfo{address}{Kazimirez,
  Poland}, \bibinfo{pages}{121--135}.
\newblock


\bibitem[Borgna et~al\mbox{.}(2021)]%
        {borgna2021hybrid}
\bibfield{author}{\bibinfo{person}{Agust{\'\i}n Borgna}, \bibinfo{person}{Simon
  Perdrix}, {and} \bibinfo{person}{Beno{\^\i}t Valiron}.}
  \bibinfo{year}{2021}\natexlab{}.
\newblock \showarticletitle{Hybrid quantum-classical circuit simplification
  with the {ZX}-calculus}. In \bibinfo{booktitle}{\emph{Programming Languages
  and Systems: 19th Asian Symposium, APLAS 2021, Chicago, IL, USA, October
  17--18, 2021, Proceedings 19}}. Springer, \bibinfo{pages}{121--139}.
\newblock


\bibitem[Brandl(2024)]%
        {brandl2024efficient}
\bibfield{author}{\bibinfo{person}{Nina Brandl}.}
  \bibinfo{year}{2024}\natexlab{}.
\newblock \emph{\bibinfo{title}{Efficient and Noise-aware Stabilizer Tableau
  Simulation of Qudit Clifford Circuits}}.
\newblock \bibinfo{thesistype}{Master's\ thesis}. \bibinfo{school}{Johannes
  Kepler Universit\"{a}t Linz}.
\newblock


\bibitem[Chatterjee et~al\mbox{.}(2023)]%
        {chatterjee2023QuDiet}
\bibfield{author}{\bibinfo{person}{Turbasu Chatterjee}, \bibinfo{person}{Arnav
  Das}, \bibinfo{person}{Subhayu~Kumar Bala}, \bibinfo{person}{Amit Saha},
  \bibinfo{person}{Anupam Chattopadhyay}, {and} \bibinfo{person}{Amlan
  Chakrabarti}.} \bibinfo{year}{2023}\natexlab{}.
\newblock \showarticletitle{{QuDiet}: A classical simulation platform for
  qubit-qudit hybrid quantum systems}.
\newblock \bibinfo{journal}{\emph{IET Quantum Communication}}
  \bibinfo{volume}{4}, \bibinfo{number}{4} (\bibinfo{year}{2023}),
  \bibinfo{pages}{167--180}.
\newblock
\href{https://doi.org/10.1049/qtc2.12058}{doi:\nolinkurl{10.1049/qtc2.12058}}


\bibitem[Chen and {de Silva}(2024)]%
        {chen2024characterizing}
\bibfield{author}{\bibinfo{person}{Imin Chen} {and} \bibinfo{person}{Nadish {de
  Silva}}.} \bibinfo{year}{2024}\natexlab{}.
\newblock \bibinfo{title}{Characterising Semi-{{Clifford}} Gates Using
  Algebraic Sets}.
\newblock
\href{https://doi.org/10.48550/arXiv.2309.15184}{doi:\nolinkurl{10.48550/arXiv.2309.15184}}
\showeprint[arxiv]{2309.15184}~[math-ph, physics:quant-ph]


\bibitem[Chi et~al\mbox{.}(2022)]%
        {chi2022programmable}
\bibfield{author}{\bibinfo{person}{Yulin Chi}, \bibinfo{person}{Jieshan Huang},
  \bibinfo{person}{Zhanchuan Zhang}, \bibinfo{person}{Jun Mao},
  \bibinfo{person}{Zinan Zhou}, \bibinfo{person}{Xiaojiong Chen},
  \bibinfo{person}{Chonghao Zhai}, \bibinfo{person}{Jueming Bao},
  \bibinfo{person}{Tianxiang Dai}, \bibinfo{person}{Huihong Yuan},
  {et~al\mbox{.}}} \bibinfo{year}{2022}\natexlab{}.
\newblock \showarticletitle{A programmable qudit-based quantum processor}.
\newblock \bibinfo{journal}{\emph{Nature communications}} \bibinfo{volume}{13},
  \bibinfo{number}{1} (\bibinfo{year}{2022}), \bibinfo{pages}{1166}.
\newblock


\bibitem[Chiribella et~al\mbox{.}(2013)]%
        {chiribella2013quantum}
\bibfield{author}{\bibinfo{person}{Giulio Chiribella},
  \bibinfo{person}{Giacomo~Mauro D’Ariano}, \bibinfo{person}{Paolo
  Perinotti}, {and} \bibinfo{person}{Benoit Valiron}.}
  \bibinfo{year}{2013}\natexlab{}.
\newblock \showarticletitle{Quantum computations without definite causal
  structure}.
\newblock \bibinfo{journal}{\emph{Physical Review A—Atomic, Molecular, and
  Optical Physics}} \bibinfo{volume}{88}, \bibinfo{number}{2}
  (\bibinfo{year}{2013}), \bibinfo{pages}{022318}.
\newblock


\bibitem[{da Silva}(2008)]%
        {Sil08}
\bibfield{author}{\bibinfo{person}{Ana~Cannas {da Silva}}.}
  \bibinfo{year}{2008}\natexlab{}.
\newblock \bibinfo{booktitle}{\emph{Lectures on {{Symplectic Geometry}}}}.
  \bibinfo{series}{Lecture {{Notes}} in {{Mathematics}}},
  Vol.~\bibinfo{volume}{1764}.
\newblock \bibinfo{publisher}{Springer}, \bibinfo{address}{Berlin, Heidelberg}.
\newblock
\showISBNx{978-3-540-42195-5 978-3-540-45330-7}
\href{https://doi.org/10.1007/978-3-540-45330-7}{doi:\nolinkurl{10.1007/978-3-540-45330-7}}


\bibitem[{de Beaudrap}(2013)]%
        {deBeadrap2013linearized}
\bibfield{author}{\bibinfo{person}{Niel {de Beaudrap}}.}
  \bibinfo{year}{2013}\natexlab{}.
\newblock \showarticletitle{A Linearized Stabilizer Formalism for Systems of
  Finite Dimension}.
\newblock \bibinfo{journal}{\emph{Quantum Information and Computation}}
  \bibinfo{volume}{13}, \bibinfo{number}{1\&2} (\bibinfo{date}{Jan.}
  \bibinfo{year}{2013}), \bibinfo{pages}{73--115}.
\newblock
\showISSN{15337146, 15337146}
\href{https://doi.org/10.26421/QIC13.1-2-6}{doi:\nolinkurl{10.26421/QIC13.1-2-6}}
\showeprint[arxiv]{1102.3354}~[quant-ph]


\bibitem[de~Beaudrap et~al\mbox{.}(2022)]%
        {debeaudrap2022circuit}
\bibfield{author}{\bibinfo{person}{Niel de Beaudrap}, \bibinfo{person}{Aleks
  Kissinger}, {and} \bibinfo{person}{John van~de Wetering}.}
  \bibinfo{year}{2022}\natexlab{}.
\newblock \showarticletitle{Circuit Extraction for {ZX}-Diagrams Can Be
  {\#P}-Hard}. In \bibinfo{booktitle}{\emph{49th International Colloquium on
  Automata, Languages, and Programming (ICALP 2022)}}
  \emph{(\bibinfo{series}{Leibniz International Proceedings in Informatics
  (LIPIcs)}, Vol.~\bibinfo{volume}{229})},
  \bibfield{editor}{\bibinfo{person}{Miko{\l}aj Boja\'{n}czyk},
  \bibinfo{person}{Emanuela Merelli}, {and} \bibinfo{person}{David~P.
  Woodruff}} (Eds.). \bibinfo{publisher}{Schloss Dagstuhl -- Leibniz-Zentrum
  f{\"u}r Informatik}, \bibinfo{address}{Dagstuhl, Germany},
  \bibinfo{pages}{119:1--119:19}.
\newblock
\showISBNx{978-3-95977-235-8}
\showISSN{1868-8969}
\href{https://doi.org/10.4230/LIPIcs.ICALP.2022.119}{doi:\nolinkurl{10.4230/LIPIcs.ICALP.2022.119}}


\bibitem[de~Silva(2021)]%
        {de_Silva_2021}
\bibfield{author}{\bibinfo{person}{Nadish de Silva}.}
  \bibinfo{year}{2021}\natexlab{}.
\newblock \showarticletitle{Efficient quantum gate teleportation in higher
  dimensions}.
\newblock \bibinfo{journal}{\emph{Proceedings of the Royal Society A:
  Mathematical, Physical and Engineering Sciences}} \bibinfo{volume}{477},
  \bibinfo{number}{2251} (\bibinfo{date}{July} \bibinfo{year}{2021}).
\newblock
\showISSN{1471-2946}
\href{https://doi.org/10.1098/rspa.2020.0865}{doi:\nolinkurl{10.1098/rspa.2020.0865}}


\bibitem[{de Souza Farias} et~al\mbox{.}(2025)]%
        {deSouzaFarias2025QuForge}
\bibfield{author}{\bibinfo{person}{Tiago {de Souza Farias}},
  \bibinfo{person}{Lucas Friedrich}, {and} \bibinfo{person}{Jonas Maziero}.}
  \bibinfo{year}{2025}\natexlab{}.
\newblock \showarticletitle{{QuForge}: A library for qudits simulation}.
\newblock \bibinfo{journal}{\emph{Computer Physics Communications}}
  \bibinfo{volume}{314} (\bibinfo{year}{2025}), \bibinfo{pages}{109687}.
\newblock
\showISSN{0010-4655}
\href{https://doi.org/10.1016/j.cpc.2025.109687}{doi:\nolinkurl{10.1016/j.cpc.2025.109687}}


\bibitem[Diaz-Caro and Dowek(2024)]%
        {diazcaro2024linear}
\bibfield{author}{\bibinfo{person}{Alejandro Diaz-Caro} {and}
  \bibinfo{person}{Gilles Dowek}.} \bibinfo{year}{2024}\natexlab{}.
\newblock \bibinfo{title}{A linear linear lambda-calculus}.
\newblock
\showeprint[arxiv]{2201.11221}~[cs.LO]


\bibitem[D{\'\i}az-Caro et~al\mbox{.}(2019)]%
        {diaz2019realizability}
\bibfield{author}{\bibinfo{person}{Alejandro D{\'\i}az-Caro},
  \bibinfo{person}{Mauricio Guillermo}, \bibinfo{person}{Alexandre Miquel},
  {and} \bibinfo{person}{Beno{\^\i}t Valiron}.}
  \bibinfo{year}{2019}\natexlab{}.
\newblock \showarticletitle{Realizability in the unitary sphere}. In
  \bibinfo{booktitle}{\emph{2019 34th Annual ACM/IEEE Symposium on Logic in
  Computer Science (LICS)}}. IEEE, \bibinfo{pages}{1--13}.
\newblock


\bibitem[Drozhzhin et~al\mbox{.}(2024)]%
        {drozhzhin2024transpiling}
\bibfield{author}{\bibinfo{person}{Denis~A. Drozhzhin},
  \bibinfo{person}{Anastasiia~S. Nikolaeva}, \bibinfo{person}{Evgeniy~O.
  Kiktenko}, {and} \bibinfo{person}{Aleksey~K. Fedorov}.}
  \bibinfo{year}{2024}\natexlab{}.
\newblock \showarticletitle{Transpiling Quantum Assembly Language Circuits to a
  Qudit Form}.
\newblock \bibinfo{journal}{\emph{Entropy}} \bibinfo{volume}{26},
  \bibinfo{number}{12} (\bibinfo{year}{2024}).
\newblock
\showISSN{1099-4300}
\urldef\tempurl%
\url{https://www.mdpi.com/1099-4300/26/12/1129}
\showURL{%
\tempurl}


\bibitem[Farinholt(2014)]%
        {farinholt2014ideal}
\bibfield{author}{\bibinfo{person}{JM Farinholt}.}
  \bibinfo{year}{2014}\natexlab{}.
\newblock \showarticletitle{An ideal characterization of the Clifford
  operators}.
\newblock \bibinfo{journal}{\emph{Journal of Physics A: Mathematical and
  Theoretical}} \bibinfo{volume}{47}, \bibinfo{number}{30}
  (\bibinfo{year}{2014}), \bibinfo{pages}{305303}.
\newblock


\bibitem[Fischer et~al\mbox{.}(2023)]%
        {fischer2023universal}
\bibfield{author}{\bibinfo{person}{Laurin~E Fischer},
  \bibinfo{person}{Alessandro Chiesa}, \bibinfo{person}{Francesco Tacchino},
  \bibinfo{person}{Daniel~J Egger}, \bibinfo{person}{Stefano Carretta}, {and}
  \bibinfo{person}{Ivano Tavernelli}.} \bibinfo{year}{2023}\natexlab{}.
\newblock \showarticletitle{Universal qudit gate synthesis for transmons}.
\newblock \bibinfo{journal}{\emph{PRX Quantum}} \bibinfo{volume}{4},
  \bibinfo{number}{3} (\bibinfo{year}{2023}), \bibinfo{pages}{030327}.
\newblock


\bibitem[Fu et~al\mbox{.}(2023)]%
        {fu2023proto}
\bibfield{author}{\bibinfo{person}{Peng Fu}, \bibinfo{person}{Kohei Kishida},
  \bibinfo{person}{Neil~J Ross}, {and} \bibinfo{person}{Peter Selinger}.}
  \bibinfo{year}{2023}\natexlab{}.
\newblock \showarticletitle{Proto-Quipper with dynamic lifting}.
\newblock \bibinfo{journal}{\emph{Proceedings of the ACM on Programming
  Languages}} \bibinfo{volume}{7}, \bibinfo{number}{POPL}
  (\bibinfo{year}{2023}), \bibinfo{pages}{309--334}.
\newblock


\bibitem[Fu et~al\mbox{.}(2020)]%
        {fu2020linear}
\bibfield{author}{\bibinfo{person}{Peng Fu}, \bibinfo{person}{Kohei Kishida},
  {and} \bibinfo{person}{Peter Selinger}.} \bibinfo{year}{2020}\natexlab{}.
\newblock \showarticletitle{Linear dependent type theory for quantum
  programming languages}. In \bibinfo{booktitle}{\emph{Proceedings of the 35th
  Annual ACM/IEEE Symposium on Logic in Computer Science}}.
  \bibinfo{pages}{440--453}.
\newblock


\bibitem[Gottesman(1998)]%
        {gottesman1998heisenberg}
\bibfield{author}{\bibinfo{person}{Daniel Gottesman}.}
  \bibinfo{year}{1998}\natexlab{}.
\newblock \bibinfo{title}{The Heisenberg Representation of Quantum Computers}.
\newblock
\showeprint[arxiv]{quant-ph/9807006}~[quant-ph]
\urldef\tempurl%
\url{https://arxiv.org/abs/quant-ph/9807006}
\showURL{%
\tempurl}


\bibitem[Gottesman(2024)]%
        {gottesman2024surviving}
\bibfield{author}{\bibinfo{person}{Daniel Gottesman}.}
  \bibinfo{year}{2024}\natexlab{}.
\newblock \bibinfo{booktitle}{\emph{Surviving as a Quantum Computer in a
  Classical World}}.
\newblock


\bibitem[Green et~al\mbox{.}(2013)]%
        {green2013quipper}
\bibfield{author}{\bibinfo{person}{Alexander~S Green},
  \bibinfo{person}{Peter~LeFanu Lumsdaine}, \bibinfo{person}{Neil~J Ross},
  \bibinfo{person}{Peter Selinger}, {and} \bibinfo{person}{Beno{\^\i}t
  Valiron}.} \bibinfo{year}{2013}\natexlab{}.
\newblock \showarticletitle{Quipper: a scalable quantum programming language}.
  In \bibinfo{booktitle}{\emph{Proceedings of the 34th ACM SIGPLAN conference
  on Programming language design and implementation}}.
  \bibinfo{pages}{333--342}.
\newblock


\bibitem[Heyfron and Campbell(2019)]%
        {heyfron2019quantum}
\bibfield{author}{\bibinfo{person}{Luke~E. Heyfron} {and} \bibinfo{person}{Earl
  Campbell}.} \bibinfo{year}{2019}\natexlab{}.
\newblock \bibinfo{title}{A quantum compiler for qudits of prime dimension
  greater than 3}.
\newblock
\showeprint[arxiv]{1902.05634}~[quant-ph]
\urldef\tempurl%
\url{https://arxiv.org/abs/1902.05634}
\showURL{%
\tempurl}


\bibitem[Hrmo et~al\mbox{.}(2023)]%
        {hrmo2023native}
\bibfield{author}{\bibinfo{person}{Pavel Hrmo}, \bibinfo{person}{Benjamin
  Wilhelm}, \bibinfo{person}{Lukas Gerster}, \bibinfo{person}{Martin~W van
  Mourik}, \bibinfo{person}{Marcus Huber}, \bibinfo{person}{Rainer Blatt},
  \bibinfo{person}{Philipp Schindler}, \bibinfo{person}{Thomas Monz}, {and}
  \bibinfo{person}{Martin Ringbauer}.} \bibinfo{year}{2023}\natexlab{}.
\newblock \showarticletitle{Native qudit entanglement in a trapped ion quantum
  processor}.
\newblock \bibinfo{journal}{\emph{Nature Communications}} \bibinfo{volume}{14},
  \bibinfo{number}{1} (\bibinfo{year}{2023}), \bibinfo{pages}{2242}.
\newblock


\bibitem[Jia et~al\mbox{.}(2022)]%
        {jia2022semantics}
\bibfield{author}{\bibinfo{person}{Xiaodong Jia}, \bibinfo{person}{Andre
  Kornell}, \bibinfo{person}{Bert Lindenhovius}, \bibinfo{person}{Michael
  Mislove}, {and} \bibinfo{person}{Vladimir Zamdzhiev}.}
  \bibinfo{year}{2022}\natexlab{}.
\newblock \showarticletitle{Semantics for variational quantum programming}.
\newblock \bibinfo{journal}{\emph{Proceedings of the ACM on Programming
  Languages}} \bibinfo{volume}{6}, \bibinfo{number}{POPL}
  (\bibinfo{year}{2022}), \bibinfo{pages}{1--31}.
\newblock


\bibitem[Keppens et~al\mbox{.}(2025)]%
        {keppens2025qudit}
\bibfield{author}{\bibinfo{person}{James Keppens}, \bibinfo{person}{Quinten
  Eggerickx}, \bibinfo{person}{Vukan Levajac}, \bibinfo{person}{George Simion},
  {and} \bibinfo{person}{Bart Sorée}.} \bibinfo{year}{2025}\natexlab{}.
\newblock \bibinfo{title}{Qudit vs. Qubit: Simulated performance of error
  correction codes in higher dimensions}.
\newblock
\showeprint[arxiv]{2502.05992}~[quant-ph]
\urldef\tempurl%
\url{https://arxiv.org/abs/2502.05992}
\showURL{%
\tempurl}


\bibitem[Khalate et~al\mbox{.}(2022)]%
        {khalate2022llvm}
\bibfield{author}{\bibinfo{person}{Pradnya Khalate}, \bibinfo{person}{Xin-Chuan
  Wu}, \bibinfo{person}{Shavindra Premaratne}, \bibinfo{person}{Justin
  Hogaboam}, \bibinfo{person}{Adam Holmes}, \bibinfo{person}{Albert Schmitz},
  \bibinfo{person}{Gian~Giacomo Guerreschi}, \bibinfo{person}{Xiang Zou}, {and}
  \bibinfo{person}{Anne~Y Matsuura}.} \bibinfo{year}{2022}\natexlab{}.
\newblock \showarticletitle{An LLVM-based C++ compiler toolchain for
  variational hybrid quantum-classical algorithms and quantum accelerators}.
\newblock \bibinfo{journal}{\emph{arXiv preprint arXiv:2202.11142}}
  (\bibinfo{year}{2022}).
\newblock


\bibitem[Kiktenko et~al\mbox{.}(2025)]%
        {kiktenko2025qudits}
\bibfield{author}{\bibinfo{person}{Evgeniy~O. Kiktenko},
  \bibinfo{person}{Anastasiia~S. Nikolaeva}, {and} \bibinfo{person}{Aleksey~K.
  Fedorov}.} \bibinfo{year}{2025}\natexlab{}.
\newblock \showarticletitle{Colloquium: Qudits for decomposing multiqubit gates
  and realizing quantum algorithms}.
\newblock \bibinfo{journal}{\emph{Rev. Mod. Phys.}}  \bibinfo{volume}{97}
  (\bibinfo{date}{Jun} \bibinfo{year}{2025}), \bibinfo{pages}{021003}.
\newblock
Issue 2.
\href{https://doi.org/10.1103/RevModPhys.97.021003}{doi:\nolinkurl{10.1103/RevModPhys.97.021003}}


\bibitem[Kissinger and van~de Wetering(2020)]%
        {kissinger2020}
\bibfield{author}{\bibinfo{person}{AR Kissinger} {and} \bibinfo{person}{JMM
  van~de Wetering}.} \bibinfo{year}{2020}\natexlab{}.
\newblock \showarticletitle{{PyZX}: Large Scale Automated Diagrammatic
  Reasoning}.
\newblock \bibinfo{journal}{\emph{Electronic Proceedings in Theoretical
  Computer Science}}  \bibinfo{volume}{318} (\bibinfo{year}{2020}),
  \bibinfo{pages}{229--241}.
\newblock


\bibitem[Li et~al\mbox{.}(2022)]%
        {li2022paulihedral}
\bibfield{author}{\bibinfo{person}{Gushu Li}, \bibinfo{person}{Anbang Wu},
  \bibinfo{person}{Yunong Shi}, \bibinfo{person}{Ali Javadi-Abhari},
  \bibinfo{person}{Yufei Ding}, {and} \bibinfo{person}{Yuan Xie}.}
  \bibinfo{year}{2022}\natexlab{}.
\newblock \showarticletitle{{Paulihedral}: A Generalized Block-Wise Compiler
  Optimization Framework for Quantum Simulation Kernels}. In
  \bibinfo{booktitle}{\emph{Proceedings of the 27th ACM International
  Conference on Architectural Support for Programming Languages and Operating
  Systems}} (Lausanne, Switzerland) \emph{(\bibinfo{series}{ASPLOS '22})}.
  \bibinfo{publisher}{Association for Computing Machinery},
  \bibinfo{address}{New York, NY, USA}, \bibinfo{pages}{554–569}.
\newblock
\showISBNx{9781450392051}
\href{https://doi.org/10.1145/3503222.3507715}{doi:\nolinkurl{10.1145/3503222.3507715}}


\bibitem[Litteken et~al\mbox{.}(2022)]%
        {litteken2022communication}
\bibfield{author}{\bibinfo{person}{Andrew Litteken},
  \bibinfo{person}{Jonathan~M. Baker}, {and} \bibinfo{person}{Frederic~T.
  Chong}.} \bibinfo{year}{2022}\natexlab{}.
\newblock \showarticletitle{Communication Trade Offs in Intermediate Qudit
  Circuits}. In \bibinfo{booktitle}{\emph{2022 IEEE 52nd International
  Symposium on Multiple-Valued Logic (ISMVL)}}. \bibinfo{pages}{43--49}.
\newblock
\href{https://doi.org/10.1109/ISMVL52857.2022.00014}{doi:\nolinkurl{10.1109/ISMVL52857.2022.00014}}


\bibitem[Lysaght et~al\mbox{.}(2024)]%
        {lysaght2024quantum}
\bibfield{author}{\bibinfo{person}{Liam Lysaght}, \bibinfo{person}{Timothée
  Goubault}, \bibinfo{person}{Patrick Sinnott}, \bibinfo{person}{Shane
  Mansfield}, {and} \bibinfo{person}{Pierre-Emmanuel Emeriau}.}
  \bibinfo{year}{2024}\natexlab{}.
\newblock \bibinfo{title}{Quantum circuit compression using qubit logic on
  qudits}.
\newblock
\showeprint[arxiv]{2411.03878}~[quant-ph]
\urldef\tempurl%
\url{https://arxiv.org/abs/2411.03878}
\showURL{%
\tempurl}


\bibitem[Mato et~al\mbox{.}(2023a)]%
        {mato2023compression}
\bibfield{author}{\bibinfo{person}{Kevin Mato}, \bibinfo{person}{Stefan
  Hillmich}, {and} \bibinfo{person}{Robert Wille}.}
  \bibinfo{year}{2023}\natexlab{a}.
\newblock \showarticletitle{Compression of qubit circuits: Mapping to
  mixed-dimensional quantum systems}. In \bibinfo{booktitle}{\emph{2023 IEEE
  International Conference on Quantum Software (QSW)}}. IEEE,
  \bibinfo{pages}{155--161}.
\newblock


\bibitem[Mato et~al\mbox{.}(2023b)]%
        {mato2023mixed}
\bibfield{author}{\bibinfo{person}{Kevin Mato}, \bibinfo{person}{Stefan
  Hillmich}, {and} \bibinfo{person}{Robert Wille}.}
  \bibinfo{year}{2023}\natexlab{b}.
\newblock \showarticletitle{Mixed-dimensional quantum circuit simulation with
  decision diagrams}. In \bibinfo{booktitle}{\emph{2023 IEEE International
  Conference on Quantum Computing and Engineering (QCE)}},
  Vol.~\bibinfo{volume}{1}. IEEE, \bibinfo{pages}{978--989}.
\newblock


\bibitem[Mato et~al\mbox{.}(2023c)]%
        {mato2023compilation}
\bibfield{author}{\bibinfo{person}{Kevin Mato}, \bibinfo{person}{Martin
  Ringbauer}, \bibinfo{person}{Stefan Hillmich}, {and} \bibinfo{person}{Robert
  Wille}.} \bibinfo{year}{2023}\natexlab{c}.
\newblock \showarticletitle{Compilation of entangling gates for
  high-dimensional quantum systems}. In \bibinfo{booktitle}{\emph{Proceedings
  of the 28th Asia and South Pacific Design Automation Conference}}.
  \bibinfo{pages}{202--208}.
\newblock


\bibitem[Mellies(2009)]%
        {mellies2009categorical}
\bibfield{author}{\bibinfo{person}{Paul-Andr{\'e} Mellies}.}
  \bibinfo{year}{2009}\natexlab{}.
\newblock \showarticletitle{Categorical semantics of linear logic}.
\newblock \bibinfo{journal}{\emph{Panoramas et syntheses}}
  \bibinfo{volume}{27} (\bibinfo{year}{2009}), \bibinfo{pages}{15--215}.
\newblock


\bibitem[Nadkarni and Garani(2021)]%
        {nadkarni2021quantum}
\bibfield{author}{\bibinfo{person}{Priya~J. Nadkarni} {and}
  \bibinfo{person}{Shayan~Srinivasa Garani}.} \bibinfo{year}{2021}\natexlab{}.
\newblock \showarticletitle{Quantum error correction architecture for qudit
  stabilizer codes}.
\newblock \bibinfo{journal}{\emph{Phys. Rev. A}}  \bibinfo{volume}{103}
  (\bibinfo{date}{Apr} \bibinfo{year}{2021}), \bibinfo{pages}{042420}.
\newblock
Issue 4.
\href{https://doi.org/10.1103/PhysRevA.103.042420}{doi:\nolinkurl{10.1103/PhysRevA.103.042420}}


\bibitem[Nielsen and Chuang(2002)]%
        {nielsen2002quantum}
\bibfield{author}{\bibinfo{person}{Michael~A Nielsen} {and}
  \bibinfo{person}{Isaac Chuang}.} \bibinfo{year}{2002}\natexlab{}.
\newblock \bibinfo{title}{Quantum computation and quantum information}.
\newblock


\bibitem[Paykin(2018)]%
        {paykin2018linear}
\bibfield{author}{\bibinfo{person}{Jennifer Paykin}.}
  \bibinfo{year}{2018}\natexlab{}.
\newblock \emph{\bibinfo{title}{Linear/non-linear types for embedded
  domain-specific languages}}.
\newblock \bibinfo{thesistype}{Ph.\,D. Dissertation}.
  \bibinfo{school}{University of Pennsylvania}.
\newblock


\bibitem[Paykin et~al\mbox{.}(2017)]%
        {paykin2017qwire}
\bibfield{author}{\bibinfo{person}{Jennifer Paykin}, \bibinfo{person}{Robert
  Rand}, {and} \bibinfo{person}{Steve Zdancewic}.}
  \bibinfo{year}{2017}\natexlab{}.
\newblock \showarticletitle{QWIRE: a core language for quantum circuits}.
\newblock \bibinfo{journal}{\emph{ACM SIGPLAN Notices}} \bibinfo{volume}{52},
  \bibinfo{number}{1} (\bibinfo{year}{2017}), \bibinfo{pages}{846--858}.
\newblock


\bibitem[Paykin et~al\mbox{.}(2023)]%
        {PaykinSchmitz2023pcoast}
\bibfield{author}{\bibinfo{person}{Jennifer Paykin}, \bibinfo{person}{Albert~T
  Schmitz}, \bibinfo{person}{Mohannad Ibrahim}, \bibinfo{person}{Xin-Chuan Wu},
  {and} \bibinfo{person}{Anne~Y Matsuura}.} \bibinfo{year}{2023}\natexlab{}.
\newblock \showarticletitle{{PCOAST}: {A} {Pauli}-based quantum circuit
  optimization framework}. In \bibinfo{booktitle}{\emph{2023 IEEE International
  Conference on Quantum Computing and Engineering (QCE)}},
  Vol.~\bibinfo{volume}{1}. IEEE, \bibinfo{pages}{715--726}.
\newblock


\bibitem[Paykin and Zdancewic(2017)]%
        {paykin2017linearity}
\bibfield{author}{\bibinfo{person}{Jennifer Paykin} {and}
  \bibinfo{person}{Steve Zdancewic}.} \bibinfo{year}{2017}\natexlab{}.
\newblock \showarticletitle{The linearity monad}.
\newblock \bibinfo{journal}{\emph{ACM SIGPLAN Notices}} \bibinfo{volume}{52},
  \bibinfo{number}{10} (\bibinfo{year}{2017}), \bibinfo{pages}{117--132}.
\newblock


\bibitem[Paykin and Zdancewic(2019)]%
        {paykin2019hott}
\bibfield{author}{\bibinfo{person}{Jennifer Paykin} {and}
  \bibinfo{person}{Steve Zdancewic}.} \bibinfo{year}{2019}\natexlab{}.
\newblock \showarticletitle{A {HoTT} Quantum Equational Theory (Extended
  Version)}.
\newblock \bibinfo{journal}{\emph{arXiv preprint arXiv:1904.04371}}
  (\bibinfo{year}{2019}).
\newblock


\bibitem[Po{\'o}r et~al\mbox{.}(2023)]%
        {poor2023completeness}
\bibfield{author}{\bibinfo{person}{Boldizs{\'a}r Po{\'o}r},
  \bibinfo{person}{Quanlong Wang}, \bibinfo{person}{Razin~A Shaikh},
  \bibinfo{person}{Lia Yeh}, \bibinfo{person}{Richie Yeung}, {and}
  \bibinfo{person}{Bob Coecke}.} \bibinfo{year}{2023}\natexlab{}.
\newblock \showarticletitle{Completeness for arbitrary finite dimensions of
  ZXW-calculus, a unifying calculus}. In \bibinfo{booktitle}{\emph{2023 38th
  Annual ACM/IEEE Symposium on Logic in Computer Science (LICS)}}. IEEE,
  \bibinfo{pages}{1--14}.
\newblock


\bibitem[Rand et~al\mbox{.}(2020)]%
        {rand2020gottesman}
\bibfield{author}{\bibinfo{person}{Robert Rand}, \bibinfo{person}{Aarthi
  Sundaram}, \bibinfo{person}{Kartik Singhal}, {and} \bibinfo{person}{Brad
  Lackey}.} \bibinfo{year}{2020}\natexlab{}.
\newblock \showarticletitle{Gottesman Types for Quantum Programs}. In
  \bibinfo{booktitle}{\emph{Proceedings of the 17th International Conference on
  Quantum Physics and Logic, QPL}}, Vol.~\bibinfo{volume}{20}.
\newblock


\bibitem[Rand et~al\mbox{.}(2021)]%
        {rand2021extending}
\bibfield{author}{\bibinfo{person}{Robert Rand}, \bibinfo{person}{Aarthi
  Sundaram}, \bibinfo{person}{Kartik Singhal}, {and} \bibinfo{person}{Brad
  Lackey}.} \bibinfo{year}{2021}\natexlab{}.
\newblock \showarticletitle{Extending Gottesman Types Beyond the Clifford
  Group}. In \bibinfo{booktitle}{\emph{The Second International Workshop on
  Programming Languages for Quantum Computing (PLanQC 2021)}}.
\newblock


\bibitem[Sabry et~al\mbox{.}(2018)]%
        {sabry2018symmetric}
\bibfield{author}{\bibinfo{person}{Amr Sabry}, \bibinfo{person}{Beno{\^\i}t
  Valiron}, {and} \bibinfo{person}{Juliana~Kaizer Vizzotto}.}
  \bibinfo{year}{2018}\natexlab{}.
\newblock \showarticletitle{From symmetric pattern-matching to quantum
  control}. In \bibinfo{booktitle}{\emph{Foundations of Software Science and
  Computation Structures: 21st International Conference, FOSSACS 2018, Held as
  Part of the European Joint Conferences on Theory and Practice of Software,
  ETAPS 2018, Thessaloniki, Greece, April 14--20, 2018. Proceedings 21}}.
  Springer, \bibinfo{pages}{348--364}.
\newblock


\bibitem[Schmitz et~al\mbox{.}(2023)]%
        {schmitz2023optimization}
\bibfield{author}{\bibinfo{person}{Albert~T Schmitz}, \bibinfo{person}{Mohannad
  Ibrahim}, \bibinfo{person}{Nicolas~PD Sawaya}, \bibinfo{person}{Gian~Giacomo
  Guerreschi}, \bibinfo{person}{Jennifer Paykin}, \bibinfo{person}{Xin-Chuan
  Wu}, {and} \bibinfo{person}{Anne~Y Matsuura}.}
  \bibinfo{year}{2023}\natexlab{}.
\newblock \showarticletitle{Optimization at the Interface of Unitary and
  Non-unitary Quantum Operations in PCOAST}. In \bibinfo{booktitle}{\emph{2023
  IEEE International Conference on Quantum Computing and Engineering (QCE)}},
  Vol.~\bibinfo{volume}{1}. IEEE, \bibinfo{pages}{727--738}.
\newblock


\bibitem[Schmitz et~al\mbox{.}(2024)]%
        {schmitz2024graph}
\bibfield{author}{\bibinfo{person}{Albert~T Schmitz},
  \bibinfo{person}{Nicolas~PD Sawaya}, \bibinfo{person}{Sonika Johri}, {and}
  \bibinfo{person}{AY Matsuura}.} \bibinfo{year}{2024}\natexlab{}.
\newblock \showarticletitle{Graph optimization perspective for low-depth
  Trotter-Suzuki decomposition}.
\newblock \bibinfo{journal}{\emph{Physical Review A}} \bibinfo{volume}{109},
  \bibinfo{number}{4} (\bibinfo{year}{2024}), \bibinfo{pages}{042418}.
\newblock


\bibitem[Schneider et~al\mbox{.}(2023)]%
        {schneider2023sat}
\bibfield{author}{\bibinfo{person}{Sarah Schneider}, \bibinfo{person}{Lukas
  Burgholzer}, {and} \bibinfo{person}{Robert Wille}.}
  \bibinfo{year}{2023}\natexlab{}.
\newblock \showarticletitle{A {SAT} encoding for optimal {Clifford} circuit
  synthesis}. In \bibinfo{booktitle}{\emph{Proceedings of the 28th Asia and
  South Pacific Design Automation Conference}}. \bibinfo{pages}{190--195}.
\newblock


\bibitem[Selinger et~al\mbox{.}(2009)]%
        {selinger2009quantum}
\bibfield{author}{\bibinfo{person}{Peter Selinger}, \bibinfo{person}{Beno{\i}t
  Valiron}, {et~al\mbox{.}}} \bibinfo{year}{2009}\natexlab{}.
\newblock \showarticletitle{Quantum lambda calculus}.
\newblock \bibinfo{journal}{\emph{Semantic techniques in quantum computation}}
  (\bibinfo{year}{2009}), \bibinfo{pages}{135--172}.
\newblock


\bibitem[Sharma and Achour(2024)]%
        {sharma2024compilation}
\bibfield{author}{\bibinfo{person}{Ritvik Sharma} {and} \bibinfo{person}{Sara
  Achour}.} \bibinfo{year}{2024}\natexlab{}.
\newblock \showarticletitle{Compilation of Qubit Circuits to Optimized Qutrit
  Circuits}.
\newblock \bibinfo{journal}{\emph{Proc. ACM Program. Lang.}}
  \bibinfo{volume}{8}, \bibinfo{number}{PLDI}, Article \bibinfo{articleno}{158}
  (\bibinfo{date}{June} \bibinfo{year}{2024}), \bibinfo{numpages}{24}~pages.
\newblock
\href{https://doi.org/10.1145/3656388}{doi:\nolinkurl{10.1145/3656388}}


\bibitem[Sundaram et~al\mbox{.}(2023)]%
        {sundaram2023hoare}
\bibfield{author}{\bibinfo{person}{Aarthi Sundaram}, \bibinfo{person}{Robert
  Rand}, \bibinfo{person}{Kartik Singhal}, {and} \bibinfo{person}{Brad
  Lackey}.} \bibinfo{year}{2023}\natexlab{}.
\newblock \bibinfo{title}{Hoare meets Heisenberg: A Lightweight Logic for
  Quantum Programs}.
\newblock


\bibitem[Voichick et~al\mbox{.}(2023)]%
        {voichick2023qunity}
\bibfield{author}{\bibinfo{person}{Finn Voichick}, \bibinfo{person}{Liyi Li},
  \bibinfo{person}{Robert Rand}, {and} \bibinfo{person}{Michael Hicks}.}
  \bibinfo{year}{2023}\natexlab{}.
\newblock \showarticletitle{Qunity: A Unified Language for Quantum and
  Classical Computing}.
\newblock \bibinfo{journal}{\emph{Proceedings of the ACM on Programming
  Languages}} \bibinfo{volume}{7}, \bibinfo{number}{POPL}
  (\bibinfo{year}{2023}), \bibinfo{pages}{921--951}.
\newblock


\bibitem[Volya and Mishra(2023)]%
        {volya2023qudcom}
\bibfield{author}{\bibinfo{person}{Daniel Volya} {and} \bibinfo{person}{Prabhat
  Mishra}.} \bibinfo{year}{2023}\natexlab{}.
\newblock \bibinfo{title}{{QudCom}: Towards Quantum Compilation for Qudit
  Systems}.
\newblock
\showeprint[arxiv]{2311.07015}~[quant-ph]
\urldef\tempurl%
\url{https://arxiv.org/abs/2311.07015}
\showURL{%
\tempurl}


\bibitem[Weinstein(2010)]%
        {weinstein2010symplectic}
\bibfield{author}{\bibinfo{person}{Alan Weinstein}.}
  \bibinfo{year}{2010}\natexlab{}.
\newblock \showarticletitle{Symplectic categories}.
\newblock \bibinfo{journal}{\emph{Portugaliae mathematica}}
  \bibinfo{volume}{67}, \bibinfo{number}{2} (\bibinfo{year}{2010}),
  \bibinfo{pages}{261--278}.
\newblock


\bibitem[Winnick and Paykin(2024)]%
        {winnick2024condensed}
\bibfield{author}{\bibinfo{person}{Sam Winnick} {and} \bibinfo{person}{Jennifer
  Paykin}.} \bibinfo{year}{2024}\natexlab{}.
\newblock \bibinfo{title}{Condensed Encodings of Projective Clifford Operations
  in Arbitrary Dimension}.
\newblock
\newblock
\shownote{preprint}.


\bibitem[Wu et~al\mbox{.}(2023)]%
        {Wu2023}
\bibfield{author}{\bibinfo{person}{Xin-Chuan Wu}, \bibinfo{person}{Shavindra
  Premaratne}, {and} \bibinfo{person}{Kevin Rasch}.}
  \bibinfo{year}{2023}\natexlab{}.
\newblock \showarticletitle{Introduction to Hybrid Quantum-Classical
  Programming Using C++ Quantum Extension}. In \bibinfo{booktitle}{\emph{2023
  IEEE/ACM International Conference on Computer Aided Design (ICCAD)}}.
  \bibinfo{pages}{1--6}.
\newblock
\href{https://doi.org/10.1109/ICCAD57390.2023.10323946}{doi:\nolinkurl{10.1109/ICCAD57390.2023.10323946}}


\bibitem[Zhang and Chen(2019)]%
        {Zhang2019}
\bibfield{author}{\bibinfo{person}{Fang Zhang} {and} \bibinfo{person}{Jianxin
  Chen}.} \bibinfo{year}{2019}\natexlab{}.
\newblock \bibinfo{title}{Optimizing {T} gates in {Clifford + T} circuit as
  $\pi/4$ rotations around {Paulis}}.
\newblock
\href{https://doi.org/10.48550/ARXIV.1903.12456}{doi:\nolinkurl{10.48550/ARXIV.1903.12456}}
\showeprint[arxiv]{1903.12456}~[quant-ph]


\end{thebibliography}


\section*{Supplementary Material}
\appendix

\section{Type System for \(\Zd\)-Modules}
\label{sec:r-modules-appendix}
\subsection{Contextual Reduction Rules}
\label{app:lambdaC-reduction-rules}

In this section we supplement the $\beta$ reduction rules given in \cref{sec:lambdaC} with a call-by-value evaluation scheme for reducing under a context. The additional rules are shown in \cref{fig:operational-semantics-C-contextual}

\begin{figure}[b]
    \centering
    \[\begin{array}{c}
        \inferrule*
            {a \rightarrow_\beta a'}
            {a \rightarrow a'}
        \qquad \qquad
        \inferrule*
            {a \rightarrow a'}
            {\letin{x}{a}{a''}
             \rightarrow
             \letin{x}{a'}{a''}
            }
        \\ \\
        \inferrule*
            {a \rightarrow a'}
            {a \cdot a''
             \rightarrow
             a' \cdot a''
            }
        \qquad \qquad
        \inferrule*
            {a \rightarrow a'}
            {r \cdot a
             \rightarrow
             r \cdot a'
            }
        \\ \\
        \inferrule*
            {a_1 \rightarrow a_1'}
            {a_1 + a_2 \rightarrow a_1' + a_2}
        \qquad \qquad
        \inferrule*
            {a_2 \rightarrow a_2'}
            {v_1 + a_2 \rightarrow v_1 + a_2'}
        \\ \\
        \inferrule*
            {a_1 \rightarrow a_1'}
            {[a_1,a_2] \rightarrow [a_1',a_2]}
        \qquad \qquad 
        \inferrule*
            {a_2 \rightarrow a_2'}
            {[v_1,a_2] \rightarrow [v_1,a_2']}
        \\ \\
        \inferrule*
            {a \rightarrow a'}
            {\caseof{a}{\inl{x_1} \rightarrow a_1 \mid \inr{x_2} \rightarrow a_2}
                \rightarrow
                \caseof{a'}{\inl{x_1} \rightarrow a_1 \mid \inr{x_2} \rightarrow a_2}
            }
        \\ \\
        \inferrule*
            {a_1 \rightarrow a_1'}
            {a_1 a_2 \rightarrow a_1' a_2}
        \qquad \qquad
        \inferrule*
            {a_2 \rightarrow a_2'}
            {v a_2 \rightarrow v a_2}
    \end{array}\]
    \caption{Operational semantics for reducing $\CCat$-expressions under a context}
    \Description{A set of eleven inference rules that show contextual reductions of sub-expressions inside a $\lambdaC$ expression.}
    \label{fig:operational-semantics-C-contextual}
\end{figure}

\subsection{Strong Normalization}
\label{app:strong-normalization}

Next, we want to show that every well-typed expression $\cdot \vdashL a : \alpha$ (1) always evaluates to a value (does not get stuck or run forever); and (2) evaluates to a \emph{unique} value.

Uniqueness is easy for this operational semantics, because it has a fixed evaluation order and is thus deterministic.


To prove that every expression evaluates to a value, we define the following logical relation on values and closed expressions respectively:
\begin{align*}
    \Val{\R} &\triangleq \R \\
    \Val{\alpha_1 \oplus \alpha_2} &\triangleq \{[v_1,v_2] \mid v_1 \in \Val{\alpha_1} \wedge v_2 \in \Val{\alpha_2} \} \\
    \Val{\alpha_1 \lolli \alpha_2} &\triangleq \{\lambda x.a \mid \forall v \in \Val{\alpha_1}.~ a\{v/x\} \in \Norm{\alpha_2} \} \\
    \Norm{\alpha} &\triangleq \{a \mid \cdot \vdashL a : \alpha \wedge \exists v \in \Val{\alpha}.~ a \rightarrow^\ast v\}
\end{align*}
We can extend this relation to open expressions. Let $\delta$ be a map from variables to (closed) values; in other words, it is a partial function from the set of variables to the set of values. We write $x \mapsto v \in \delta$ to mean that $x \in \dom(\delta)$ and $\delta(x)=v$. In addition, we write $\delta(a)$ for the expression obtained by substituting values $\delta(x)$ for free variables $x$ in $a$ in the usual capture-avoiding way.

In the same way, typing contexts $\Delta$ can be thought of as partial functions from the set of variables to the set of types. For example, $\Delta_0 = x_0:\alpha_0$ is a partial function with domain $\{x_0\}$ defined by $\Delta_0(x_0)=\alpha_0$.
We now define $\ValRelation{\Delta}$ as follows:
\begin{align} \label{eqn:VDelta-defn}
    \ValRelation{\Delta}
    &\triangleq \left\{
        \delta
        \mid
        \forall x \in \dom(\Delta),~
        x \in \dom(\delta) \wedge \delta(x) \in \Val{\Delta(x)}
    \right\}
\end{align}
In that case, define:
\begin{align*}
    \Norm{\Delta \vdashL \alpha} &\triangleq \{ a \mid \Delta \vdashL a : \alpha \wedge \forall \delta \in \ValRelation{\Delta}.~ \delta(a) \in \Norm{\alpha} \}.
\end{align*}

It is trivial to see that the following lemma holds about $\Norm{\alpha}$:
\begin{lemma} \label{lem:norm-step-inversion}
    If $\cdot \vdashL a : \alpha$ and $a \rightarrow a'$ such that $a' \in \Norm{\alpha}$, then $a \in \Norm{\alpha}$.
\end{lemma}
\begin{proof}
    Follows from the definition of $\Norm{\alpha}$.
\end{proof}

We can also prove several \emph{compatibility lemmas} about the normalization relation.

\begin{lemma}[Compatibility of constants and tuples] \label{lem:norm-compatibility}
    The following rules are sound for $\Norm{\alpha}$:
    \[ \begin{array}{c}
        \inferrule*
            {~}{r \in \Norm{\cdot \vdashL \R}}
        \qquad \qquad
        \inferrule*
            {a_1 \in \Norm{\Delta \vdashL \alpha_1} \\ a_2 \in \Norm{\Delta \vdashL \alpha_2}}
            {[a_1, a_2] \in \Norm{\Delta \vdashL \alpha_1 \oplus \alpha_2}}
    \end{array}\]
\end{lemma}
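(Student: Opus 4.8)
The plan is to unfold the definition of $\Norm{\Delta \vdashL \alpha}$, which for a candidate expression requires establishing two things: that it is well-typed under $\Delta$, and that for every closing value substitution $\delta \in \ValRelation{\Delta}$ the closed instance $\delta(\cdot)$ lies in $\Norm{\alpha}$, i.e. is well-typed and reduces to a value in $\Val{\alpha}$. I would discharge both conjuncts directly in each of the two rules.

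For the constant rule, well-typedness $\cdot \vdashL r : \R$ is immediate from the constant typing rule of \cref{fig:typing-rules-C-simple}. The only $\delta \in \ValRelation{\cdot}$ has empty domain, so $\delta(r) = r$; since $r$ is already a value, $r \in \Val{\R} = \Zd$, and $r \rightarrow^\ast r$, we get $r \in \Norm{\R}$, hence $r \in \Norm{\cdot \vdashL \R}$.

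For the tuple rule, from $a_i \in \Norm{\Delta \vdashL \alpha_i}$ I have $\Delta \vdashL a_i : \alpha_i$, so the pairing rule of \cref{fig:typing-rules-C-simple} gives $\Delta \vdashL [a_1, a_2] : \alpha_1 \oplus \alpha_2$. Fixing $\delta \in \ValRelation{\Delta}$, the hypotheses give $\delta(a_i) \in \Norm{\alpha_i}$, which unfolds to $\cdot \vdashL \delta(a_i) : \alpha_i$ together with values $v_i \in \Val{\alpha_i}$ and reductions $\delta(a_i) \rightarrow^\ast v_i$. Since substitution commutes with pairing, $\delta([a_1, a_2]) = [\delta(a_1), \delta(a_2)]$, which is well-typed by the pairing rule; and using the congruence rules for $[-, a]$ and $[v, -]$ from \cref{fig:operational-semantics-C-contextual}, lifted to the reflexive transitive closure, $[\delta(a_1), \delta(a_2)] \rightarrow^\ast [v_1, \delta(a_2)] \rightarrow^\ast [v_1, v_2]$, with $[v_1, v_2] \in \Val{\alpha_1 \oplus \alpha_2}$ by definition. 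Hence $\delta([a_1, a_2]) \in \Norm{\alpha_1 \oplus \alpha_2}$, completing the case. (Type preservation, \cref{thm:preservation}, can also be invoked to track the typing along the reductions, though it is not strictly needed since $\cdot \vdashL \delta(a_i):\alpha_i$ is already part of $\delta(a_i)\in\Norm{\alpha_i}$.)

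The only point requiring a little care --- the ``hard part'', such as it is --- is the bookkeeping for multi-step reduction under evaluation contexts: I need that each single-step congruence rule of \cref{fig:operational-semantics-C-contextual} lifts to $\rightarrow^\ast$ (so $a \rightarrow^\ast a'$ yields $[a, b] \rightarrow^\ast [a', b]$ and $[v, b] \rightarrow^\ast [v, b']$), and I must respect the left-to-right call-by-value order, driving the first component to a value before reducing the second. Both facts are standard and I would either inline them or factor them out as a small auxiliary lemma on evaluation contexts; everything else is a direct unfolding of definitions.
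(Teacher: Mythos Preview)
Your proposal is correct and follows essentially the same approach as the paper: unfold the definition of $\Norm{\Delta \vdashL \alpha}$, use that substitution commutes with pairing, and reduce $[\delta(a_1),\delta(a_2)] \rightarrow^\ast [v_1,\delta(a_2)] \rightarrow^\ast [v_1,v_2]$ via the left-to-right congruence rules. If anything, you are slightly more careful than the paper in making the well-typedness side of $\Norm{\cdot}$ explicit and in noting that the single-step congruence rules must be lifted to $\rightarrow^\ast$.
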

\begin{proof}
    The rule $r \in \Norm{\R}$ is trivial since $r$ is already a value in $\Val{\R}$.

    For the $[a_1,a_2]$ case: Let $\delta \in \ValRelation{\Delta}$. We want to show that $\delta([a_1,a_2])=[\delta(a_1),\delta(a_2)] \in \Norm{\alpha_1 \oplus \alpha_2}$. By the hypotheses we know that $\delta(a_1) \rightarrow v_1$ and $\delta(a_2) \rightarrow v_2$, which means
    \begin{align*}
        [\delta(a_1),\delta(a_2)]
            &\rightarrow^\ast [v_1,\delta(a_2)]
            \rightarrow^\ast [v_1,v_2]
    \end{align*}
    which is a value.
\end{proof}

\begin{lemma}[Compatibility of $+$] \label{lem:norm-compatibility-plus}
    If $a_1,a_2 \in \Norm{\Delta \vdashL \alpha}$ then $a_1 + a_2 \in \Norm{\Delta \vdashL \alpha}$.
\end{lemma}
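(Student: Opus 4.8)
The plan is to reduce the statement to a single auxiliary claim about addition of \emph{closed values}, proved by induction on the type $\alpha$: for every $\alpha$ and every pair $v_1,v_2 \in \Val{\alpha}$ there is a value $w \in \Val{\alpha}$ with $v_1 + v_2 \rightarrow^\ast w$. It will be convenient to prove simultaneously, by the same induction, that $\Norm{\cdot \vdashL \alpha}$ is closed under $+$, since these two statements feed into each other at the function case. Granting the auxiliary claim, the lemma is immediate: if $a_1,a_2 \in \Norm{\Delta \vdashL \alpha}$ and $\delta \in \ValRelation{\Delta}$, then $\delta(a_1+a_2) = \delta(a_1) + \delta(a_2)$, and by hypothesis $\delta(a_1) \rightarrow^\ast v_1$ and $\delta(a_2) \rightarrow^\ast v_2$ for values $v_1,v_2 \in \Val{\alpha}$. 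Using the contextual reduction rules for $+$ (\cref{fig:operational-semantics-C-contextual}) we get $\delta(a_1)+\delta(a_2) \rightarrow^\ast v_1 + v_2$, and the auxiliary claim supplies $v_1+v_2 \rightarrow^\ast w \in \Val{\alpha}$, so $\delta(a_1+a_2) \rightarrow^\ast w$, witnessing $\delta(a_1+a_2) \in \Norm{\alpha}$.

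For the auxiliary claim I would case on $\alpha$. When $\alpha = \R$, the values are constants $r_1,r_2 \in \Zd$ and the $\beta$-rule $r_1 + r_2 \rightarrow_\beta (r_1+r_2) \in \Zd$ gives the value directly. When $\alpha = \alpha_1 \oplus \alpha_2$, the values have the shape $[v_1',v_1'']$ and $[v_2',v_2'']$ with $v_i' \in \Val{\alpha_1}$ and $v_i'' \in \Val{\alpha_2}$; the $\beta$-rule sends $[v_1',v_1''] + [v_2',v_2'']$ to $[v_1'+v_2',\, v_1''+v_2'']$, and applying the induction hypothesis at $\alpha_1$ and $\alpha_2$ together with the contextual rules for $[-,-]$ reduces this to a pair of values, hence an element of $\Val{\alpha_1 \oplus \alpha_2}$. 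When $\alpha = \alpha_1 \lolli \alpha_2$, the values are $\lambda x_1.a_1'$ and $\lambda x_2.a_2'$, and the $\beta$-rule produces $\lambda x.(a_1'\{x/x_1\} + a_2'\{x/x_2\})$ for a fresh $x$, which is already a value; to see it lies in $\Val{\alpha_1 \lolli \alpha_2}$ one checks that for every $v \in \Val{\alpha_1}$ the instantiated body $a_1'\{v/x_1\} + a_2'\{v/x_2\}$ lies in $\Norm{\alpha_2}$. But $a_1'\{v/x_1\}, a_2'\{v/x_2\} \in \Norm{\alpha_2}$ since $\lambda x_i.a_i' \in \Val{\alpha_1 \lolli \alpha_2}$, and closure of $\Norm{\cdot \vdashL \alpha_2}$ under $+$—the simultaneously proven part of the induction hypothesis, which in turn follows from the auxiliary claim at $\alpha_2$ exactly as in the previous paragraph—finishes the case.

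The main obstacle is keeping the mutual recursion between ``$v_1 + v_2$ reduces to a value'' and ``$\Norm{}$ is closed under $+$'' well-founded: both are indexed by $\alpha$, and the function case of the former at $\alpha_1 \lolli \alpha_2$ appeals to the latter at the strictly smaller type $\alpha_2$, which appeals back to the former at $\alpha_2$, so the recursion descends in the type and terminates. The only other fussy point is the usual $\alpha$-renaming bookkeeping for the fresh variable introduced by the $\lambda+\lambda$ rule, and the commutation of substitution with that rule, i.e. $(\lambda x.(a_1'\{x/x_1\}+a_2'\{x/x_2\}))\,v \rightarrow_\beta a_1'\{v/x_1\} + a_2'\{v/x_2\}$; this is routine under standard hygiene conventions. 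No appeal to \cref{lem:norm-step-inversion} is strictly needed, though it can be used to shorten some of the reduction sequences.
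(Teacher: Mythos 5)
Your proposal is correct and follows essentially the same route as the paper's proof: an induction on the type $\alpha$, analyzing the value shapes at $\R$, $\alpha_1 \oplus \alpha_2$, and $\alpha_1 \lolli \alpha_2$ and applying the corresponding $\beta$-rules for $+$, with the function case discharged by closure under $+$ at the smaller type $\alpha_2$. Your factoring through an explicit closed-value claim with a mutual induction is just a more carefully bookkept packaging of the paper's direct induction (which invokes the inductive hypothesis at $\alpha_2$ on the open body $a_1' + a_2'$), so no essential difference.
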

\begin{proof}
    Let $\delta \in \ValRelation{\Delta}$. We proceed by induction on $\alpha$.

    If $\alpha=\R$ then we know there exist constants $r_1$ and $r_2$ such that 
    $\delta(a_i) \rightarrow^\ast r_i$ and 
    \begin{align*}
        r_1 + r_2
            \rightarrow_\beta r'
    \end{align*}
    where $r'=r_1+r_2 \in \R$.

    If $\alpha=\alpha_1 \oplus \alpha_2$ then we know $\delta(a_i) \rightarrow^\ast [v_{i,1},v_{i,2}]$ for $v_{i,j} \in \Val{\alpha_j}$.
    Then
    \[
        \delta(a_1) + \delta(a_2) \rightarrow^\ast [v_{1,1},v_{1,2}] + [v_{2,1},v_{2,2}]
        \rightarrow_\beta [v_{1,1} + v_{2,1}, v_{1,2} + v_{2,2}]
    \]
    By the induction hypotheses, $v_{1,j} + v_{2,j} \in \Norm{\cdot \vdashL \alpha_j}$, which completes the proof.

    Finally, if $\alpha = \alpha_1 \lolli \alpha_2$ then $\delta(a_i) \rightarrow^\ast \lambda x.a_i'$ for $a_i' \in \Norm{x:\alpha_1 \vdashL \alpha_2}$. Then
    \[
        \delta(a_1) + \delta(a_2) \rightarrow^\ast (\lambda x.a_1') + (\lambda x.a_2')
        \rightarrow_\beta \lambda x.(a_1' + a_2')
    \]
    By the induction hypothesis for $\alpha_2$, we have that $\alpha_1' + \alpha_2' \in \Norm{x:\alpha_1 \vdashL \alpha_2}$, which completes the proof.
\end{proof}

\begin{lemma}[Compatibility of scalar multiplication]
    If $a_1 \in \Norm{\Delta_1 \vdashL \R}$ and $a_2 \in \Norm{\Delta_2 \vdashL \alpha}$ then $a_1 \cdot a_2 \in \Norm{\Delta_1,\Delta_2 \vdashL \alpha}$.
\end{lemma}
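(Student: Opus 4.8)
The plan is to mirror the proof of \cref{lem:norm-compatibility-plus}. Fix $\delta \in \ValRelation{\Delta_1,\Delta_2}$; since $\dom(\Delta_1) \cap \dom(\Delta_2) = \emptyset$, $\delta$ restricts to maps $\delta_1 \in \ValRelation{\Delta_1}$ and $\delta_2 \in \ValRelation{\Delta_2}$, and $\delta(a_1 \cdot a_2) = \delta_1(a_1) \cdot \delta_2(a_2)$. From $a_1 \in \Norm{\Delta_1 \vdashL \R}$ we get $\delta_1(a_1) \rightarrow^\ast r$ for some constant $r \in \Val{\R}$, and from $a_2 \in \Norm{\Delta_2 \vdashL \alpha}$ we get $\delta_2(a_2) \rightarrow^\ast v$ for some $v \in \Val{\alpha}$. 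Using the call-by-value contextual rules for $A \cdot a$ and then $r \cdot A$ (\cref{fig:operational-semantics-C-contextual}), we obtain $\delta_1(a_1) \cdot \delta_2(a_2) \rightarrow^\ast r \cdot \delta_2(a_2) \rightarrow^\ast r \cdot v$. By \cref{lem:norm-step-inversion} it therefore suffices to show $r \cdot v \in \Norm{\cdot \vdashL \alpha}$.

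I would prove this last claim by induction on $\alpha$. If $\alpha = \R$, then $v$ is a constant $r'$ and $r \cdot r' \rightarrow_\beta (r r') \in \Val{\R}$. If $\alpha = \alpha_1 \oplus \alpha_2$, then $v = [v_1,v_2]$ with $v_i \in \Val{\alpha_i}$, and $r \cdot [v_1,v_2] \rightarrow_\beta [r \cdot v_1, r \cdot v_2]$; the induction hypotheses at $\alpha_1$ and $\alpha_2$ (using that $r$ is a value, hence $r \in \Norm{\cdot \vdashL \R}$, and $v_i \in \Val{\alpha_i} \subseteq \Norm{\cdot \vdashL \alpha_i}$) give $r \cdot v_i \in \Norm{\cdot \vdashL \alpha_i}$, so by \cref{lem:norm-compatibility} the pair $[r \cdot v_1, r \cdot v_2]$ is in $\Norm{\cdot \vdashL \alpha_1 \oplus \alpha_2}$, and \cref{lem:norm-step-inversion} closes the case. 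If $\alpha = \alpha_1 \lolli \alpha_2$, then $v = \lambda x.a'$ with $a' \in \Norm{x:\alpha_1 \vdashL \alpha_2}$, and $r \cdot \lambda x.a' \rightarrow_\beta \lambda x.(r \cdot a')$; the induction hypothesis at $\alpha_2$ yields $r \cdot a' \in \Norm{x:\alpha_1 \vdashL \alpha_2}$, so $\lambda x.(r \cdot a') \in \Val{\alpha_1 \lolli \alpha_2}$, and again \cref{lem:norm-step-inversion} finishes.

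I do not expect a genuine obstacle here: the argument is a routine structural induction, as for \cref{lem:norm-compatibility-plus}, and the induction is on the type $\alpha$ rather than on a typing derivation. The only points requiring a little care are the bookkeeping forced by the call-by-value contextual rules for $\cdot$ (the left operand must be reduced all the way to a constant before the right operand is touched), and the fact that in the $\lolli$ case the reduction $r \cdot a' \rightarrow^\ast \cdots$ proceeds under a binder, so one must phrase that step in terms of the open-term normalization relation $\Norm{x:\alpha_1 \vdashL \alpha_2}$ rather than $\Norm{\cdot \vdashL \alpha_2}$.
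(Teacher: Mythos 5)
Your proposal is correct and follows essentially the same route as the paper's proof: reduce to the closed instance, normalize the two operands to $r$ and $v$, and then argue $r \cdot v \in \Norm{\alpha}$ by induction on $\alpha$ using the $\beta$-rules for $r \cdot [v_1,v_2]$ and $r \cdot \lambda x.a$ together with \cref{lem:norm-step-inversion}. Your handling is in fact slightly more careful than the paper's at two points it glosses over: citing the tuple-compatibility lemma (\cref{lem:norm-compatibility}) rather than the addition one in the $\oplus$ case, and flagging that the $\lolli$ case must be phrased via the open-term relation $\Norm{x:\alpha_1 \vdashL \alpha_2}$.
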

\begin{proof}
    It suffices to only consider closed expressions:
    If $a_1 \in \Norm{\R}$ and $a_2 \in \Norm{\alpha}$ then $a_1 \cdot a_2 \in \Norm{\alpha}$.

    In that case, we know there exist values $r_1 \in \Zd$ and $\cdot \vdashL v_2 : \alpha$ such that 
    \[
        a_1 \cdot a_2 \rightarrow^\ast r_1 \cdot v_2
    \]
    We proceed by induction on $\alpha$.

    If $\alpha=\R$ then $v_2$ is also a constant in $\Zd$, in which case $r_1 \cdot v_2 \rightarrow_\beta r'$ where $r' = r_1 v_2 \in \Zd$.

    If $\alpha = \alpha_1 \oplus \alpha_2$ on the other hand, then $v_2 = [v_{2,1},v_{2,2}]$ in which case
    \[
        r_1 \cdot [v_{2,1},v_{2,2}]
            \rightarrow_\beta [r_1 \cdot v_{2,1}, r_2 \cdot v_{2,2}]
    \]
    Note that by induction, we know that $r_1 \cdot v_{2,1} \in \Norm{\alpha_1}$ and $r_1 \cdot v_{2,1} \in \Norm{\alpha_1}$;
    as a result of \cref{lem:norm-compatibility-plus} we can thus conclude that 
    $[r_1 \cdot v_{2,1}, r_2 \cdot v_{2,2}] \in \Norm{\alpha_1 \oplus \alpha_2}$.

    Finally, if $\alpha=\alpha_1 \lolli \alpha_2$ then $v_2 = \lambda x.a_2$ and so
    \[
        r_1 \cdot \lambda x.a_2 \rightarrow \lambda x.r_1 \cdot a_2.
    \]
    Now, since $\lambda x.a_2 \in \Val{\alpha_1 \lolli \alpha_2}$ it must be the case that $a_2 \in \Norm{x:\alpha_1 \vdashL \alpha_2}$.
    So by the induction hypothesis, $x \cdot a_2 \in \Norm{x:\alpha_1 \vdashL \alpha_2}$ and so $\lambda x.r \cdot a_2 \in \Val{\alpha_1 \lolli \alpha_2}$
\end{proof}

\begin{lemma}[Compatibility of case]
    Let $a \in \Norm{\alpha_1 \oplus \alpha_2}$ and suppose that $a_i \in \Norm{v_i:\alpha_i \vdashL \alpha'}$.
    Then $\caseof{a}{\inl{x_1} \rightarrow a_1 \mid \inr{x_2} \rightarrow a_2} \in \Norm{\alpha'}$.
    
    Let $a \in \Norm{\Delta \vdashL \alpha_1 \oplus \alpha_2}$ and suppose that $a_i \in \Norm{\Delta,x_i:\alpha_i \vdashL \alpha'}$.
    Then 
    \[\caseof{a}{\inl{x_1} \rightarrow a_1 \mid \inr{x_2} \rightarrow a_2} \in \Norm{\Delta \vdashL \alpha'}\]
\end{lemma}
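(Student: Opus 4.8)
The plan is to reduce both statements to the single-step closure lemma \cref{lem:norm-step-inversion} together with compatibility of $+$ (\cref{lem:norm-compatibility-plus}), proving the closed statement first and then obtaining the open one by the usual substitution argument. Note that the bound variable written $v_i$ in the first clause of the statement is meant to be the branch variable $x_i$ of the case expression, and I will write $x_i$ throughout.

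First I would handle the closed case. Write $e \triangleq \caseof{a}{\inl{x_1} \rightarrow a_1 \mid \inr{x_2} \rightarrow a_2}$; from $\cdot \vdashL a : \alpha_1 \oplus \alpha_2$ and $x_i:\alpha_i \vdashL a_i : \alpha'$ the case typing rule gives $\cdot \vdashL e : \alpha'$. Since $a \in \Norm{\alpha_1 \oplus \alpha_2}$, there is a value $v \in \Val{\alpha_1 \oplus \alpha_2}$ with $a \rightarrow^\ast v$, and by definition of $\Val{\alpha_1 \oplus \alpha_2}$ we have $v = [v_1,v_2]$ with $v_i \in \Val{\alpha_i}$. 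Applying the contextual rule for a case scrutinee step by step, $e \rightarrow^\ast \caseof{[v_1,v_2]}{\inl{x_1} \rightarrow a_1 \mid \inr{x_2} \rightarrow a_2}$, which $\beta$-reduces to $a_1\{v_1/x_1\} + a_2\{v_2/x_2\}$. Because $\{x_i \mapsto v_i\} \in \ValRelation{x_i:\alpha_i}$ and $a_i \in \Norm{x_i:\alpha_i \vdashL \alpha'}$, each $a_i\{v_i/x_i\} \in \Norm{\alpha'}$, so \cref{lem:norm-compatibility-plus} gives $a_1\{v_1/x_1\} + a_2\{v_2/x_2\} \in \Norm{\alpha'}$. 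Finally, since $e \rightarrow^\ast a_1\{v_1/x_1\} + a_2\{v_2/x_2\}$ and every term along this reduction is well-typed at $\alpha'$ by type preservation (\cref{thm:preservation}), iterating \cref{lem:norm-step-inversion} yields $e \in \Norm{\alpha'}$.

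For the open case I would fix $\delta \in \ValRelation{\Delta}$, assume (by $\alpha$-renaming) that $x_1,x_2 \notin \dom(\Delta) \cup \dom(\delta)$, and observe that $\delta(e) = \caseof{\delta(a)}{\inl{x_1} \rightarrow \delta(a_1) \mid \inr{x_2} \rightarrow \delta(a_2)}$. Here $\delta(a) \in \Norm{\alpha_1 \oplus \alpha_2}$ directly from $a \in \Norm{\Delta \vdashL \alpha_1 \oplus \alpha_2}$, and for each $v_i \in \Val{\alpha_i}$ the extension $\delta[x_i \mapsto v_i]$ lies in $\ValRelation{\Delta,x_i:\alpha_i}$, so $\delta(a_i)\{v_i/x_i\} = (\delta[x_i \mapsto v_i])(a_i) \in \Norm{\alpha'}$; hence $\delta(a_i) \in \Norm{x_i:\alpha_i \vdashL \alpha'}$. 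Applying the closed case to $\delta(a)$ and the $\delta(a_i)$ gives $\delta(e) \in \Norm{\alpha'}$, and since $\delta$ was arbitrary, $e \in \Norm{\Delta \vdashL \alpha'}$.

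I expect no genuine mathematical obstacle here — the argument is a routine unfolding of the logical relation, structurally parallel to the earlier compatibility lemmas. The only care needed is the bookkeeping: upgrading \cref{lem:norm-step-inversion} from single steps to $\rightarrow^\ast$ (an easy induction that relies on \cref{thm:preservation} to keep the intermediate terms typed), and the variable-hygiene manipulations that let the value substitution $\delta$ commute past the binders $x_1,x_2$ and that identify $\delta(a_i)$ as a member of $\Norm{x_i:\alpha_i \vdashL \alpha'}$.
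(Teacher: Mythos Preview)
The proposal is correct and follows essentially the same approach as the paper: reduce the scrutinee to a value $[v_1,v_2]$, $\beta$-reduce to $a_1\{v_1/x_1\} + a_2\{v_2/x_2\}$, invoke compatibility of $+$, and then handle the open case by substituting $\delta$ through and reducing to the closed case. Your version is slightly more explicit about the step-inversion/preservation bookkeeping and about variable hygiene, but these are exactly the details the paper leaves implicit.
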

\begin{proof}
    \proofpart[Closed expressions]
    Since $a \in \Norm{\alpha_1 \oplus \alpha_2}$ we know there exists values $\cdot \vdashL v_i : \alpha_i$ such that $a \rightarrow^\ast [v_1,v_2]$. Then
    \begin{align*}
        \caseof{a}{\inl{x_1} \rightarrow a_1 \mid \inr{x_2} \rightarrow a_2}
        &\rightarrow^\ast
        \caseof{[v_1,v_2]}{\inl{x_1} \rightarrow a_1 \mid \inr{x_2} \rightarrow a_2} \\
        &\rightarrow_\beta
        a_1\{v_1/x_1\} + a_2\{v_2/x_2\}
    \end{align*}
    By the assumption in the lemma statement, we know that $a_i\{v_i/x_i\} \in \Norm{\alpha'}$, and so the result follows from the compatibility of $+$ (\cref{lem:norm-compatibility-plus}).

    \proofpart[Open expressions] Let $\delta \in \ValRelation{\Delta}$.
    We want to show that
    \begin{align*}
        &\delta\left(\caseof{a}{\inl{x_1} \rightarrow a_1 \mid \inr{x_2} \rightarrow a_2}\right)
        \\
        &= \caseof{\delta(a)}{\inl{x_1} \rightarrow \delta(a_1) \mid \inr{x_2} \rightarrow \delta(a_2)} \in \Norm{\Delta \vdashL \alpha'}
    \end{align*}
    Then clearly $\delta(a) \in \Norm{\alpha_1 \oplus \alpha_2}$. I further claim that $\delta(a_i) \in \Norm{x_i:\alpha_i \vdashL \alpha'}$, since for any
    $\delta_i \in \ValRelation{x_i:\alpha_i}$ we have $\delta_i(\delta(a_i)) = (\delta,x_i \mapsto \delta_i(x))(a_i) \in \Norm{\alpha'}$ where $\delta,x\mapsto \delta_i(x) \in \ValRelation{\Delta,v_i:\alpha_i}$. The result follows from the first half of this proof.
\end{proof}





\begin{lemma}[Compatibility of $\lambda$]
    If $a \in \Norm{\Delta,x:\alpha \vdashL \alpha'}$ then $\lambda x.a \in \Norm{\Delta \vdashL \alpha \lolli \alpha'}$.
\end{lemma}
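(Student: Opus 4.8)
The plan is to unfold the definitions of $\Norm{-}$, $\Val{-}$, and $\ValRelation{-}$ and observe that the result falls out of the hypothesis once substitution is reorganized. First, the typing obligation $\Delta \vdashL \lambda x.a : \alpha \lolli \alpha'$ is immediate: by definition of $\Norm{\Delta,x:\alpha \vdashL \alpha'}$ we have $\Delta,x:\alpha \vdashL a : \alpha'$, and the $\lambda$-introduction rule of \cref{fig:typing-rules-C-simple} gives $\Delta \vdashL \lambda x.a : \alpha \lolli \alpha'$. It then remains to show that $\delta(\lambda x.a) \in \Norm{\alpha \lolli \alpha'}$ for every $\delta \in \ValRelation{\Delta}$.

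So fix $\delta \in \ValRelation{\Delta}$. After $\alpha$-renaming $x$ to be fresh for $\delta$ (so that $x \notin \dom(\delta)$ and $x$ is not free in any value in the range of $\delta$), we have $\delta(\lambda x.a) = \lambda x.\delta(a)$. This term is already a syntactic value, so it reduces to itself in zero steps, and hence it suffices to show $\lambda x.\delta(a) \in \Val{\alpha \lolli \alpha'}$. By the definition of $\Val{\alpha \lolli \alpha'}$, this amounts to showing that for every $v \in \Val{\alpha}$ we have $\delta(a)\{v/x\} \in \Norm{\alpha'}$.

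Here is the key step: because $x$ was chosen fresh for $\delta$, substitution commutes, giving $\delta(a)\{v/x\} = (\delta, x \mapsto v)(a)$. Since $\delta \in \ValRelation{\Delta}$ and $v \in \Val{\alpha} = \Val{(\Delta,x:\alpha)(x)}$, the extended map satisfies $(\delta, x \mapsto v) \in \ValRelation{\Delta, x:\alpha}$ by \cref{eqn:VDelta-defn}. The hypothesis $a \in \Norm{\Delta,x:\alpha \vdashL \alpha'}$ then yields $(\delta, x\mapsto v)(a) \in \Norm{\alpha'}$, which is exactly what we needed. This completes the proof.

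The argument is short, and there is no real mathematical obstacle; the only thing that requires care is the variable-hygiene bookkeeping in the second and third paragraphs, namely ensuring $x$ is fresh so that $\delta(\lambda x.a) = \lambda x.\delta(a)$ and $\delta(a)\{v/x\} = (\delta,x\mapsto v)(a)$. This is the standard substitution-commutation reasoning used throughout the normalization proof, so I would either cite it as a routine lemma about capture-avoiding substitution or dispatch it in one line.
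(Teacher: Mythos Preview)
Your proof is correct and follows exactly the same approach as the paper's: fix $\delta \in \ValRelation{\Delta}$, observe $\delta(\lambda x.a) = \lambda x.\delta(a)$ is already a value, and reduce membership in $\Val{\alpha \lolli \alpha'}$ to the hypothesis via the extended substitution $(\delta, x \mapsto v) \in \ValRelation{\Delta,x:\alpha}$. You are simply more explicit about the typing obligation and the variable-hygiene bookkeeping that the paper leaves implicit.
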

\begin{proof}
    Let $\delta \in \ValRelation{\Delta}$. We want to show that $\delta(\lambda x.a) = \lambda x.\delta(a) \in \Val{\alpha \lolli \alpha'}$; in other words, that for all $v \in \Val{\alpha}$ we have $\delta(a)\{v/x\} \in \Norm{\alpha'}$. This follows directly from the hypothesis that $a \in \Norm{\Delta,x:\alpha \vdashL \alpha'}$.
\end{proof}

\begin{lemma}[Compatibility of application]
    If $a_1 \in \Norm{\Delta_1 \vdashL \alpha \lolli \alpha'}$ and $a_1 \in \Norm{\Delta_2 \vdashL \alpha}$ then $a_1 a_2 \in \Norm{\Delta_1,\Delta_2 \vdashL \alpha'}$.
\end{lemma}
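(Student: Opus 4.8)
The plan is to follow the same two-step template used for the preceding compatibility lemmas: first reduce to the closed case by splitting the closing substitution, then unwind the definitions of $\Norm{-}$ and $\Val{-}$. (I would also note in passing that the statement should read ``if $a_1 \in \Norm{\Delta_1 \vdashL \alpha \lolli \alpha'}$ and $a_2 \in \Norm{\Delta_2 \vdashL \alpha}$ then $a_1 a_2 \in \Norm{\Delta_1,\Delta_2 \vdashL \alpha'}$'', the second hypothesis being about $a_2$.)

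For the open case, given $\delta \in \ValRelation{\Delta_1,\Delta_2}$, I would use the disjointness of $\dom(\Delta_1)$ and $\dom(\Delta_2)$ to split $\delta$ into its restrictions $\delta_1 \in \ValRelation{\Delta_1}$ and $\delta_2 \in \ValRelation{\Delta_2}$, observing that $\delta(a_1 a_2) = \delta_1(a_1)\,\delta_2(a_2)$ and that by hypothesis $\delta_1(a_1) \in \Norm{\alpha \lolli \alpha'}$ and $\delta_2(a_2) \in \Norm{\alpha}$. This reduces everything to the closed statement: if $a_1 \in \Norm{\alpha \lolli \alpha'}$ and $a_2 \in \Norm{\alpha}$ then $a_1 a_2 \in \Norm{\alpha'}$.

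For the closed case I would argue as follows. By $a_1 \in \Norm{\alpha \lolli \alpha'}$ there is a value $\lambda x.a \in \Val{\alpha \lolli \alpha'}$ with $a_1 \rightarrow^\ast \lambda x.a$, and by the definition of $\Val{\alpha \lolli \alpha'}$ we have $a\{v/x\} \in \Norm{\alpha'}$ for every $v \in \Val{\alpha}$. By $a_2 \in \Norm{\alpha}$ there is $v_2 \in \Val{\alpha}$ with $a_2 \rightarrow^\ast v_2$. Using the contextual rules of \cref{fig:operational-semantics-C-contextual} ($a_1 a_2 \rightarrow a_1' a_2$ when $a_1 \rightarrow a_1'$, and $v\, a_2 \rightarrow v\, a_2'$ when $a_2 \rightarrow a_2'$) I would chain the reductions
\[ a_1 a_2 \;\rightarrow^\ast\; (\lambda x.a)\, a_2 \;\rightarrow^\ast\; (\lambda x.a)\, v_2 \;\rightarrow_\beta\; a\{v_2/x\}, \]
and then instantiate the property of $\lambda x.a$ at $v_2$ to obtain $a\{v_2/x\} \in \Norm{\alpha'}$, hence $a\{v_2/x\} \rightarrow^\ast v'$ for some $v' \in \Val{\alpha'}$ and so $a_1 a_2 \rightarrow^\ast v'$. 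Well-typedness $\cdot \vdashL a_1 a_2 : \alpha'$ follows from the application typing rule of \cref{fig:typing-rules-C-simple} applied to the typings $\cdot \vdashL a_1 : \alpha \lolli \alpha'$ and $\cdot \vdashL a_2 : \alpha$ coming from $a_1, a_2 \in \Norm{-}$; alternatively, the membership $a_1 a_2 \in \Norm{\alpha'}$ can be assembled directly from repeated use of \cref{lem:norm-step-inversion} together with \cref{thm:preservation}.

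I do not expect any real obstacle here — this is among the lightest of the compatibility lemmas, since $\Norm{\alpha \lolli \alpha'}$ already packages exactly the hypothesis needed about the function value. The only spot warranting a moment of care is the factorization of the closing substitution $\delta$ along the disjoint union $\Delta_1,\Delta_2$, which is immediate from the side condition $\dom(\Delta_1)\cap\dom(\Delta_2)=\emptyset$ built into that notation.
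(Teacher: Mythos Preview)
Your proposal is correct and follows essentially the same approach as the paper's proof: pick a closing substitution, reduce the function to a lambda value and the argument to a value, apply the $\beta$ rule, and conclude via the defining property of $\Val{\alpha \lolli \alpha'}$. The only cosmetic difference is that you explicitly split $\delta$ into restrictions $\delta_1,\delta_2$, whereas the paper applies $\delta$ directly to both $a_1$ and $a_2$ (valid because $\ValRelation{\Delta}$ in \cref{eqn:VDelta-defn} only requires $\dom(\Delta)\subseteq\dom(\delta)$); your observation about the $a_1$/$a_2$ typo in the hypothesis is also correct.
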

\begin{proof}
    Let $\delta \in \ValRelation{\Delta_1,\Delta_2}$. Since $a_1 \in \Norm{\Delta_1 \vdashL \alpha \lolli \alpha'}$ there exists a value $\lambda x.a_1' \in \Val{\alpha \lolli \alpha'}$ such that $\delta(a_1) \rightarrow^\ast \lambda x.a_1'$.
    In particular, this means $a_1' \in \Norm{x:\alpha \vdashL \alpha'}$. Similarly there eixsts some $v \in \Val{\alpha}$ such that
    $\delta(a_2) \rightarrow^\ast v$. Then
    \[
        \delta(a_1 a_2) \rightarrow^\ast (\lambda x.a_1') v \rightarrow_\beta a_1'\{v/x\} \in \Norm{\alpha'}
    \]
\end{proof}

\begin{lemma} \label{lem:norm1}
    If $\Delta \vdashL a : \alpha$ then $a \in \Norm{\Delta \vdashL \alpha}$.
\end{lemma}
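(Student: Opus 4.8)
The plan is to prove \cref{lem:norm1} by induction on the typing derivation $\Delta \vdashL a : \alpha$, dispatching each rule to the corresponding compatibility lemma for $\Norm{-}$. The introduction and elimination forms for $\oplus$, $+$, scalar multiplication, $\lolli$, and application are handled directly by \cref{lem:norm-compatibility}, \cref{lem:norm-compatibility-plus}, and the compatibility lemmas for scalar multiplication, case, $\lambda$, and application, simply instantiating the induction hypotheses at the subterms. The only rules not covered verbatim by a previously stated compatibility lemma are the variable rule, the $\letin{x}{a}{a'}$ rule, and the $\zero$ rule, so I would supply short direct arguments for these three.

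For the variable rule $\Delta = x:\alpha$: given $\delta \in \ValRelation{x:\alpha}$, by \cref{eqn:VDelta-defn} we have $\delta(x) \in \Val{\alpha}$, which is already a value and hence lies in $\Norm{\alpha}$, so $x \in \Norm{x:\alpha \vdashL \alpha}$. For the $\letin{x}{a}{a'}$ rule I would first prove an auxiliary compatibility statement in the style of the application case: if $a \in \Norm{\Delta_1 \vdashL \alpha}$ and $a' \in \Norm{\Delta_2,x:\alpha \vdashL \alpha'}$, then $\letin{x}{a}{a'} \in \Norm{\Delta_1,\Delta_2 \vdashL \alpha'}$; the proof takes $\delta \in \ValRelation{\Delta_1,\Delta_2}$, restricts it to $\delta_i \in \ValRelation{\Delta_i}$, uses $\delta(a) \rightarrow^\ast v$ with $v \in \Val{\alpha}$, and then $\delta(\letin{x}{a}{a'}) \rightarrow^\ast \letin{x}{v}{\delta(a')} \rightarrow_\beta \delta(a')\{v/x\} \in \Norm{\alpha'}$ because $(\delta, x \mapsto v) \in \ValRelation{\Delta_2,x:\alpha}$. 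For the $\zero_\alpha$ rule I would do a side induction on $\alpha$: $\zero_\R \rightarrow_\beta 0 \in \Val{\R}$; $\zero_{\alpha_1 \oplus \alpha_2} \rightarrow_\beta [\zero_{\alpha_1},\zero_{\alpha_2}]$, which is in $\Norm{\alpha_1 \oplus \alpha_2}$ by the inner induction hypothesis and \cref{lem:norm-compatibility}; and $\zero_{\alpha \lolli \alpha'} \rightarrow_\beta \lambda x.\zero_{\alpha'}$, which is in $\Norm{\alpha \lolli \alpha'}$ by the inner induction hypothesis and compatibility of $\lambda$; in each sub-case \cref{lem:norm-step-inversion} lets us conclude $\zero_\alpha \in \Norm{\alpha}$, and since $\zero_\alpha$ is closed this is stable under every substitution $\delta$.

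I do not expect a deep obstacle here: once the compatibility lemmas are in hand the argument is bookkeeping. The points requiring a little care are (i) confirming that restricting $\delta \in \ValRelation{\Delta_1,\Delta_2}$ to the two sub-domains gives $\delta \in \ValRelation{\Delta_i}$, which is immediate from \cref{eqn:VDelta-defn}, and (ii) actually establishing the $\letin{}{}{}$ compatibility statement, since it is invoked in the corresponding typing case but is not among the earlier listed compatibility lemmas. With \cref{lem:norm1} proved, \cref{thm:normalization} follows by combining it with the determinism of $\rightarrow$, which supplies uniqueness of the normal form.
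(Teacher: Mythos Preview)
Your proposal is correct and takes essentially the same approach as the paper: induction on the typing derivation, dispatching each case to the appropriate compatibility lemma. The paper's proof is a single sentence (``Follows from the compatibility lemmas by induction on the typing judgment''), so your explicit handling of the variable, $\letin{}{}{}$, and $\zero$ cases---which indeed lack separately stated compatibility lemmas in the paper---simply fills in details the paper leaves implicit.
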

\begin{proof}
    Follows from the compatibility lemmas by induction on the typing judgment.
\end{proof}

Finally, we can prove our main theorem about strong normalization:

\begin{theorem}[Strong normalization, \cref{thm:normalization}]
    If $\cdot \vdashL a : \alpha$ then there exists a unique value $v$ (up to the usual notions of $\alpha$-equivalence) such that $a \rightarrow^\ast v$.
\end{theorem}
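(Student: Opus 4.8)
The plan is to separate the statement into its two components --- the existence of a value normal form, and its uniqueness --- and discharge each with a short argument built on the machinery already developed above.

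For existence, I would invoke the fundamental lemma of the logical relation, \cref{lem:norm1}: every well-typed $\Delta \vdashL a : \alpha$ satisfies $a \in \Norm{\Delta \vdashL \alpha}$. Specializing to the empty context, the empty value substitution lies trivially in $\ValRelation{\cdot}$, so $a \in \Norm{\alpha}$, which by definition supplies some $v \in \Val{\alpha}$ with $a \rightarrow^\ast v$. It then only remains to observe that every element of $\Val{\alpha}$ is syntactically a value in the sense of the value grammar (a constant, a pair of values, or a $\lambda$-abstraction), which is immediate by induction on $\alpha$ from the definition of $\Val{\cdot}$. Thus $a$ reduces to a value.

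For uniqueness, I would argue that the small-step relation $\rightarrow$ is deterministic. The $\beta$-rules of \cref{fig:operational-semantics-C-simple} and the contextual rules of \cref{fig:operational-semantics-C-contextual} are pairwise non-overlapping and enforce a fixed leftmost-innermost (call-by-value) evaluation order, so from any expression at most one rule applies and it selects a unique redex. Combined with the observation that values are normal forms (no rule fires on a value $v$), this makes the set of reducts of $a$ linearly ordered, so $a$ has at most one value normal form. Together with existence, the value $v$ with $a \rightarrow^\ast v$ is unique, the only residual ambiguity being the choice of bound-variable names inside $\lambda$-abstractions, i.e. uniqueness up to $\alpha$-equivalence as stated.

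The substantive work is already packaged into \cref{lem:norm1} and its supporting compatibility lemmas, so the theorem itself is essentially a corollary; the main obstacle is therefore not in this proof but in those lemmas --- in particular the compatibility lemma for $+$ (\cref{lem:norm-compatibility-plus}), which requires its own induction on the type $\alpha$ because addition is propagated structurally through pairs and $\lambda$-abstractions and only resolves to a genuine reduction step at the base type $\R$. Within the present argument, the one point I would verify carefully is the determinism claim against the contextual rules for pairs and application, making sure there is no overlap between a rule that reduces a left component and one that reduces a right component (this is what the side conditions ``$v_1$ a value'' in rules such as $[v_1, a_2] \rightarrow [v_1, a_2']$ are there to rule out); a subtle gap there would break uniqueness.
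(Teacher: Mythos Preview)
Your proposal is correct and follows essentially the same approach as the paper: existence via \cref{lem:norm1} (specialized to the empty context) and uniqueness via determinism of the call-by-value step relation. You supply a bit more detail than the paper---e.g.\ that the empty substitution lies in $\ValRelation{\cdot}$ and that elements of $\Val{\alpha}$ are syntactic values---but the argument is the same.
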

\begin{proof}
    If $\cdot \vdashL a : \alpha$ then, by \cref{lem:norm1} we know $a \in \Norm{\alpha}$ so, by definition, there exists a value $v$ such that $a \rightarrow^\ast v$.
    The uniqueness of $v$ comes from the fact that the step relation is deterministic.
\end{proof}

\subsection{Equivalence Relation}
\label{app:lambdaC-equivalence-relation}

In this section we prove the fundamental property of $\lrequiv$ (\cref{thm:lrequiv-fundamental-property}), which is that $\Delta \vdashL a : \alpha$ implies $\Delta \vdashL a \lrequiv a : \alpha$. The property follows directly from \cref{lem:lrequiv-compatability}, which proves that the compatibility lemmas shown in \cref{fig:r-modules-compatibility} are sound.

\begin{lemma} \label{lem:deterministic-ExprRelation}
    If $a_1 \rightarrow^\ast a_2$ and $a_1' \rightarrow^\ast a_2'$ and $(a_2,a_2') \in \ExpRelation{\alpha}$ then $(a_1,a_1') \in \ExpRelation{\alpha}$.
\end{lemma}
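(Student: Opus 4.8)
The plan is to unfold the definition of $\ExpRelation{\alpha}$ and verify each of its three conjuncts for the pair $(a_1,a_1')$. For the two well-typedness conjuncts $\cdot \vdashL a_1 : \alpha$ and $\cdot \vdashL a_1' : \alpha$, I would appeal to \cref{thm:preservation}: the reduction relation is only defined on well-typed closed terms of a fixed type, so $a_1$ has some type, and iterating type preservation along $a_1 \rightarrow^\ast a_2$ shows this type coincides with that of $a_2$, which is $\alpha$ since $(a_2,a_2') \in \ExpRelation{\alpha}$; the argument for $a_1'$ is symmetric.

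For the third conjunct, suppose $a_1 \downarrow v_1$ and $a_1' \downarrow v_1'$, i.e. $a_1 \rightarrow^\ast v_1$ and $a_1' \rightarrow^\ast v_1'$ with $v_1,v_1'$ values. The key observation is that $\rightarrow$ is deterministic (it implements a fixed call-by-value evaluation order, as already used in the proof of \cref{thm:normalization}), so the reduction sequence issuing from $a_1$ is linear: any two terms reachable from $a_1$ lie on one common path. Since $v_1$ is a value and hence a normal form, while $a_2$ is also reachable from $a_1$, the term $a_2$ must occur no later than $v_1$ on that path, i.e. $a_2 \rightarrow^\ast v_1$, so $a_2 \downarrow v_1$. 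The same reasoning gives $a_2' \downarrow v_1'$. Now applying the hypothesis $(a_2,a_2') \in \ExpRelation{\alpha}$ yields $(v_1,v_1') \in \ValRelation{\alpha}$, which is exactly what is required.

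I do not expect a genuine obstacle here; the only points needing care are (i) noting that $a_1 \rightarrow^\ast a_2$ carries an implicit well-typedness hypothesis on $a_1$ that \cref{thm:preservation} propagates to pin its type down to $\alpha$, and (ii) combining determinism of $\rightarrow$ with "values are normal forms" to slide $v_1$ past $a_2$ (and $v_1'$ past $a_2'$). No new lemmas are needed beyond \cref{thm:preservation} and the determinism already relied upon for \cref{thm:normalization}.
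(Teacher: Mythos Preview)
Your proposal is correct and follows essentially the same approach as the paper: both arguments reduce to the fact that determinism of $\rightarrow$ (equivalently, uniqueness of normal forms from \cref{thm:normalization}) forces $a_2 \downarrow v_1$ whenever $a_1 \rightarrow^\ast a_2$ and $a_1 \downarrow v_1$, after which the hypothesis $(a_2,a_2') \in \ExpRelation{\alpha}$ applies directly. You are simply more explicit than the paper about the two typing conjuncts in $\ExpRelation{\alpha}$, which the paper leaves implicit.
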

\begin{proof}
    A consequence of strong normalization (\cref{thm:normalization}) is that $a_1 \rightarrow^\ast a_2$ implies that $a_1 \rightarrow^\ast v$ if and only if $a_2 \rightarrow^\ast v$. Since $(a_2,a_2') \in \ExpRelation{\alpha}$, this means there exist values $(v,v') \in \ValRelation{\alpha}$ such that $a_2 \rightarrow^\ast v$ and $a_2' \rightarrow^\ast v'$. But then it is also the case that $a_1 \rightarrow^\ast v$ and $a_1' \rightarrow^\ast v'$, which completes the proof.
\end{proof}

\begin{figure}
    \centering
    \begin{mathpar}
        \inferrule*[right=$\lrequiv$-var]
            {~}
            {x:\alpha \vdashL x \lrequiv x : \alpha}
            
        \inferrule*[right=$\lrequiv$-let]
            {\Delta_1 \vdashL a_1 \lrequiv  a_1' : \alpha \\
            \Delta_2,x:\alpha \vdashL a_2 \lrequiv a_2' : \alpha'
            }
            {\Delta_1,\Delta_2 \vdashL \letin{x}{a_1}{a_2} \lrequiv \letin{x}{a_1'}{a_2'} : \alpha'
            }
            
        \inferrule*[right=$\lrequiv$-const]
            {r \in \Zd}
            {\cdot \vdashL r \lrequiv r : \R}

        \inferrule*[right=$\lrequiv$-$\cdot$]
            {\Delta_1 \vdashL a_1 \lrequiv  a_1' : \R \\
            \Delta_2 \vdashL a_2 \lrequiv a_2' : \alpha
            }
            {\Delta_1,\Delta_2 \vdashL a_1 \cdot a_2 \lrequiv a_1' \cdot a_2' : \alpha}

        \inferrule*[right=$\lrequiv$-$\zero$]
            {~}
            {\Delta \vdashL \zero \lrequiv \zero : \alpha}

        \inferrule*[right=$\lrequiv$-$+$]
            {\Delta \vdashL a_1 \lrequiv a_1' : \alpha \\
             \Delta \vdashL a_2 \lrequiv a_2' : \alpha
            }
            {\Delta \vdashL a_1 + a_2 \lrequiv a_1' + a_2' : \alpha}

        \inferrule*[right=$\lrequiv$-$\oplus$]
            {\Delta \vdashL a_1 \lrequiv a_1' : \alpha_1 \\
             \Delta \vdashL a_2 \lrequiv a_2' : \alpha_2
            }
            {\Delta \vdashL [a_1, a_2] \lrequiv [a_1', a_2'] : \alpha_1 \oplus \alpha_2}

        \inferrule*[right=$\lrequiv$-$\oplus$]
            {\Delta \vdashL a \lrequiv a' : \alpha_1 \oplus \alpha_2 \\
             \Delta',x_1:\alpha_1 \vdashL a_1 \lrequiv a_1' : \alpha' \\
             \Delta',x_2:\alpha_2 \vdashL a_2 \lrequiv a_2' : \alpha'
            }
            {\Delta, \Delta' \vdashL \caseof{a}{\inl{x_1} \rightarrow a_1 \mid \inr{x_2} \rightarrow a_2} 
                    \lrequiv  \caseof{a'}{\inl{x_1} \rightarrow a_1' \mid \inr{x_2} \rightarrow a_2'}
                    : \alpha'}

        \inferrule*[right=$\lrequiv$-$\lambda$]
            {\Delta,x:\alpha \vdashL a \lrequiv a' : \alpha'}
            {\Delta \vdashL \lambda x.a \lrequiv \lambda x.a' : \alpha \lolli \alpha'}

        \inferrule*[right=$\lrequiv$-app]
            {\Delta_1 \vdashL a_1 \lrequiv a_1' : \alpha \lolli \alpha' \\
             \Delta_2 \vdashL a_2 \lrequiv a_2' : \alpha
            }
            {\Delta_1,\Delta_2 \vdashL a_1 a_2 \lrequiv a_1' a_2' : \alpha'}
    \end{mathpar}
    \caption{Compatibility lemmas for $\lrequiv$.}
    \Description{A set of ten inference rules describing recursively the fact that, if all subterms of two different expressions are equivalent, then the expressions themselves are equivalent.}
    \label{fig:r-modules-compatibility}
\end{figure}

\begin{lemma}[Compatibility Lemmas] \label{lem:lrequiv-compatability}
    The rules in \cref{fig:r-modules-compatibility} are sound for $\lrequiv$.
\end{lemma}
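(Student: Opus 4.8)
The plan is to prove each rule in \cref{fig:r-modules-compatibility} separately, working directly from the definition of $\lrequiv$ in terms of the logical relations $\ValRelation{-}$, $\ExpRelation{-}$, and $\ValRelation{\Delta}$. For a rule whose conclusion is $\Delta \vdashL C[\vec a] \lrequiv C[\vec{a'}] : \alpha$, I would fix an arbitrary pair $(\delta_1,\delta_2) \in \ValRelation{\Delta}$, split it according to the way the rule splits the context (so that each $\delta_i$ restricts to a pair in the logical relation for the context of every premise), push the substitutions through the syntactic constructor, and then show that the two resulting closed expressions lie in $\ExpRelation{\alpha}$. Since the operational semantics is deterministic and strongly normalizing (\cref{thm:normalization}), and $\ExpRelation{\alpha}$ is closed under reduction on both sides (\cref{lem:deterministic-ExprRelation}), it suffices to reduce each side a few $\beta$-steps to a canonical shape and exhibit related values at the end. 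The rules \textsc{c-var}, \textsc{c-const}, and \textsc{c-}$\zero$ are immediate from the definitions; for \textsc{c-}$\zero$ one needs only the auxiliary fact, by induction on $\alpha$ using the $\beta$-rules for $\zero_\alpha$, that $\zero_\alpha \downarrow v$ for a value $v$ with $(v,v) \in \ValRelation{\alpha}$.

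For the compound rules I would first isolate a handful of closed-expression compatibility facts phrased at the level of $\ExpRelation{-}$ itself: if $(a_1,a_1'),(a_2,a_2') \in \ExpRelation{\alpha}$ then $(a_1+a_2,\, a_1'+a_2') \in \ExpRelation{\alpha}$; if $(a_1,a_1') \in \ExpRelation{\R}$ and $(a_2,a_2') \in \ExpRelation{\alpha}$ then $(a_1\cdot a_2,\, a_1'\cdot a_2') \in \ExpRelation{\alpha}$; and analogous statements for pairing, $\lambda$-abstraction, application, and the case eliminator. Each of these is proved by induction on the type $\alpha$, inspecting the $\beta$-rules of \cref{fig:operational-semantics-C-simple}: at $\R$ the operation reduces to a ground constant on both sides, and the constants agree because the inputs were equal; at $\alpha_1 \oplus \alpha_2$ it reduces componentwise and we recurse; at $\alpha \lolli \alpha'$ it reduces to a single $\lambda$ whose body is the same operation at type $\alpha'$, so we recurse and reassemble via the $\lambda$-compatibility fact. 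With these in hand the open-term rules follow mechanically: split the substitution, apply the induction hypothesis to each premise to get related closed expressions (hence, by strong normalization, related values), perform the top-level $\beta$-step of the constructor — for \textsc{c-let} and \textsc{c-app} this feeds a value back into a substitution, and one checks that extending $\delta_i$ by that value stays in $\ValRelation{\Delta',x:\alpha}$; for \textsc{c-case} the case-split reduces to a sum of two substituted branches and we invoke the $\ExpRelation$-level $+$-compatibility — and finally pull the resulting membership in $\ExpRelation{\alpha}$ back along all these reductions using \cref{lem:deterministic-ExprRelation}.

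The main obstacle I anticipate is $+$ (and, to a lesser extent, scalar multiplication) at function type: the reduction $(\lambda x_1.a_1)+(\lambda x_2.a_2) \rightarrow \lambda x.\, a_1\{x/x_1\}+a_2\{x/x_2\}$ duplicates the function's argument across the two summands, so the claim that two such $\lambda$'s are related unfolds — for every pair of related argument values — to exactly the sum-compatibility claim one type smaller. This is the place where an induction on the term rather than on the type would break down, so the proof must be organized as a structural induction on types for the $\ExpRelation$-level helpers, with the $\lambda$-abstraction and application compatibility facts available simultaneously at the smaller type. Everything else — the bookkeeping for context splitting, commuting substitutions past constructors, and closure of $\ExpRelation{-}$ under reduction — is routine.
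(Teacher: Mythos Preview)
Your proposal is correct and follows essentially the same approach as the paper: fix a pair of related substitutions, push them through the constructor, reduce both sides, and conclude via \cref{lem:deterministic-ExprRelation}. The paper's proof only spells out \textsc{c-let}, \textsc{c-}$\lambda$, and \textsc{c-app} and dismisses the rest as ``similar''; your plan is actually more explicit than the paper's in isolating the closed-expression helper facts for $+$ and $\cdot$ and in identifying that these require induction on the type (precisely because of the duplication at $\alpha \lolli \alpha'$ you point out), a subtlety the paper does not comment on.
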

\begin{proof}
    \proofpart[\textsc{c-let}] \label{proof:lrequiv-compatibility:let}

    Let $(\delta,\delta') \in \ValRelation{\Delta_1,\Delta_2}$; it is thus also true that $(\delta,\delta') \in \ValRelation{\Delta_i}$ for both $\Delta_1$ and $\Delta_2$ individually. By the assumption $a_1 \lrequiv a_1'$, we know that
    there exist $(v,v') \in \ValRelation{\alpha}$ such that $\delta(a_1) \rightarrow^\ast v$ and $\delta'(a_2) \rightarrow^\ast v'$.
    So in that case,
    \begin{align*}
        \delta(\letin{x}{a_1}{a_2})
            &\rightarrow^\ast \letin{x}{v}{\delta(a_2)} \rightarrow \delta(a_2)\{v/x\} \\
        \delta'(\letin{x}{a_1'}{a_2'})
            &\rightarrow^\ast \letin{x}{v'}{\delta'(a_2')} \rightarrow \delta'(a_2')\{v'/x\}
    \end{align*}
    Observe that $((\delta,x \mapsto v), (\delta',x \mapsto v')) \in \ValRelation{\Delta_2,x:\alpha}$, and so
    \begin{align*}
         (\delta(a_2)\{v/x\}, \delta'(a_2)\{v'/x\}) \in \ExpRelation{\alpha'}
    \end{align*}
    by the assumption that $a_2 \lrequiv a_2'$. The result follows, therefore, from \cref{lem:deterministic-ExprRelation}.

    \proofpart[\textsc{c-$\lambda$}] Let $(\delta,\delta') \in \ValRelation{\Delta}$. To show
    $(\lambda x.\delta(a), \lambda x.\delta'(a')) \in \ValRelation{\alpha \lolli \alpha'}$, we need to show that, for all $(v,v') \in \ValRelation{\alpha}$, we have $(\delta(a)\{v/x\}, \delta(a')\{v'/x\}) \in \ExpRelation{\alpha'}$. However, this follows from the assumption that $a \lrequiv a'$ and that $((\delta,x \mapsto v), (\delta',x \mapsto v')) \in \ValRelation{\Delta,x:\alpha}$.

    \proofpart[\textsc{c-app}] Let $(\delta,\delta') \in \ValRelation{\Delta}$. By the assumption that $a_1 \lrequiv a_1'$, we know there exist function values $(\lambda x.a_0, \lambda x.a_0') \in \ValRelation{\alpha \lolli \alpha'}$ such that
    $\gamma(a_1) \rightarrow^\ast \lambda x.a_0$ and $\gamma(a_1') \rightarrow^\ast \lambda x.a_0'$. Furthermore, since    
    $a_2 \lrequiv a_2'$, there must exist values $(v,v') \in \ValRelation{\alpha}$ such that $\gamma(a_2) \rightarrow^\ast v$ and $\gamma(a_2') \rightarrow^\ast v'$. Thus
    \begin{align*}
        \gamma(a_1 a_2) &\rightarrow^\ast (\lambda x.a_0) v \rightarrow a_0\{v/x\} \\
        \gamma(a_1' a_2') &\rightarrow^\ast (\lambda x.a_0') v' \rightarrow a_0'\{v'/x\}
    \end{align*}
    The fact that $(\lambda x.a_0, \lambda x.a_0') \in \ExpRelation{\alpha \lolli \alpha'}$ means exactly that $(a_0\{v/x\},a_0'\{v'/x\}) \in \ExpRelation{\alpha'}$, which is all that is required to show that $(\gamma(a_1 a_2), \gamma(a_1' a_2')) \in \ExpRelation{\alpha'}$.

    \proofpart The proofs of the remaining rules are similar to the other cases.

\end{proof}

\section{Categorical Soundness for \(\lambdaC\)}
\label{appendix:categorical-semantics}
\subsection{Properties of \(\CCat\)}
\label{app:CCat-properties}

The (outer) direct sum $A \oplus A'$ is defined as the set of symbols $\{a \oplus a' \mid a\in A ~\text{and}~a' \in A'\}$ 
with basis $\{b^A_i \oplus b^{A'}_i\}$, and 
with addition and scalar multiplication defined by:
\begin{align*}
    r(a \oplus a') &= (ra) \oplus (ra')
    &
    (a \oplus a') + (b \oplus b') &= (a + b) \oplus (a' + b')
\end{align*}
The rank of $A \oplus A'$ is $\rank{A}+\rank{A'}$,
and it is easy to verify that $A \oplus A'$ is the categorical biproduct of $A$ and $B$.

The tensor product $A\otimes B$ of $\Zd$-modules $A$ and $B$ forms a monoidal product on $\CCat$.
The universal property of the tensor product in $\CCat$ allows us to regard elements of $A\otimes A'$ as $\Zd$-linear combinations of symbols $a\otimes a'$ with $a\in A$ and $b\in A'$, modulo the following relations:
\begin{align*}
    r(a \otimes b) &= (ra) \otimes b = a \otimes (rb) \\
    (a \otimes b) + (a' \otimes b) &= (a + a') \otimes b \\
    (a \otimes b) + (a \otimes b') &= a \otimes (b + b')
\end{align*}
The rank of $A \otimes B$ is multiplicative in the ranks of $A$ and $B$, and the basis of $A \otimes A'$ is the Kronecker basis $\{b^A_i \otimes b^{A'}_j\}$.

The \textit{associator} of $\otimes$ is taken to be the isomorphism $A\otimes(B\otimes C)\stackrel\sim\to(A\otimes B)\otimes C$ given by linearly extending $a\otimes(b\otimes c)\mapsto(a\otimes b)\otimes c$; this is natural in $A$, $B$, and $C$. Likewise, the \textit{unitors} are given by $a\mapsto1\otimes a$ and $a\mapsto a\otimes1$. Finally, we can define an involutive natural isomorphism $A\otimes B\cong B\otimes A$, thus making $(\CCat,\otimes)$ into a \textit{symmetric monoidal category}.

Next, we define $A\lolli A'$ to be the $\Zd$-module of all $\Zd$-linear maps $f:A\to A'$, with addition and scalar multiplication defined elementwise. The character group $A^*$ is defined as $A\lolli\Zd$. 

If $A$ has a basis $\{b^A_i\}$, then $A^\ast$ has a corresponding dual basis $\{(b^A_i)^\ast\}$ given by $(b^A_i)^\ast(b^A_j)=\delta_{i,j}$. Likewise, if $C$ has a basis $\{b^C_j\}$, then $A \lolli C$ has a basis consisting of functions $\{b^A_i \lolli b^C_j\}$ given by $(b^A_i \lolli b^C_j)(b^A_k)=\delta_{i,k} b^C_j$.

For each object $B$, the functor  $B\lolli-:\CCat\to C$ is right adjoint to $-\otimes B:\CCat\to\CCat$, which means that there is a natural bijection $\CCat(A\otimes B,C)\cong\CCat(A,B\lolli C)$ that is natural in $A$ and $C$, as well as $B$.
This implies $(\CCat,\otimes)$ is a symmetric closed monoidal category.
The counit of the currying adjunction is the evaluation map $(B\lolli C)\otimes B\to C$, which is given elementwise by $f \otimes b \mapsto f(b)$.
From this we obtain a natural transformation $B\otimes A^* \to A\lolli B$ defined by
\begin{align}\label{eqn-nuclearizer}
    b\otimes\alpha&\mapsto (a\mapsto\alpha(a)b)
\end{align}
Since \cref{eqn-nuclearizer} takes the Kronecker basis $b_i\otimes a_i^*$ to the function basis $a_i\multimap b_i$, it follows by considering the rank that \cref{eqn-nuclearizer} is an isomorphism. In other words, $\CCat$ is a compact closed category. In such a category, it follows that the canonical natural transformation $A \to A^{**}$ into the double dual is an isomorphism
and that the tensor product is de Morgan self-dual, which means the canonical de Morgan natural transformation $B^*\otimes A^* \to (A\otimes B)^*$ is in fact an isomorphism.

Since left adjoints preserve colimits and the biproduct is a coproduct hence a colimit, there is a natural isomorphism of the following form:
\begin{align*} 
    (A\oplus B)\otimes(C\oplus D)&\stackrel\sim\to A\otimes C\,\,\oplus\,\, A\otimes D\,\,\oplus\,\,B\otimes C\,\,\oplus\,\,B\otimes D\\
    (a\oplus b)\otimes(c\oplus d)&\mapsto a\otimes c\,\,\oplus\,\, a\otimes d\,\,\oplus\,\,b\otimes c\,\,\oplus\,\,b\otimes d
\end{align*}
These features imply that $\CCat$ is a model of multiplicative additive linear logic where $\oplus=\&$ is a biproduct and $\Zd$ is the unit of $\otimes=\parr$~\citep{mellies2009categorical}.

\subsection{Completeness}
\label{app:lambdaC-completeness}

In order to prove completeness of $\lambdaC$ (\cref{thm:lambdaC-completeness}), it suffices to prove the property for basis elements only.

\begin{lemma}\label{lem:values-complete}
    If $a \in \interpL{\alpha}$ is a basis element of the $\Zd$-module $\interpL{\alpha}$, then there exist values $\cdot \vdash \up{a} : \alpha$ and $\cdot \vdash \down{a} : \alpha \lolli \R$ such that
    \begin{align*}
        \interpL{\up{a}}(1) &= a 
        && \interpL{\down{a}}(1) = a^\ast 
            = b \mapsto \begin{cases}
                1 & b=a \\
                0 & \text{otherwise}
            \end{cases}
    \end{align*}
\end{lemma}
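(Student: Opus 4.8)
The plan is to induct on the structure of the type $\alpha$, building $\up{a}$ and $\down{a}$ \emph{simultaneously} (the arrow case will consume both halves of the inductive hypothesis). The key preliminary observation is an explicit description of the canonical bases of the interpreting modules, read off from $\interpL{\R}=\Zd$, $\interpL{\alpha_1\oplus\alpha_2}=\interpL{\alpha_1}\oplus\interpL{\alpha_2}$, $\interpL{\alpha_1\lolli\alpha_2}=\interpL{\alpha_1}\lolli\interpL{\alpha_2}$: the unique basis element of $\interpL{\R}$ is $1$; every basis element of $\interpL{\alpha_1\oplus\alpha_2}$ has the form $a_1\oplus 0$ or $0\oplus a_2$ with $a_i$ a basis element of $\interpL{\alpha_i}$; and every basis element of $\interpL{\alpha_1\lolli\alpha_2}$ has the form $p\lolli q$, i.e.\ the map $v\mapsto p^\ast(v)\cdot q$, with $p$ a basis element of $\interpL{\alpha_1}$ and $q$ a basis element of $\interpL{\alpha_2}$.

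For the base case $\alpha=\R$ we set $\up{1}\triangleq 1$ and $\down{1}\triangleq\lambda x.x$; then $\interpL{1}(1)=1$ and $\interpL{\lambda x.x}(1)$ is the identity on $\Zd$, which is exactly $1^\ast$. For $\alpha=\alpha_1\oplus\alpha_2$ and basis element $a=a_1\oplus 0$ (the case $0\oplus a_2$ being symmetric), the inductive hypothesis supplies $\up{a_1}$ and $\down{a_1}$, and I take
\[
   \up{a}\triangleq[\up{a_1},\zero_{\alpha_2}],
   \qquad
   \down{a}\triangleq\lambda y.\caseof{y}{\inl{x_1}\rightarrow\down{a_1}\,x_1 \mid \inr{x_2}\rightarrow\zero_{\R}}.
\]
Then $\interpL{\up{a}}(1)=\interpL{\up{a_1}}(1)\oplus\interpL{\zero}(1)=a_1\oplus 0$, and unfolding the semantics of the case rule from Figure~\ref{fig:cat-semantics} shows $\interpL{\down{a}}(1)$ is the map $v_1\oplus v_2\mapsto a_1^\ast(v_1)$, which is precisely $(a_1\oplus 0)^\ast$. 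For $\alpha=\alpha_1\lolli\alpha_2$ and basis element $a=p\lolli q$, the inductive hypothesis supplies $\up{p},\down{p},\up{q},\down{q}$, and I take
\[
   \up{a}\triangleq\lambda x.(\down{p}\,x)\cdot\up{q},
   \qquad
   \down{a}\triangleq\lambda f.\down{q}\,(f\,\up{p}).
\]
Using the semantics of scalar multiplication and application, $\interpL{\up{a}}(1)$ is the map $v\mapsto p^\ast(v)\cdot q$, which is $p\lolli q$, and $\interpL{\down{a}}(1)$ is the map $g\mapsto q^\ast(g(p))$; a one-line check on basis elements $p'\lolli q'$ shows the latter equals $(p\lolli q)^\ast$.

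The main obstacle is the bookkeeping in the $\oplus$ and application cases rather than any conceptual difficulty. The categorical semantics of Figure~\ref{fig:cat-semantics} is specified only up to the isomorphism $\interpL{\Delta_1,\Delta_2}\cong\interpL{\Delta_1}\otimes\interpL{\Delta_2}$ and threads a (here trivial, $=\Zd$) context factor through subterms, so one must carefully chase how these factors propagate through the $\caseof{\cdot}{\cdots}$ and application rules and confirm that the asserted equations $\interpL{\up{a}}(1)=a$ and $\interpL{\down{a}}(1)=a^\ast$ hold on the nose. These verifications are routine; well-typedness of every term displayed above is immediate from the typing rules in Figure~\ref{fig:typing-rules-C-simple}.
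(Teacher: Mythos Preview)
Your proof is correct and follows essentially the same inductive strategy as the paper: simultaneous construction of $\up{a}$ and $\down{a}$ by induction on $\alpha$, with the $\lolli$ case defined exactly as you give it. Your base case ($\down{1}=\lambda x.x$) and your explicit case split for $\oplus$ are in fact slightly cleaner than the paper's presentation, which writes $\down{a}=\lambda x.x-a$ for $\R$ and invokes $\down{a_i}$ on a zero component in the $\oplus$ case, but the underlying argument is the same.
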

\begin{proof}
    By induction on $\alpha$. 
    
    \proofpart[$\alpha=\R$] If $\alpha=\R$ then define $\up{a}=a \in \Zd$. If $a=1$ then define $\down{a}=\lambda x.x$; otherwise define $\down{a}=\lambda x.\zero$. The proof follows straightforwardly from definition.

    \proofpart[$\alpha=\alpha_1 \oplus \alpha_2$]
    If $\alpha=\alpha_1 \oplus \alpha_2$ then $a$ has the form $a_1 \oplus a_2$ where $a_i \in \interpL{\alpha_i}$ and either $a_1=0$ or $a_2=0$. Define $\up{a}=[\up{a_1},\up{a_2}]$ and 
    \[
        \down{a} = \lambda x.\caseof{x}{\inl{x_1} \rightarrow \down{a_1}(x) \mid \inr{x_2} \rightarrow \down{a_2}(x)}.
    \]
    Then
    \begin{align*}
        \interpL{[\up{a_1},\up{a_2}]}(1) &= \interpL{\up{a_1}}(1) \oplus \interpL{\up{a_2}}(1)
        = a_1 \oplus a_2
    \end{align*}
    by the induction hypothesis, and
    \begin{align*}
        \interpL{\down{a}}(1)
            &= (b_1 \oplus b_2) \mapsto \left( \interpL{\down{a_1}} \boxplus \interpL{\down{a_2}} \right)(b_1 \oplus b_2) \\
            &= (b_1 \oplus b_2) \mapsto (a_1^\ast \boxplus a_2^\ast) (b_1 \oplus b_2)
                \tag{induction hypothesis} \\
            &= (b_1 \oplus b_2) \mapsto a_1^\ast(b_1) + a_2^\ast(b_2)
    \end{align*}
    Since we assumed that either $a_1=0$ or $a_2=0$, one of the two terms above will always be zero. For example, in the case of $a_2=0$, the equation above reduces to
    \begin{align*}
        (b_1 \oplus b_2) \mapsto \begin{cases}
            1 & b_1=a_1 \wedge b_2=0 \\
            0 & \text{otherwise}
        \end{cases}
        &= (a_1 \oplus a_2)^\ast
    \end{align*}
    Similar reasoning holds for $a_1=0$.

    \proofpart[$\alpha=\alpha_1 \lolli \alpha_2$]
    If $\alpha = \alpha_1 \lolli \alpha_2$, then $a$ has the form $a_1 \lolli a_2$ for $a_i$ a basis element of $\interpL{\alpha_i}$. Then define 
    \begin{align*}
        \up{a} &= \lambda x. \down{a_1}(x) \cdot \up{a_2} \\
        \down{a} &= \lambda f. \down{a_2}(f(\up{a_1}))
    \end{align*}
    For $\up{a}$, unfolding definitions we can see that
    \begin{align*}
        \interpL{\lambda x.\down{a_1}(x) \cdot \up{a_2}}(1)
        &= b \mapsto (\interpL{\down{a_1}}(b)) (\interpL{\up{a_2}}(1)) \\
        &= b \mapsto (a_1^\ast(b)) (a_2) 
            \tag{induction hypothesis} \\
        &= b \mapsto \begin{cases}
            a_2 & b=a_1 \\
            0 & \text{otherwise}
        \end{cases} \\
        &= a_1 \lolli a_2
    \end{align*}

    Next, consider the semantics of $\down{a}$. Let $b_1 \lolli b_2$ be a basis element of $\interpL{\alpha_1 \lolli \alpha_2}$. Then
    \begin{align*}
        \interpL{\lambda f. \down{a_2}(f(\up{a_1}))}(1)
        &= (b_1 \lolli b_2) \mapsto \interpL{\down{a_2}} ((b_1 \lolli b_2)\interpL{\up{a_1}}(1)) \\
        &=  (b_1 \lolli b_2) \mapsto a_2^\ast ((b_1 \lolli b_2)a_1)
            \tag{induction hypothesis}
    \end{align*}
    By the definition of $(b_1 \lolli b_2)$, this is equal to
    \begin{align*}
        &(b_1 \lolli b_2) \mapsto \begin{cases}
            a_2^\ast b_2
                & a_1=b_1 \\
            a_2^\ast 0
                &\text{otherwise}
        \end{cases} \\
        &=(b_1 \lolli b_2) \mapsto \begin{cases}
            1
                & a_1=b_1 \wedge a_2=b_2 \\
            0
                &\text{otherwise}
        \end{cases} \\
        &= (a_1 \lolli a_2)^\ast
        \qedhere
    \end{align*}
\end{proof}

For the proofs in \cref{app:lambdaC-soundness}, it is also useful to lift basis elements $g \in \interpL{\Delta}$ to substitutions $\up{g}$ of the variables in $\Delta$ for values.

\begin{lemma} \label{lem:contexts-complete}
    For every basis element $g \in \interpL{\Delta}$ there exists a substitution $\up{g} \in \Val{\Delta}$ such that
    $\interpL{\up{g}}(1)=g$, where $\delta \in \Val{\Delta}$ if and only if, for all $x_i:\alpha_i \in \Delta$, there exists a value $\cdot \vdash v_i : \alpha_i$ such that $\delta(x_i)=v_i$.
\end{lemma}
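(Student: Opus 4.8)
The plan is to induct on the structure of the typing context $\Delta$, invoking \cref{lem:values-complete} at the variable-extension step to lift the ``type-level'' basis element and the induction hypothesis to lift the ``context-level'' one. Throughout, I read the statement $\interpL{\up{g}}(1) = g$ as $\interpL{\up{g}}_{\Delta}(1) = g$, using the context interpretation $\interpL{\delta}_{\Delta}$ defined alongside \cref{fig:cat-semantics}.

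For the base case $\Delta = \cdot$, we have $\interpL{\cdot} = \Zd$, a free $\Zd$-module of rank $1$ whose unique standard basis element is $1 \in \Zd$; so the only basis element $g$ is $1$, and the empty value substitution $\up{g}$, which vacuously lies in $\Val{\cdot}$, satisfies $\interpL{\up{g}}_{\cdot}(1) = 1 = g$ directly from $\interpL{\delta}_{\cdot}(s) = s$. For the inductive step $\Delta = \Delta', x:\alpha$, recall $\interpL{\Delta} = \interpL{\Delta'} \otimes \interpL{\alpha}$ and that the standard basis of a tensor product in $\CCat$ is the Kronecker basis $\{b^{A}_i \otimes b^{B}_j\}$. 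Hence every basis element $g \in \interpL{\Delta}$ has the form $g' \otimes a$ with $g'$ a basis element of $\interpL{\Delta'}$ and $a$ a basis element of $\interpL{\alpha}$. By the induction hypothesis there is $\up{g'} \in \Val{\Delta'}$ with $\interpL{\up{g'}}_{\Delta'}(1) = g'$, and by \cref{lem:values-complete} there is a closed value $\cdot \vdashL \up{a} : \alpha$ with $\interpL{\up{a}}(1) = a$. Setting $\up{g} \triangleq \up{g'}, x \mapsto \up{a}$, we get $\up{g} \in \Val{\Delta}$ since $\Val{\Delta', x:\alpha}$ is defined componentwise, and unfolding $\interpL{\delta}_{\Delta', x:\alpha}(s) = \interpL{\delta}_{\Delta'}(s) \otimes \interpL{\delta(x)}(1)$ yields $\interpL{\up{g}}_{\Delta}(1) = \interpL{\up{g'}}_{\Delta'}(1) \otimes \interpL{\up{a}}(1) = g' \otimes a = g$.

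The only mildly delicate point is the identification of the basis elements of $\interpL{\Delta}$ with Kronecker products of basis elements of the factors (including that $\interpL{\cdot} = \Zd$ contributes the unit $1$); this is exactly the fact recorded earlier that $A \otimes B$ carries the Kronecker basis. Everything else is a direct unfolding of the definitions of $\interpL{-}_{\Delta}$ and $\Val{-}$, so I do not expect a genuine obstacle; the step to watch is merely keeping the context interpretation $\interpL{-}_{\Delta}$ distinct from the type/expression interpretation $\interpL{-}$. This lemma then feeds into the soundness argument for \cref{thm:categorical-equivalence-sound} exactly as sketched there, via $\interpL{\up{g}}(1) = g$ and the substitution–composition lemma.
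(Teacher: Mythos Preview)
Your proposal is correct and matches the paper's proof essentially line for line: both induct on $\Delta$, handle the empty context by taking the empty substitution and $g=1$, and in the step $\Delta=\Delta',x{:}\alpha$ decompose $g=g'\otimes a$ via the Kronecker basis, apply the induction hypothesis to $g'$ and \cref{lem:values-complete} to $a$, and set $\up{g}=\up{g'},x\mapsto\up{a}$. Your added remarks about keeping $\interpL{-}_{\Delta}$ separate from $\interpL{-}$ and about the Kronecker basis are sound and make explicit what the paper leaves implicit.
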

\begin{proof}
    By induction on $\Delta$. If $\Delta=\cdot$ and $g$ is a basis element of $\interpL{\Delta}=\Zd$, then $g=1$. Define $\up{g}=\cdot$.
        Then clearly $\interpL{\up{g}}(1)=g$.

        If $\Delta=\Delta',x:\alpha$ then $g=g' \otimes a$ where $g' \in \interpL{\Delta'}$ and $a \in \interpL{\alpha}$.
        Define $\up{g} = \up{g'},x \mapsto \up{a}$. Then
        \begin{align*}
            \interpL{\up{g}}(1) &= \interpL{\up{g'}}(1) \otimes \interpL{\up{a}}(1) = g' \otimes a = g.
        \end{align*}
\end{proof}

\subsection{Soundness}
\label{app:lambdaC-soundness}

The goal of this section is to prove that if $\Delta \vdash a_1 \lrequiv a_2 : \alpha$, then $\interpL{a_1} = \interpL{a_2}$.

\begin{lemma} \label{lem:environment-interp-append}
    If $\Delta=\Delta_1,\Delta_2$ then $\interpL{\delta}_{\Delta}(1)=\interpL{\delta}_{\Delta_1}(1) \otimes \interpL{\delta}_{\Delta_2}(1)$.
\end{lemma}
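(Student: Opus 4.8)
The plan is to prove this by induction on the structure of the second context $\Delta_2$, unfolding the inductive clauses defining $\interpL{\delta}_{\Delta}$ and invoking the coherence isomorphisms (the right unitor and the associator) of the symmetric monoidal structure on $\CCat$. Recall that the categorical semantics is only specified up to isomorphism of $\interpL{\Delta}$, so throughout we identify objects and elements along the canonical isomorphisms $A \otimes \Zd \cong A$ and $(A \otimes B) \otimes C \cong A \otimes (B \otimes C)$.

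For the base case $\Delta_2 = \cdot$, we have $\Delta = \Delta_1$ and, by definition, $\interpL{\delta}_{\cdot}(1) = 1 \in \Zd$. The required identity $\interpL{\delta}_{\Delta_1}(1) = \interpL{\delta}_{\Delta_1}(1) \otimes 1$ is precisely the right unitor $A \cong A \otimes \Zd$, which identifies $a$ with $a \otimes 1$. For the inductive step, write $\Delta_2 = \Delta_2', x:\alpha$, so that $\Delta = \Delta_1, \Delta_2', x:\alpha$. Unfolding the defining clause for $\interpL{\delta}$ on an extended context gives $\interpL{\delta}_{\Delta_1, \Delta_2', x:\alpha}(1) = \interpL{\delta}_{\Delta_1, \Delta_2'}(1) \otimes \interpL{\delta(x)}(1)$. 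Applying the induction hypothesis to the pair $(\Delta_1, \Delta_2')$ rewrites the left tensor factor as $\interpL{\delta}_{\Delta_1}(1) \otimes \interpL{\delta}_{\Delta_2'}(1)$, and then the associator reassociates the expression to $\interpL{\delta}_{\Delta_1}(1) \otimes \bigl(\interpL{\delta}_{\Delta_2'}(1) \otimes \interpL{\delta(x)}(1)\bigr)$, whose second factor is by definition $\interpL{\delta}_{\Delta_2', x:\alpha}(1)$. This yields the claim.

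The argument is a routine structural induction, and I do not anticipate any genuine obstacle. The only point meriting a word of care is bookkeeping of the monoidal coherence isomorphisms: since the interpretation of contexts is built from nested tensor products and is only defined up to the canonical isomorphisms supplied by the monoidal coherence theorem (and by universality of the biproduct), the reassociations and unit eliminations above are legitimate, and one could, if desired, make them fully explicit by naming the relevant associator/unitor components. No appeal to the typing rules or to the operational semantics is needed here; the statement is purely about the auxiliary interpretation $\interpL{\delta}_{\Delta}$ of value environments.
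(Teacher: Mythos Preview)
Your proposal is correct and follows exactly the approach the paper indicates: the paper's proof is the one-line ``By induction on $\Delta_2$,'' and you have simply filled in the details of that induction, including the routine coherence bookkeeping.
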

\begin{proof}
    By induction on $\Delta_2$.
\end{proof}

Next, we show that substitution corresponds to composition of morphisms in the category.

\begin{lemma} \label{lem:substitution-composition}
    If $\Delta \vdashL a : \alpha$ and $g \in \Val{\Delta}$, then $\interpL{g(a)} = \interpL{a} \circ \interpL{g}$.
\end{lemma}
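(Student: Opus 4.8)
The plan is to prove \cref{lem:substitution-composition} by structural induction on the typing derivation $\Delta \vdashL a : \alpha$, establishing the pointwise identity $\interpL{g(a)}(1) = \interpL{a}\bigl(\interpL{g}(1)\bigr)$ on the generator $1 \in \Zd$, which suffices since both sides are $\Zd$-linear maps $\Zd \to \interpL{\alpha}$ (equivalently, elements of $\interpL{\alpha}$). A small reshaping is needed first: the semantics in \cref{fig:cat-semantics} is written so that $\interpL{a}$ acts on an element $s \in \interpL{\Delta}$, and when $\Delta = \Delta_1,\Delta_2$ we split $s = s_1 \otimes s_2$. I will use \cref{lem:environment-interp-append} to write $\interpL{g}(1) = \interpL{g}_{\Delta_1}(1) \otimes \interpL{g}_{\Delta_2}(1)$ whenever the derivation splits the context, so that the inductive hypotheses for the two sub-derivations can be applied to the two tensor factors separately. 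I will also record the trivial base observation that $g$ restricted to $\dom(\Delta_i)$ lies in $\Val{\Delta_i}$.

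Next I would work through the cases. For the variable rule $\Delta = x:\alpha$: here $g(x) = g(x)$ is the value assigned, $\interpL{g(x)}(1) = \interpL{g(x)}(1)$, and $\interpL{g}_{x:\alpha}(1) = \interpL{g(x)}(1)$ by definition of $\interpL{\cdot}_{\Delta}$ together with the unit isomorphism $\Zd \otimes \interpL{\alpha} \cong \interpL{\alpha}$, while $\interpL{x}$ is the identity; so both sides coincide. For $\zero_\alpha$, $r$, the claim is immediate since these are closed and have constant semantics. For the compound cases $\letin{x}{a}{a'}$, $a \cdot a'$, $a_1 + a_2$, $[a_1,a_2]$, $\caseof{a}{\cdots}$, $\lambda x.a$, and $a_1 a_2$, the substitution $g$ distributes over the term constructor (capture-avoiding, so $g$ does not touch bound variables), the semantics clause in \cref{fig:cat-semantics} is defined compositionally from the semantics of the immediate subterms, and in each case I apply the inductive hypothesis to each subterm — for the context-splitting rules first invoking \cref{lem:environment-interp-append} as above, and for the binder rules ($\letin{}{}{}$, $\lambda$, $\caseof{}{}$) noting that the fresh bound variable is not in $\dom(g)$ so the substituted term's semantics on $s \otimes b$ matches $\interpL{a}$ precomposed with $\interpL{g} \otimes \idmorph{}$. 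Each case then closes by a routine unfolding of both sides to the same expression built from the subterm semantics.

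The one step that takes genuine care rather than bookkeeping is the $\caseof{}{}$ case, because its semantics clause branches on the decomposition $\interpL{a}(s) = c_1 \oplus c_2$: I must check that substituting $g$ into the scrutinee $a$ produces a term whose semantics decomposes as $\interpL{a}\bigl(\interpL{g}_{\Delta_1}(1)\bigr) = c_1' \oplus c_2'$ with $c_i' = c_i$ after applying the IH to $a$, and that the branch bodies $a_i$, with the bound $x_i$ untouched by $g$, satisfy the IH relative to the extended context $\Delta',x_i:\alpha_i$. Linearity of the whole clause in $s$ — it is $\interpL{a_1}(s' \otimes c_1) + \interpL{a_2}(s' \otimes c_2)$ — is what lets the pointwise check at $1$ suffice. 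I do not expect any real obstacle here, only the need to be scrupulous about which variables $g$ acts on and about threading \cref{lem:environment-interp-append} through every multi-premise rule; the $\lambda$ case is where one most easily slips, since there the ambient context passed to the body is $\Delta,x:\alpha$ and one must confirm $\interpL{\lambda x.a}\bigl(\interpL{g}(1)\bigr)$ is the function $b \mapsto \interpL{a}\bigl(\interpL{g}(1) \otimes b\bigr)$, which by IH (with $g$ extended trivially on $x$, i.e. $g$ itself since $x \notin \dom(g)$) equals $b \mapsto \interpL{g(a)}_{\cdot,x:\alpha}$-style evaluation, matching $\interpL{\lambda x.g(a)}(1) = \interpL{g(\lambda x.a)}(1)$.
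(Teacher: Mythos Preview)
Your plan has a real gap in the binder cases. The induction hypothesis you set up is: for $\Delta \vdashL a : \alpha$ and $g \in \Val{\Delta}$, $\interpL{g(a)}(1) = \interpL{a}\bigl(\interpL{g}(1)\bigr)$. But in the $\lambda$ case (and likewise for $\letin{}{}{}$ and $\caseof{}{}$), the body $a'$ is typed in $\Delta,x:\alpha$, and after applying $g$ the term $g(a')$ still has $x$ free. To finish you need $\interpL{g(a')}(b) = \interpL{a'}\bigl(\interpL{g}(1)\otimes b\bigr)$ for arbitrary $b\in\interpL{\alpha}$, which is \emph{not} an instance of your IH: your IH only speaks about fully closing substitutions, and ``extending $g$ trivially on $x$'' does not produce an element of $\Val{\Delta,x{:}\alpha}$. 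The same obstruction appears in the $\letin{}{}{}$ case when you try to apply the IH to $a'$ at the point $\interpL{g}_{\Delta'}(1)\otimes \interpL{g(a)}(1)$.

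The paper fixes exactly this by strengthening the statement before inducting: it proves that for $\Delta,\Delta' \vdashL a : \alpha$ and $g \in \Val{\Delta}$, and for all $g' \in \interpL{\Delta'}$, one has $\interpL{a}\bigl(\interpL{g}_{\Delta}(1)\otimes g'\bigr) = \interpL{g(a)}(g')$. The extra unsubstituted context $\Delta'$ absorbs the bound variable in each binder case (take $\Delta'$ to include $x{:}\alpha$), so the IH applies directly to the body. Once you make this generalization, the case analysis you sketched goes through essentially as you describe; without it, the $\lambda$, $\letin{}{}{}$, and $\caseof{}{}$ cases do not close.
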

\begin{proof}
    It is useful to generalize the statement of this lemma to make it easier to prove by induction.
    Suppose $\Delta,\Delta' \vdash a : \alpha$ and $g \in \Val{\Delta}$. Then it suffices to prove that, for all $g' \in \interpL{\Delta'}$ we have
    \begin{align}
        \interpL{a}(\interpL{g}_{\Delta}(1) \otimes g') = \interpL{g(a)}(g').
    \end{align}
    We will proceed by induction on $\Delta,\Delta' \vdash a : \alpha$.

    \proofpart[$a=x$]
    Suppose $\Delta,\Delta' \vdash x : \alpha$. If $x \in \dom(\Delta)$ then $\Delta=x:\alpha$ and $\Delta'=\cdot$.
    Then
    \begin{align*}
        \interpL{a}(\interpL{g}(1) \otimes g')
        &= g' \cdot \interpL{g(x)}(1) = \interpL{g(x)}(g')
    \end{align*}
    by the linearity of $\interpL{g(x)}$.

    On the other hand, if $x \in \dom(\Delta')$ then $\Delta'=x:\alpha$ and $\Delta=\cdot$.
    Then $g$ has no action on $x$ and so
    \begin{align*}
        \interpL{a}(\interpL{g}_{\Delta}(1) \otimes g')
        = \interpL{\cdot,x:\alpha \vdash x : \alpha}(1 \otimes g') = g' = \interpL{x:\alpha \vdash x : \alpha}(g').
    \end{align*}

    \proofpart[$a=\letin{x}{a_1}{a_2}$]
    If $a=\letin{x}{a_1}{a_2}$ then without loss of generality we assume we can write $\Delta=\Delta_1,\Delta_2$ and $\Delta'=\Delta_1',\Delta_2'$ such that $\Delta_1,\Delta_1' \vdash a_1 : \alpha_1$ and $\Delta_2,\Delta_2',x:\alpha_1 \vdash a_2 : \alpha$. Furthermore, it is the case that $g \in \Val{\Delta_1}$ and $g \in \Val{\Delta_2}$. A basis element of $\interpL{\Delta'}$ has the form $g_1' \otimes g_2'$ for $g_i' \in \interpL{\Delta_i'}$. Then
    \begin{align*}
        \interpL{a}(\interpL{g}_{\Delta}(1) \otimes (g_1' \otimes g_2'))
        &= \interpL{a}\left(
            \left(\interpL{g}_{\Delta_1}(1) \otimes \interpL{g}_{\Delta_2}(1)\right)
            \otimes
            \left(g_1' \otimes g_2'\right)
            \right) \\
        &= \interpL{a_2}\left(
                \interpL{g}_{\Delta_2}(1)
                \otimes
                g_2'
                \otimes
                \interpL{a_1}\left( \interpL{g}_{\Delta_1}(1) \otimes g_1' \right)
            \right)
    \end{align*}
    By the induction hypothesis for $a_1$, this is equal to
    \begin{align*}
        \interpL{a_2}\left(
                \interpL{g}_{\Delta_2}(1)
                \otimes
                g_2'
                \otimes
                \interpL{g(a_1)}\left( g_1' \right)
            \right)
    \end{align*}
    and by the induction hypothesis for $a_2$, equal to
    \begin{align*}
        \interpL{g(a_2)}\left(
                g_2'
                \otimes
                \interpL{g(a_1)}\left( g_1' \right)
            \right)
        &= \interpL{\letin{x}{g(a_1)}{g(a_2)}}(g_1' \otimes g_2') = \interpL{g(a)}(g_1' \otimes g_2').
    \end{align*}

\proofpart[$a=r$] If $a=r$ is a scalar then both $\Delta$ and $\Delta'$ are empty, in which case
    \begin{align*}
        \interpL{a}(\interpL{g}_{\cdot}(1) \otimes g')
        &= \interpL{r}(1 \otimes g') = r \cdot g' = \interpL{g(r)}(g')
    \end{align*}

\proofpart[$a=a_1 \cdot a_2$]
    We can assume $\Delta=\Delta_1,\Delta_2$ and $\Delta'=\Delta_1',\Delta_2'$ such that
    $\Delta_1,\Delta_1' \vdash a_1 : \R$ and $\Delta_2, \Delta_2' \vdash a_2 : \alpha$.
    Furthermore, we can write $g'=g_1' \otimes g_2'$ for $g_1' \in \interpL{\Delta_1'}$ and $g_2' \in \interpL{\Delta_2'}$
    Then
    \begin{align*}
        \interpL{a_1 \cdot a_2}\left(
                \interpL{g}_{\Delta} \otimes (g_1' \otimes g_2')
            \right)
        &= \interpL{a_1 \cdot a_2}\left(
                \left(\interpL{g}_{\Delta_1} \otimes \interpL{g}_{\Delta_2}\right)
                \otimes
                \left(g_1' \otimes g_2'\right)
            \right)
            \tag{\cref{lem:environment-interp-append}}
            \\
        &= \interpL{a_1}\left( \interpL{g}_{\Delta_1} \otimes g_1' \right)
            \cdot
            \interpL{a_2}\left( \interpL{g}_{\Delta_2} \otimes g_2' \right)
        \\
        &= \interpL{g(a_1)}(g_1')
            \cdot
            \interpL{g(a_2)}(g_2')
            \tag{induction hypothesis} \\
        &= \interpL{g(a_1) \cdot g(a_2)}(g_1' \otimes g_2')
    \end{align*}

\proofpart[$a=\zero$] Trivial as both $a$ and $g(a)$ are the zero morphism.

\proofpart[$a=a_1 + a_2$]
    It must be the case that $\Delta \vdash a_i : \alpha$, and so
    \begin{align*}
        \interpL{a_1+a_2}(\interpL{g}_{\Delta} \otimes g')
        &= \interpL{a_1}(\interpL{g}_{\Delta} \otimes g')
            +
            \interpL{a_2}(\interpL{g}_{\Delta} \otimes g')
            \\
        &= \interpL{g(a_1)}(g') + \interpL{g(a_2)}(g')
            \tag{induction hypothesis} \\
        &= \interpL{g(a_1) + g(a_2)}(g')
    \end{align*}


\proofpart[\ensuremath{a=[a_1,a_2]}] Similar to the previous case:
    \begin{align*}
        \interpL{[a_1,a_2]}(\interpL{g}_{\Delta}(1) \otimes g')
        &= \left( \interpL{a_1}(\interpL{g}_{\Delta}(1) \otimes g') \right)
            \oplus
            \left( \interpL{a_1}(\interpL{g}_{\Delta}(1) \otimes g') \right)
        \\
        &= \interpL{g(a_1)}( g')
            \oplus
            \interpL{g(a_2)}( g')
            \tag{induction hypothesis}
        \\
        &= \interpL{[g(a_1), g(a_2)]}(g')
    \end{align*}

\proofpart[$a=\caseof{a'}{\inl{x_1} \rightarrow a_1 \mid \inr{x_2} \rightarrow a_2}$]
    Without loss of generality, write $\Delta=\Delta_1,\Delta_2$ and $\Delta'=\Delta_1',\Delta_2'$ such that
    $\Delta_1,\Delta_1' \vdash a' : \alpha_1 \oplus \alpha_2$ and $\Delta_2,\Delta_2',x_i:\alpha_i \vdash a_i : \alpha$.
    Let $g_i' \in \interpL{\Delta_i'}$. Then:
    \begin{align}
        \interpL{a}\left(\interpL{g}_{\Delta_1,\Delta_1'}(1) \otimes (g_1' \otimes g_2')\right)
        &= \interpL{a}\left( \left(\interpL{g}_{\Delta_1} \otimes \interpL{g}_{\Delta_2}\right)(1)
                            \otimes 
                            (g_1' \otimes g_2')
                     \right)
            \notag
            \\
        &= \interpL{a_1}\left(\interpL{g}_{\Delta_2}(1) \otimes g_2' \otimes c_1\right)
            +
            \interpL{a_2}\left(\interpL{g}_{\Delta_2}(1) \otimes g_2' \otimes c_2\right)
            \label{eqn:substitution-sound-case}
    \end{align}
    where $\interpL{a'}(\interpL{g}_{\Delta_1}(1) \otimes g_1') = c_1 \oplus c_2$.
    By the induction hypothesis for $a'$, we also know that $\interpL{g(a')}(g_1')=c_1 \oplus c_2$. Then, by the induction hypotheses for $t_1$ and $t_2$, \cref{eqn:substitution-sound-case} is equal to
    \begin{align*}
        &\interpL{g(a_1)}\left(g_2' \otimes c_1\right)
        +
        \interpL{g(a_2)}\left(g_2' \otimes c_2\right)
        \\
        &=
        \interpL{\caseof{g(a')}{\inl{x_1} \rightarrow g(a_1) \mid \inr{x_2} \rightarrow g(a_2)}}(g_1' \otimes g_2') \\
        &=
        \interpL{g(a)}(g_1' \otimes g_2')
    \end{align*}

\proofpart[$a=\lambda x.a'$]
    We want to show that $\interpL{\lambda x.g(a')}(g') = \interpL{\lambda x.a'}(\interpL{g}(1) \otimes g')$.
    Unfolding definitions:
    \begin{align*}
        \interpL{\lambda x.g(a')}(g')
        &= \sum_b \delta_b \parr \interpL{g(a')}(g' \otimes b) \\
        &= \sum_b \delta_b \parr \interpL{a'}(\interpL{g}(1) \otimes g' \otimes b) 
            \tag{induction hypothesis} \\
        &= \interpL{\lambda x.a'}(\interpL{g}(1) \otimes g')
    \end{align*}

\proofpart[$a=a_1 a_2$] If $\Delta,\Delta' \vdash a_1 a_2 : \alpha'$, then without loss of generality we can write $\Delta=\Delta_1,\Delta_2$ and $\Delta'=\Delta_1',\Delta_2'$ such that $\Delta_1,\Delta_1' \vdashL a_1 : \alpha \lolli \alpha'$ and $\Delta_2,\Delta_2' \vdashL a_2 : \alpha$. Further, we can write $g'=g_1' \otimes g_2'$ and $\interpL{g}_{\Delta}(1) = \interpL{g_1}_{\Delta_1}(1) \otimes \interpL{g_2}_{\Delta_2}(1)$. Then
    \begin{align*}
    \interpL{a_1 a_2}(\interpL{g}_{\Delta_1,\Delta_2}(1) \otimes g_1' \otimes g_2')
        &=  \interpL{a_1 a_2}(\interpL{g}_{\Delta_1}(1) \otimes \interpL{d}_{\Delta_2}(1) \otimes g_1' \otimes g_2' \\
        &=\left(\interpL{a_1}(\interpL{g}_{\Delta_1}(1) \otimes g_1')\right)
            \left(\interpL{a_2}(\interpL{g}_{\Delta_2}(1) \otimes g_2')\right) \\
        &=\left(\interpL{g(a_1)}(g_1')\right)
            \left(\interpL{g(a_2)}(g_2')\right)
             \tag{induction hypothesis} \\
        &= \interpL{g(a_1 a_2)}(g_1' \otimes g_2')
    \end{align*}
\end{proof}

Next we show that the operational semantics is preserved by the categorical semantics.

\begin{theorem} \label{thm:step-sound}
    If $\cdot \vdash a : \alpha$ and $a \rightarrow a'$ then $\interpL{a}=\interpL{a'}$.
\end{theorem}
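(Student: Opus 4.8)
The plan is to induct on the derivation of $a \rightarrow a'$. By \cref{fig:operational-semantics-C-contextual} this splits into the congruence (evaluation-context) rules and the $\beta$-reduction rules of \cref{fig:operational-semantics-C-simple}. For a congruence rule — say $\letin{x}{a_1}{a''} \rightarrow \letin{x}{a_1'}{a''}$ from $a_1 \rightarrow a_1'$, or $[a_1,a_2] \rightarrow [a_1',a_2]$ from $a_1 \rightarrow a_1'$ — I would first note that whenever the compound term is closed, the reducing subterm is closed as well: the relevant typing rule splits the context as $\Delta,\Delta'$, and closedness of the whole forces $\Delta = \cdot$. Hence the induction hypothesis applies and gives equality of the subterm denotations. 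Since the semantics in \cref{fig:cat-semantics} is defined compositionally from the denotations of the immediate subterms, and since no congruence rule reduces underneath a $\lambda$ or inside a \textbf{case} branch, equality of the subterm denotations yields equality of the whole by a one-line unfolding of \cref{fig:cat-semantics}.

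For the $\beta$-reduction rules the workhorse is \cref{lem:substitution-composition}, $\interpL{g(a)} = \interpL{a}\circ\interpL{g}$ for a value substitution $g$. Instantiating it with $g = (x \mapsto v)$, so that $\interpL{g}_{x:\alpha}(1) = 1 \otimes \interpL{v}(1)$, disposes of the three rules whose right-hand sides are substitution instances: $\letin{x}{v}{a'} \rightarrow a'\{v/x\}$, $(\lambda x.a)\,v \rightarrow a\{v/x\}$, and — together with the fact that $\interpL{-}$ sends $+$ to addition of morphisms — $\caseof{[v_1,v_2]}{\inl{x_1}\rightarrow a_1 \mid \inr{x_2}\rightarrow a_2} \rightarrow a_1\{v_1/x_1\} + a_2\{v_2/x_2\}$; in this last case one additionally uses that $\interpL{[v_1,v_2]}(1) = \interpL{v_1}(1) \oplus \interpL{v_2}(1)$ is exactly the decomposition $c_1 \oplus c_2$ on which the semantics of \textbf{case} pattern-matches. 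The remaining $\beta$-rules involve no substitution and follow by unfolding \cref{fig:cat-semantics} on both sides, using that $\interpL{-}$ is strictly compatible with $\oplus$, scalar multiplication, and $+$: for instance $\interpL{\zero_{\alpha_1\oplus\alpha_2}}$ is the zero morphism, which equals $0 \oplus 0$, and $\interpL{r\cdot[v_1,v_2]}(1) = r\cdot(\interpL{v_1}(1)\oplus\interpL{v_2}(1)) = (r\interpL{v_1}(1))\oplus(r\interpL{v_2}(1)) = \interpL{[r\cdot v_1, r\cdot v_2]}(1)$. The one case needing a small extra ingredient is $(\lambda x_1.a_1) + (\lambda x_2.a_2) \rightarrow \lambda x.(a_1\{x/x_1\} + a_2\{x/x_2\})$: here I would first record that $\interpL{-}$ is invariant under renaming of a bound variable — immediate, since $\interpL{a}$ depends only on the position of a variable in its context — so that $\interpL{a_i\{x/x_i\}}(s\otimes b) = \interpL{a_i}(s\otimes b)$, and the two sides then match directly.

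I do not expect a genuine obstacle: granted \cref{lem:substitution-composition}, which is proved separately, this is a finite case analysis. The only points requiring care are bookkeeping ones — tracking the tensor-product structure of $\interpL{\Delta}$ and working ``up to the canonical isomorphism of $\interpL{\Delta}$'' as the caption of \cref{fig:cat-semantics} permits (relevant, e.g., when $\Delta$ is empty and $\interpL{\cdot}\otimes\interpL{\cdot}$ must be identified with $\Zd$), and supplying the renaming-invariance remark used in the $\lambda$-addition case. This theorem is one of the two key lemmas feeding \cref{thm:categorical-equivalence-sound}, so establishing it carefully is the main preparatory step for the soundness of $\lrequiv$.
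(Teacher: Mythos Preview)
Your proposal is correct and follows essentially the same approach as the paper: induction on the step relation, dispatching the substitution-based $\beta$-rules via \cref{lem:substitution-composition} and the remaining ones by unfolding \cref{fig:cat-semantics}. The paper's proof is slightly terser---it states that ``it suffices to show the result for just the $\beta$-reduction rules'' and leaves the congruence cases implicit, and it handles the $\lambda$-addition case assuming the same bound variable rather than explicitly invoking renaming invariance---but your more careful treatment of these points is welcome and does not constitute a different route.
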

\begin{proof}
    It suffices to show the result for just the $\beta$-reduction rules.

    \proofpart[$\letin{x}{v}{a'} \rightarrow_\beta a'\{v/x\}$]
    Unfolding definitions, we have that
    \begin{align*}
        \interpL{\letin{x}{v}{a'}}(1)
        &= \interpL{a'}(1 \otimes \interpL{v}(1))
        = \interpL{a'\{v/x\}}(1)
    \end{align*}
    by \cref{lem:substitution-composition}.
    
    \proofpart[$\zero_{\R} \rightarrow_\beta 0$]
    Follows by unfolding definitions---both produce the zero map.

    \proofpart[\ensuremath{\zero_{\alpha_1 \oplus \alpha_2} \rightarrow_\beta [\zero_{\alpha_1}, \zero_{\alpha_2}]}]
    It suffices to see that if $\zero_{\alpha_i}$ is the zero map, then $\interpL{[\zero_{\alpha_1}, \zero_{\alpha_2}]}$ is the zero map on all inputs.

    \proofpart[\ensuremath{\zero_{\alpha \lolli \alpha'} \rightarrow_\beta \lambda x.\zero_{\alpha'}}]
    Similarly, it suffices to see that $\interpL{\lambda x.\zero}(1)$ is the zero map.

    \proofpart[$r_1 \cdot r_2 \rightarrow_\beta r'=r_1r_2 \in \R$] Trivial.

    \proofpart[\ensuremath{r \cdot [v_1,v_2] \rightarrow_\beta [r \cdot v_1, r \cdot v_2]}]
    \begin{align*}
        \interpL{r \cdot [v_1,v_2]}(1)
        &= r \cdot \left( \interpL{v_1}(1) \oplus \interpL{v_2}(1) \right) \\
        &= \left( r \cdot \interpL{v_1}(1) \right)
            \oplus 
            \left( r \cdot \interpL{v_2}(1) \right) \\
        &= \interpL{[r \cdot v_1, r \cdot v_2]}(1)
    \end{align*}

    \proofpart[$r \cdot \lambda x.a \rightarrow_\beta \lambda x. r \cdot a$]
    \begin{align*}
        \interpL{r \cdot \lambda x.a}(1)
            &= r \interpL{\lambda x.a}(1) \\
            &= r \interpL{a} = \interpL{\lambda x.r \cdot a}(1)
    \end{align*}

    \proofpart[$r_1+r_2 \rightarrow_\beta r'=r_1+r_2 \in \Zd$] Trivial.

    \proofpart[\ensuremath{[v_1,v_2]+[v_1',v_2'] \rightarrow_\beta [v_1+v_1',v_2+v_2']}]
    \begin{align*}
        \interpL{[v_1,v_2]+[v_1',v_2']}(1)
        &= \interpL{[v_1,v_2]}(1) + \interpL{[v_1',v_2']}(1) \\
        &= \left( \interpL{v_1}(1) \oplus \interpL{v_2}(1) \right)
            +
            \left( \interpL{v_1'}(1) \oplus \interpL{v_2'}(1) \right) \\
        &= \left( \interpL{v_1}(1) + \interpL{v_1'}(1)\right)
            \oplus
            \left( \interpL{v_2}(1) + \interpL{v_2'}(1) \right) \\
        &= \left( \interpL{v_1 + v_1'}(1)\right)
            \oplus
            \left( \interpL{v_2 + v_2'}(1) \right) \\
        &= \interpL{[v_1+v_1', v_2+v_2']}(1)
    \end{align*}

    \proofpart[$(\lambda x.a_1) + (\lambda x.a_2) \rightarrow_\beta \lambda x.a_1 + a_2$]
    \begin{align*}
        \interpL{(\lambda x.a_1) + (\lambda x.a_2)}(1)
        &= \left( b \mapsto \interpL{a_1}(b)\right)
            +
            \left( b \mapsto \interpL{a_2}(b) \right) \\
        &= b \mapsto  \interpL{a_1}(b) +  \interpL{a_2}(b) \\
        &= \interpL{a_1 + a_2}(1)
    \end{align*}
    by linearity.

    \proofpart[$\caseof{[v_1,v_2]}{\inl{x_1} \rightarrow a_1 \mid \inr{x_2} \rightarrow a_2} \rightarrow_\beta a_1\{v_1/x_1\} + a_2\{v_2/x_2\}$]
    From \cref{fig:cat-semantics} we know that
    \begin{align}
        \interpL{\caseof{[v_1,v_2]}{\inl{x_1} \rightarrow a_1 \mid \inr{x_2} \rightarrow a_2}}(1)
        &= \interpL{a_1}(c_1) + \interpL{a_2}(c_2)
        \label{eqn:beta-sound-case}
    \end{align}
    where $\interpL{[v_1,v_2]}(1) =\interpL{v_1}(1) \oplus \interpL{v_2}(1) = c_1 \oplus c_2$.
    Then, by \cref{lem:substitution-composition} we know 
    \begin{align*}
        \interpL{a_i}(c_i) = \interpL{a_i}(\interpL{v_i}(1)) = \interpL{a_i\{v_i/x_i\}}(1)
    \end{align*}
    Thus, \cref{eqn:beta-sound-case} is equal to
    \begin{align*}
        \interpL{a_1\{v_1/x_1\}}(1) + \interpL{a_2\{v_2/x_2\}}(1)
        &= \interpL{a_1\{v_1/x_1\} + a_2\{v_2/x_2\}}(1)
    \end{align*}
    as expected.

    \proofpart[$(\lambda x.a) v \rightarrow_\beta a\{v/x\}$]
    \begin{align*}
        \interpL{(\lambda x.a) v}(1)
        &= \interpL{a}(\interpL{v}(1)) \\
        &= \interpL{a\{v/x\}}(1) 
            \tag{\cref{lem:substitution-composition}}
    \end{align*}
    
\end{proof}

To prove soundness, we first need to establish soundness for the logical relation on values.

\begin{lemma} \label{lem:val-relation-sound}
    If $(v_1,v_2) \in \ValRelation{\alpha}$ then $\interpL{v_1}=\interpL{v_2}$.
\end{lemma}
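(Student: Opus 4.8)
The plan is to prove the statement by induction on the type $\alpha$, simultaneously establishing the companion statement that $(a_1,a_2) \in \ExpRelation{\alpha}$ implies $\interpL{a_1} = \interpL{a_2}$. Within a fixed $\alpha$ I will prove the value statement first and then deduce the expression statement from it; the recursion is well founded because the value statement at $\alpha_1 \lolli \alpha_2$ invokes the expression statement only at the strictly smaller type $\alpha_2$, while the expression statement at $\alpha$ invokes the value statement at the same $\alpha$, which has just been proven. The two ingredients beyond the logical relation itself are strong normalization (\cref{thm:normalization}) together with step soundness (\cref{thm:step-sound}), which handle the expression statement, and the "completeness of values" lemma (\cref{lem:values-complete}) together with the substitution-composition lemma (\cref{lem:substitution-composition}), which handle the function case of the value statement.

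For the base case $\alpha = \R$, a pair $(v_1,v_2) \in \ValRelation{\R}$ forces $v_1 = v_2 = r$, so there is nothing to check. For $\alpha = \alpha_1 \oplus \alpha_2$, a related pair of values has the form $([v_{11},v_{12}],[v_{21},v_{22}])$ with $(v_{1i},v_{2i}) \in \ValRelation{\alpha_i}$; the induction hypothesis gives $\interpL{v_{1i}} = \interpL{v_{2i}}$, and then from the semantic clause for tuples, $\interpL{[v_{11},v_{12}]}(s) = \interpL{v_{11}}(s) \oplus \interpL{v_{12}}(s) = \interpL{v_{21}}(s) \oplus \interpL{v_{22}}(s) = \interpL{[v_{21},v_{22}]}(s)$ for all $s$. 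For $\alpha = \alpha_1 \lolli \alpha_2$, a related pair is $(\lambda x_1.a_1, \lambda x_2.a_2)$ with the property that $(a_1\{w_1/x_1\}, a_2\{w_2/x_2\}) \in \ExpRelation{\alpha_2}$ whenever $(w_1,w_2) \in \ValRelation{\alpha_1}$. Since $\interpL{\lambda x_i.a_i} \in \CCat(\Zd, \interpL{\alpha_1} \lolli \interpL{\alpha_2})$ is $\Zd$-linear, it suffices to show $\interpL{\lambda x_1.a_1}(1) = \interpL{\lambda x_2.a_2}(1)$ as linear maps, hence that they agree on every basis element $b$ of $\interpL{\alpha_1}$. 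By \cref{lem:values-complete} there is a closed value $\up{b}$ with $\cdot \vdashL \up{b} : \alpha_1$ and $\interpL{\up{b}}(1) = b$; by reflexivity of $\lrequiv$ (noted just after its definition) we have $(\up{b},\up{b}) \in \ExpRelation{\alpha_1}$, and since $\up{b}$ is already a value this gives $(\up{b},\up{b}) \in \ValRelation{\alpha_1}$. Therefore $(a_1\{\up{b}/x_1\}, a_2\{\up{b}/x_2\}) \in \ExpRelation{\alpha_2}$, and the induction hypothesis (expression statement at $\alpha_2$) yields $\interpL{a_1\{\up{b}/x_1\}} = \interpL{a_2\{\up{b}/x_2\}}$. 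Finally, \cref{lem:substitution-composition} together with the $\lambda$ clause of \cref{fig:cat-semantics} gives $\interpL{\lambda x_i.a_i}(1)(b) = \interpL{a_i}(1 \otimes b) = \interpL{a_i\{\up{b}/x_i\}}(1)$, so the two sides agree at $b$, completing this case.

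For the expression statement at $\alpha$: given $(a_1,a_2) \in \ExpRelation{\alpha}$, strong normalization supplies unique values $v_1,v_2$ with $a_i \rightarrow^\ast v_i$; iterating \cref{thm:step-sound} along each reduction sequence gives $\interpL{a_i} = \interpL{v_i}$. The definition of $\ExpRelation{\alpha}$ gives $(v_1,v_2) \in \ValRelation{\alpha}$, so the value statement just established at $\alpha$ yields $\interpL{v_1} = \interpL{v_2}$, hence $\interpL{a_1} = \interpL{a_2}$.

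The main obstacle is the function case: the logical relation only guarantees agreement on \emph{related} arguments, so to compare the two denoted morphisms one must exhibit, for each basis vector $b$ of $\interpL{\alpha_1}$, a closed value that denotes $b$ and is related to itself. This is precisely what \cref{lem:values-complete} and reflexivity of $\lrequiv$ furnish; the remaining steps are routine bookkeeping with the substitution-composition lemma and the semantic clauses.
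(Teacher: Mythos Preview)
Your proof is correct and follows essentially the same approach as the paper: induction on $\alpha$, with the function case handled by instantiating the logical relation at the lifted basis value $\up{b}$ (via \cref{lem:values-complete}), appealing to \cref{lem:substitution-composition}, and using step soundness together with the induction hypothesis at $\alpha_2$. The only cosmetic difference is that you make the companion ``expression statement'' explicit as part of a mutual induction, whereas the paper inlines that step directly in the $\lolli$ case by normalizing and applying the value-level induction hypothesis to the resulting values.
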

\begin{proof}
    By induction on $\alpha$. If $\alpha=\R$ or $\alpha=\alpha_1 \oplus \alpha_2$, the result is easy to derive from definitions.

    In the case that $\alpha=\alpha_1 \lolli \alpha_2$, we know we can write $v_i=\lambda x.a_i$ such that, for all $(v_1',v_2') \in \ValRelation{\alpha_1}$, we have $(a_1\{v_1'/x\}, a_2\{v_2'/x\}) \in \ExpRelation{\alpha_2}$.
    In other words, there exist $(v_1'',v_2'') \in \ValRelation{\alpha_2}$ such that
    $a_i\{v_i'/x\} \rightarrow^\ast v_i''$. By the induction hypothesis we know that $\interpL{v_1''}=\interpL{v_2''}$, and so by \cref{thm:step-sound} we know that
    \[
        \interpL{a_1\{v_1'/x\}} = \interpL{v_1''} = \interpL{v_2''} = \interpL{a_2\{v_2'/x\}}.
    \]
    To show $\interpL{\lambda x.a_1}=\interpL{\lambda x.a_2}$, it suffices to show that for all basis elements $b$ of $\interpL{\alpha_1}$, we have
    $\interpL{a_1}(b)=\interpL{a_2}(b)$. This follows from \cref{lem:substitution-composition} and the fact that $(\up{b},\up{b}) \in \ValRelation{\alpha_1}$:
    \begin{align*}
        \interpL{a_1}(b) = \interpL{a_1\{\up{b}/x\}}=\interpL{a_2\{\up{b}/x\}}=\interpL{a_2}(b).
    \end{align*}
\end{proof}

Finally, we can prove the main soundness theorem. 

\begin{theorem}[\cref{thm:categorical-equivalence-sound}]
    If $\Delta \vdash t_1 \lrequiv t_2 : \alpha$ then $\interpL{t_1}=\interpL{t_2}$.
\end{theorem}
\begin{proof}
    To show $\interpL{t_1}=\interpL{t_2}$, it suffices to show $\interpL{t_1}(g)=\interpL{t_2}(g)$ for all basis elements $g \in \interpL{\Delta}$. By \cref{lem:values-complete}, for each such $g$ there exists some $\up{g} \in \ValRelation{\Delta}$ such that
    $\interpL{\up{g}}(1)=g$. Then
    \begin{align*}
        \interpL{t_i}(g)&=\interpL{t_i}(\interpL{\up{g}}(1)) \\
        &= \interpL{\up{g}(t_i)}(1) \tag{\cref{lem:substitution-composition}}
    \end{align*}
    so it suffices to show  $\interpL{\up{g}(t_1)} = \interpL{\up{g}(t_2)}$.
    Because $\Delta \vdash t_1 \lrequiv t_2 : \alpha$, we know $(\up{g}(t_1),\up{g}(t_2)) \in \ExpRelation{\alpha}$. Thus, the result follows from \cref{lem:val-relation-sound}.
\end{proof}

\subsection{Completeness of the Equivalence Relation}
\label{app:lambdaC-completeness-2}

Finally, we can prove completeness of  the equivalence relation. Below is a slightly strengthened statement of \cref{thm:lambdaC-lrequiv-completeness}.

\begin{theorem}[Completeness of $\lrequiv$] \label{app:lambdaC-lrequiv-completeness}
~
  \begin{enumerate}
      \item If $\Delta \vdashC a_1,a_2 : \alpha$ such that $\interpL{a_1} = \interpL{a_2}$, then $\Delta \vdashC a_1 \lrequiv a_2 : \alpha$.
      \item If $v_1,v_2$ are values of type $\alpha$ such that $\interpL{v_1} = \interpL{v_2}$, then $(v_1,v_2) \in \Val{\alpha}$.
  \end{enumerate}
\end{theorem}
\begin{proof}
    \proofpart
    Let $\delta$ be a value map for $\Delta$. 
    By strong normalization, there exist values $v_i$ such that $\delta(a_i) \rightarrow^\ast v_i$.
    We prove in \cref{lem:substitution-composition} that $\interpL{\delta(a_i)} =\interpL{a_i} \circ \interpL{\delta}$. Then, from the soundness of $\rightarrow^\ast$:
    \[
        \interpL{v_1} = \interpL{a_1} \circ \interpL{\delta}
        = \interpL{a_2} \circ \interpL{\delta} = \interpL{v_2}.
    \]
    and the result follows from \cref{case:lrequiv-completeness-values}.

    \proofpart \label{case:lrequiv-completeness-values}
    By induction on $\alpha$.

    The only nontrivial case is when $\alpha=\alpha_1 \lolli \alpha_2$, in which case we can write $v_i=\lambda x.a_i$. Then $\interpL{a_1}=\interpL{a_2}$, and so by the induction hypothesis, $x:\alpha_1 \vdashC a_1 \lrequiv a_2 : \alpha_2$, which completes the proof.
\end{proof}

\section{Categorical Semantics for \(\lambdaP\)}
\label{appendix:categorical-semantics-pauli}

\jennifer{In this section the proofs are sound, but could be made more elegant/unified.}

In this section we aim to prove that the operational semantics of $\lambdaP$ is sound with respect to the categorical semantics into $\PCat$.

\subsection{Categorical Structure of \(\PCat\)}

The fact that $\PCat$ forms a category at all can be derived from the equivalence between condensed encodings and projective Cliffords as described in \cref{sec:background}. Specifically, the fact that composition of encodings in $\PCat$ corresponds to composition of projective Cliffords implies that composition is associative and respects the identity.

Next we will establish some useful lemmas about the categorical structure of morphisms in $\PCat$, including the constructions introduced in \cref{sec:lambdaP-categorical-semantics}.

\begin{lemma} \label{lem:config-composition-0}
    If $(\mu,\psi) \in \PCat(\Pobj{V},\Pobj{V'})$ and $(\mu',\psi') \in \PCat(\alpha,\Pobj{V})$ then
    \begin{align*}
        (\mu',\psi') \circ (\mu,\psi) = \config{\mu} (\mu',\psi') \circ (0,\psi)
    \end{align*}
\end{lemma}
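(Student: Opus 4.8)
The statement is an immediate consequence of unfolding both sides using the definition of composition in $\PCat$ together with the definition of $\config{\cdot}$ on morphisms, and then comparing the two components of the resulting pairs. The only content is the observation that, in the $\PCat$ composition rule $(\mu_2,\psi_2)\circ(\mu_1,\psi_1)=(\mu_3,\psi_2\circ\psi_1)$ with $\mu_3(b)=\mu_1(b)+\mu_2(\psi_1(b))+\kappa^{\psi_2}\flatten{\psi_1(b)}$, the phase $\mu_1$ of the \emph{inner} morphism enters $\mu_3$ as a separate additive summand, and the correction term $\kappa^{\psi_2}\flatten{\psi_1(b)}$ depends only on $\psi_1$ and $\psi_2$, not on $\mu_1$.

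\textbf{Plan of the proof.} First I would note that both sides have the same symplectic component: the composition rule leaves the $\psi$-part as $\psi'\circ\psi$ irrespective of the phase parts, and $\config{\cdot}$ does not modify the $\psi$-part; moreover $(0,\psi)\in\PCat(\Pobj{V},\Pobj{V'})$ since $\psi\in\SCat(V,V')$, so $(\mu',\psi')\circ(0,\psi)$ is a genuine $\PCat$-morphism with symplectic part $\psi'\circ\psi$, and $\config{\mu}$ of it is too. Hence it remains only to compare the phase (i.e.\ $\CCat$) components, which I would do on an arbitrary basis vector $b$ of $V$. By the composition rule, the phase of $(\mu',\psi')\circ(\mu,\psi)$ at $b$ is $\mu(b)+\mu'(\psi(b))+\kappa^{\psi'}\flatten{\psi(b)}$, while the phase of $(\mu',\psi')\circ(0,\psi)$ at $b$ is $0+\mu'(\psi(b))+\kappa^{\psi'}\flatten{\psi(b)}$ — the same correction term, since $\kappa^{\psi'}$ depends only on the outer symplectomorphism. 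Applying $\config{\mu}$, which adds $\mu$ to the phase, yields $\mu(b)+\mu'(\psi(b))+\kappa^{\psi'}\flatten{\psi(b)}$, matching the left-hand side. Since the phase component of each side is a $\Zd$-linear map $V\to\Zd$ (one obtained as a linear extension from its values on basis vectors, the other as a sum of such), and they agree on every basis vector, they are equal in $\CCat$. Therefore the two sides coincide as morphisms of $\PCat$.

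\textbf{Expected obstacle.} There is essentially no obstacle: the argument is pure bookkeeping. The only point requiring a little care is keeping track of which of $(\mu,\psi)$ and $(\mu',\psi')$ plays the role of the inner versus the outer morphism in the composition rule, and confirming that the $\kappa$-correction term is identical in the two expansions because it is independent of the inner phase. Everything else is the definitions of composition and of $\config{\cdot}$ applied termwise.
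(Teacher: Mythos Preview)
Your proposal is correct and follows essentially the same approach as the paper's own proof: both unfold the definition of composition in $\PCat$ on a basis vector, observe that the resulting phase is $\mu(b)+\mu'(\psi(b))+\kappa^{\psi'}\flatten{\psi(b)}$, and compare with the phase of $(\mu',\psi')\circ(0,\psi)$, which differs precisely by the additive summand $\mu(b)$ that $\config{\mu}$ supplies. Your write-up is in fact slightly more thorough than the paper's, since you explicitly note that $(0,\psi)$ is a legitimate $\PCat$-morphism and that agreement on basis vectors suffices for equality of the $\Zd$-linear phase maps.
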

\begin{proof}
    By definition of composition, $(\mu',\psi') \circ (\mu,\psi) = (\mu_0, \psi' \circ \psi)$ where $\mu_0(b)=\mu(b) + \mu'(\psi(b)) + \kappa^{\psi'}(\psi(b))$. Similarly, $(\mu',\psi') \circ (0,\psi) = (\mu_0', \psi' \circ \psi)$ where $\mu_0'(b)=\mu'(\psi(b)) + \kappa^{\psi'}(\psi(b))$. Thus, it is clear that 
    \begin{align*}
        \config{\mu} (\mu',\psi') \circ (0,\psi)
        &= (\mu + \mu_0', \psi' \circ \psi) 
        = (\mu',\psi') \circ (\mu,\psi) 
    \end{align*}
\end{proof}

\begin{lemma} \label{lem:config-composition}
    Let $(\mu,\psi) \in \PCat(\alpha,\Pobj{V})$ and $(\mu',\psi') \in \PCat(\Pobj{V},\Pobj{V'})$, and let $a \in \CCat(V,\Zd)$.
    Then
    \begin{align*}
        (\config{a}(\mu',\psi')) \circ (\mu,\psi) = \config{a \circ \psi} (\mu',\psi') \circ (\mu,\psi)
    \end{align*}
\end{lemma}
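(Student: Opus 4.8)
The plan is to prove this purely by unfolding the definitions of $\config{-}$ and of composition in $\PCat$, and then matching the two sides on basis elements using linearity of $a$. Both sides of the claimed equation are morphisms in $\PCat(\alpha,\Pobj{V'})$, hence pairs; since the second ($\SCat$/$\CCat$) component is $\psi' \circ \psi$ on both sides (the $\config{-}$ operation never touches the second component), it suffices to check equality of the first ($\CCat(\alpha,\Zd)$) components, and since those are linear maps it suffices to check them on a basis element $b$ of $\alpha$ (the single element $1$ when $\alpha = \Punit$; the argument is uniform).

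First I would compute the left-hand side. By the definition of $\config{-}$, we have $\config{a}(\mu',\psi') = (a + \mu', \psi')$. Applying the composition formula for $\PCat$ with $\mu_2 = a + \mu'$, $\psi_2 = \psi'$, $\mu_1 = \mu$, $\psi_1 = \psi$, the first component of $(\config{a}(\mu',\psi')) \circ (\mu,\psi)$ sends a basis element $b$ to
\begin{align*}
    \mu(b) + (a + \mu')(\psi(b)) + \kappa^{\psi'}\flatten{\psi(b)}
    = a(\psi(b)) + \mu(b) + \mu'(\psi(b)) + \kappa^{\psi'}\flatten{\psi(b)},
\end{align*}
using linearity (additivity) of $a$ and $\mu'$. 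Next I would compute the right-hand side: the first component of $(\mu',\psi') \circ (\mu,\psi)$ sends $b$ to $\mu(b) + \mu'(\psi(b)) + \kappa^{\psi'}\flatten{\psi(b)}$, and then $\config{a \circ \psi}$ adds the morphism $a \circ \psi$, so the first component of $\config{a \circ \psi}\left((\mu',\psi') \circ (\mu,\psi)\right)$ sends $b$ to $(a \circ \psi)(b) + \mu(b) + \mu'(\psi(b)) + \kappa^{\psi'}\flatten{\psi(b)}$, which is exactly the expression obtained for the left-hand side since $(a \circ \psi)(b) = a(\psi(b))$. Equality of the two morphisms follows.

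I do not expect any real obstacle here: the lemma is essentially a bookkeeping identity that records the fact that the $\kappa^{\psi'}$ correction term in $\PCat$ composition depends only on $\psi'$ (not on the phase component), so pre-adding a phase $a$ to the left factor simply reappears as the precomposed phase $a \circ \psi$ after composition. The only minor point of care is that the composition formula is stated on basis elements, so the verification should be phrased as an equality of the values of two linear maps on an arbitrary basis element, and that this covers both $\alpha = \Pobj{V''}$ and $\alpha = \Punit$ uniformly.
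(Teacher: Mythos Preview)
Your proposal is correct and follows essentially the same approach as the paper: both unfold the definitions of $\config{-}$ and of composition in $\PCat$, compute the first component of each side on a basis element, and observe that the difference is exactly $a(\psi(b)) = (a \circ \psi)(b)$ because the $\kappa^{\psi'}$ term depends only on $\psi'$. Your added remark that the argument is uniform in $\alpha$ (covering $\alpha = \Punit$ as well) is a nice touch the paper leaves implicit.
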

\begin{proof}
    On the one hand, $(\mu',\psi') \circ (\mu,\psi) = (\mu_0,\psi' \circ \psi)$
    where $\mu_0$ is defined on basis elements by
    \begin{align*}
        \mu_0(b) = \mu(b) + \mu'(\psi(b)) + \kappa^{\psi'}(\psi(b))
    \end{align*}
    On the other hand,
    \begin{align*}
        (\config{a}(\mu',\psi')) \circ (\mu,\psi)
        &= (a + \mu', \psi') \circ (\mu,\psi)
        = (\mu_0',\psi' \circ \psi)
    \end{align*}
    where 
    \begin{align*}
        \mu_0'(b)=\mu(b) + (a + \mu')(\psi(b)) + \kappa^{\psi'}(\psi(b))
    \end{align*}
    Thus $\mu_0'=\mu_0 + a \circ \psi$, which completes the proof.
\end{proof}

Next we can prove that $\iota_i$ distributes over composition in the expected way, as has no effect on phase.

\begin{lemma} \label{lem:iota-P-composition}
    $(\zero, \iota_i) \circ (\mu,\psi) = (\mu, \iota_i \circ \psi)$
\end{lemma}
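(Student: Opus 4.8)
The plan is to reduce the statement to the single fact that the function $\kappa^{\iota_i}$ vanishes identically. Unfolding the definition of composition in $\PCat$ applied to $(\zero,\iota_i)\in\PCat(\alpha_i,\alpha_1\oplus\alpha_2)$ and $(\mu,\psi)\in\PCat(\alpha,\Pobj{V})$, the composite is $(\mu_3,\iota_i\circ\psi)$ where, on each basis element $b$,
\[
    \mu_3(b) = \mu(b) + \zero\bigl(\psi(b)\bigr) + \kappa^{\iota_i}\flatten{\psi(b)} = \mu(b) + \kappa^{\iota_i}\flatten{\psi(b)}.
\]
So it suffices to show $\kappa^{\iota_i}(w)=0$ for every $w$, and then $\mu_3=\mu$, giving $(\zero,\iota_i)\circ(\mu,\psi)=(\mu,\iota_i\circ\psi)$.

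To show $\kappa^{\iota_i}\equiv 0$, the first step is to observe that the injection $\iota_i$ preserves the extended symplectic form $\omega'$, i.e. $\omega'(\underline{\iota_i(v_1)},\underline{\iota_i(v_2)})=\omega'(\underline{v_1},\underline{v_2})$ for all $v_1,v_2$ in the domain. This holds because $\iota_i$ merely inserts its argument into the $i$-th block and pads the rest with zeros; since the inclusion $\underline{\,\cdot\,}$ acts componentwise and fixes $0$, we have $\underline{\iota_i(v)}=\iota_i(\underline{v})$ at the $\Zd'$ level, and the formula $\omega'((x_1,z_1),(x_2,z_2))=z_1\cdot x_2-z_2\cdot x_1$ is additive over direct sums in each slot, so the blocks outside the $i$-th contribute nothing. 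Given this, I would invoke the argument of \cref{lem:kappa-R'-symplectomorphism}: although that lemma is stated for a symplectomorphism, its proof uses only the defining formula of $\kappa^\psi$ together with the hypothesis that $\psi$ preserves $\omega'$, and never uses bijectivity of $\psi$; hence the same computation applies verbatim to $\psi=\iota_i$. Concretely, each summand of $\kappa^{\iota_i}(x_1\oplus z_1\oplus\cdots\oplus x_n\oplus z_n)$ collapses to $\underline{x_j}\,\underline{z_j}(1+\omega'(\underline{b_j^x},\underline{b_j^z}))+\underline{x_j}\,\omega'(\underline{b_j^x},\underline{v})+\underline{z_j}\,\omega'(\underline{b_j^z},\underline{v})$, which is $0$ since $\omega'(\underline{b_j^x},\underline{b_j^z})=-1$ and $\omega'(\underline{b_j^x},\underline{v})=-\omega'(\underline{b_j^z},\underline{v})$.

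With $\kappa^{\iota_i}\equiv 0$ in hand, the conclusion $\mu_3=\mu$ is immediate and the lemma follows. I do not expect any genuine difficulty here; the one point requiring care is the $\Zd$ versus $\Zd'$ bookkeeping — specifically confirming that $\omega'$ really is ``block diagonal'' with respect to the coproduct structure of $\SCat$ in the interlaced representation, so that $\iota_i$ preserves $\omega'$ as claimed — after which the rest is mechanical. (As a sanity check, this result specializes \cref{lem:kappa-basis-0} from single basis inclusions to the full coordinate inclusions $\iota_i$.)
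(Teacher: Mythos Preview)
Your proposal is correct and follows exactly the paper's approach: reduce to showing $\kappa^{\iota_i}\equiv 0$, observe that $\iota_i$ preserves $\omega'$ because $\underline{\iota_i(v)}=\iota_i(\underline{v})$ and $\omega'$ is block-diagonal over the direct sum, and then invoke \cref{lem:kappa-R'-symplectomorphism}. One small slip in your optional explicit check: it is not true that $\omega'(\underline{b_j^x},\underline{v})=-\omega'(\underline{b_j^z},\underline{v})$ in general (they equal $-\underline{z_j}$ and $\underline{x_j}$ respectively); what makes the summand vanish is the weighted combination $\underline{x_j}(-\underline{z_j})+\underline{z_j}\,\underline{x_j}=0$.
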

\begin{proof}
    It suffices to show that $\kappa^{\iota_i}(v)=0$ for all $v$, which follows from $\cref{lem:kappa-R'-symplectomorphism}$ because,
    for all $v_1$ and $v_2$ we have
    \begin{align*}
        \omega'(\underline{\iota_i(v_1)}, \underline{\iota_i(v_2)})
        &= \omega'(\iota_i(\underline{v_1}), \iota_i(\underline{v_2}))
        = \omega'(\underline{v_1},\underline{v_2}).
        \qedhere
    \end{align*}
\end{proof}

We can also show that $\boxplus$ commutes with $\iota_1$ and $\iota_2$ as expected.
    
\begin{lemma} \label{lem:boxplus-iota}
    $
        \left((\mu_1,\psi_1) \boxplus (\mu_2,\psi_2)\right) \circ \iota_i
        = (\mu_i,\psi_i).
    $
\end{lemma}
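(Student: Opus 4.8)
The plan is to unfold the definition of composition in $\PCat$ and verify the two components of the resulting pair separately. Write $(\mu_1,\psi_1)\boxplus(\mu_2,\psi_2) = (\mu_1\boxplus\mu_2,\ \psi_1\boxplus\psi_2)$ (a morphism of $\PCat$ under either hypothesis of \cref{lem:boxplus-well-defined}, though the identity below is really a formal computation that does not need this), and recall $\iota_i = (\zero,\iota_i)$, where the second $\iota_i$ is the $\CCat$-injection into $\alpha_1\oplus\alpha_2$. Applying the $\PCat$ composition formula with inner morphism $(\zero,\iota_i)$ and outer morphism $(\mu_1\boxplus\mu_2,\psi_1\boxplus\psi_2)$ gives $(\mu_1\boxplus\mu_2,\psi_1\boxplus\psi_2)\circ(\zero,\iota_i) = (\mu_3,\ (\psi_1\boxplus\psi_2)\circ\iota_i)$, where $\mu_3$ is determined on a basis element $b$ of $\alpha_i$ by $\mu_3(b) = \zero(b) + (\mu_1\boxplus\mu_2)(\iota_i(b)) + \kappa^{\psi_1\boxplus\psi_2}\flatten{\iota_i(b)}$. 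When $\alpha_i=\Punit$ the relevant ``basis element'' is simply $1\in\Zd$ and $\alpha_1\oplus\alpha_2$ is read as $\Pobj{\R\oplus\R}$; the argument below is otherwise unchanged.

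For the symplectic component, unfolding the definitions of $\boxplus$ and $\iota_i$ in $\CCat$ gives $(\psi_1\boxplus\psi_2)(\iota_i(v)) = \psi_1(v) + \psi_2(\zero) = \psi_i(v)$ for $i=1$ (and symmetrically for $i=2$), so $(\psi_1\boxplus\psi_2)\circ\iota_i = \psi_i$. For the phase component, the first summand satisfies $(\mu_1\boxplus\mu_2)(\iota_i(b)) = \mu_i(b)$ by the same computation. The crux is the remaining summand: after flattening, $\iota_i(b)$ is a \emph{standard} basis vector of $\alpha_1\oplus\alpha_2$, i.e.\ one of the form $[b_j,0]$ or $[0,b_j]$, because the $\CCat$-injection carries standard basis vectors to standard basis vectors. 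Hence \cref{lem:kappa-basis-0} applies and $\kappa^{\psi_1\boxplus\psi_2}\flatten{\iota_i(b)} = 0$. Therefore $\mu_3(b) = \mu_i(b)$ on every basis element, and since both are $\Zd$-linear maps, $\mu_3 = \mu_i$. Combining the two components yields $\left((\mu_1,\psi_1)\boxplus(\mu_2,\psi_2)\right)\circ\iota_i = (\mu_i,\psi_i)$.

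The proof is routine once the composition formula is written out. The only point needing care is recognizing that $\iota_i$ sends standard basis vectors to standard basis vectors, so that the weak statement \cref{lem:kappa-basis-0} suffices (rather than the stronger \cref{lem:kappa-R'-symplectomorphism} needed for \cref{lem:iota-P-composition}), together with keeping the degenerate case $\alpha_1=\alpha_2=\Punit$ in mind. This bookkeeping around the unit object is the only (minor) obstacle.
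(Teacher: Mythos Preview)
Your proof is correct and follows essentially the same approach as the paper: unfold the $\PCat$ composition, check the symplectic component directly via $(\psi_1\boxplus\psi_2)\circ\iota_i=\psi_i$, and dispatch the $\kappa$ term using \cref{lem:kappa-basis-0} together with the observation that $\iota_i$ carries standard basis vectors to standard basis vectors. Your additional remarks on the $\Punit$ case and the distinction from \cref{lem:kappa-R'-symplectomorphism} are accurate but not in the paper's (terser) version.
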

\begin{proof}
    Unfolding the definition of composition, we have that
    \begin{align*}
        (\mu_1 \boxplus \mu_2, \psi_1 \boxplus \psi_2) \circ (\zero, \iota_i) 
        = (\mu_0, (\psi_1 \boxplus \psi_2) \circ \iota_i) = (\mu_0, \psi_i)
    \end{align*}
    where $\mu_0$ is defined on standard basis elements $b$ by
    \begin{align*}
        \mu_0(b) &= 0 + (\mu_1 \boxplus \mu_2) (\iota_i b) + \kappa^{\psi_1 \boxplus \psi_2}(\iota_i b) 
        = \mu_i b + \kappa^{\psi_1 \boxplus \psi_2}(\iota_i b)
    \end{align*}
    \cref{lem:kappa-basis-0} states that the function $\kappa$ on a standard basis value is always $0$.
    But if $b$ is a standard basis vector then so is $\iota_i$. Thus $\mu_0(b)=\mu_i(b)$, which completes the proof.
\end{proof}

Finally, we establish the associativity properties of $\textrm{pow}$ and $\cprod$ on morphisms.

\begin{lemma} \label{app:lem:pow-spec}
    Let $[U]$ be a projective Clifford with condensed encoding $(\mu,\psi) \in \PCat(\Pobj{V},\Pobj{V'})$, and let $[V]$ be a projective Clifford such that $V P V^\dagger = (U P U^\dagger)^r$.
    Then the condensed encoding of $V$ is $\pow{(\mu,\psi)}{r}$.
\end{lemma}
\begin{proof}
    Let $b$ be a basis element of $V$; it suffices to check the action of $[V]$ on a basis element $\Delta_{\underline b}$:
    \begin{align*}
        (U \Delta_{\underline{b}} U^\dagger)^{\underline r}
        &= (\zeta^{\mu(b)} \Delta_{\underline{\psi(b)}})^{\underline{r}} \\
        &= \zeta^{r \mu(b)} \Delta_{\underline{r} ~\underline{v}} \\
        &= \zeta^{r \mu(b)} \zeta^{\tfrac{1}{d} \sgn{\underline{r}~\underline{v}}} \Delta_{\underline{r v}} \\
        &= V \Delta_{\underline{v}} V^\dagger
    \end{align*}
\end{proof}

\begin{lemma} \label{app:lem:cprod-spec}
    Let $[U_1],[U_2]$ be projective Cliffords with condensed encodings $(\mu_1,\psi_1)$ and $(\mu_2,\psi_2)$ respectively. If $[V]$ be a projective Clifford satisfying $V P V^\dagger = U_1 P U_1^\dagger \cprod U_2 P U_2^\dagger$,
    then the encoding of $[V]$ is $(\mu_1,\psi_1) \cprod (\mu_2,\psi_2)$.
\end{lemma}
\begin{proof}
    As in the proof of \cref{app:lem:pow-spec}, it suffices to check the action of $[V]$ on a basis element $\Delta_{\underline{b}}$:
    \begin{align*}
        &(U_1 \Delta_{\underline{v}} U_1^\dagger) \cprod (U_2 \Delta_{\underline{b}} U_2^\dagger) \\
        &= \zeta^{{\mu_1(b)}} \Delta_{\underline{\psi_1(b)}} \cprod \zeta^{{\mu_2(b)}} \Delta_{\underline{\psi_2(b)}} \\
        &= \zeta^{{\mu_1(b)} + {\mu_2(b)}} (-1)^{\sgn{\omega'(\underline{\psi_1(b)},\underline{\psi_2(b)})} + \sgn{\underline{\psi_1(b)} + \underline{\psi_2(b)}}} \Delta_{\underline{\psi_1(b) + \psi_2(b)}}
        \tag{\cref{eqn:condensed-product-defn}} \\
        &= V \Delta_{\underline{b}} V^\dagger
    \end{align*}
\end{proof}

\subsection{Soundness}
To prove the categorical semantics is sound, we first prove that substitution corresponds to composition.

\begin{lemma} \label{lem:phase-semantics-substitution-sound}
    If $x:\tau \vdashP t : \tau'$ and $\cdot \vdashL v : \overline{\tau}$, then $\interpP{t\{v/x\}} = \interpP{t} \circ \interpP{v}$.
\end{lemma}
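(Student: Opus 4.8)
The plan is to prove \cref{lem:phase-semantics-substitution-sound} by structural induction on the typing derivation $x:Q \vdashP e : Q'$, mirroring the structure of the substitution-composition lemma for $\lambdaC$ (\cref{lem:substitution-composition}) but now in the category $\PCat$. Since $\PCat$ has no tensor products, the context is at most a single variable, so there is no need to split contexts; instead the complication migrates into tracking how the phase component $\mu$ transforms under precomposition with $\interpP{v} = (\zero, \interpL{v}) \in \PCat(\Punit, \overline{Q})$. Before starting, I would record the key supporting facts already available: \cref{lem:psiof-sound} (so that $\interpL{\psiof{e}}$ is the $\psi$-component of $\interpP{e}$), the composition formulas from \cref{defn:odot-morphisms}, the definitions of $\config{-}$, $\boxplus$, and $\iota_i$ on $\PCat$, and the categorical facts \cref{lem:config-composition-0}, \cref{lem:config-composition}, \cref{lem:iota-P-composition}, \cref{lem:boxplus-iota}. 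I would also use \cref{lem:substitution-composition} itself, since the $\psiof{-}$ of each $\PCat$-expression is a $\CCat$-expression and substitution commutes with $\psiof{-}$ syntactically (a routine observation: $\psiof{e\{v/x\}} = \psiof{e}\{v/x\}$ for $\CCat$-value $v$).

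The inductive cases proceed as follows. For $e = x$, both sides equal $\interpP{v}$ since $\interpP{x} = (\zero, \mathrm{id})$ is the identity. For $e = a$ a pure $\CCat$-expression with no free variable, there is nothing to substitute and both sides are $(\zero, \interpL{a})$. For $e = \config{a}{e'}$, apply the induction hypothesis to $e'$ together with \cref{lem:config-composition} (which tells us precomposition commutes with the $\config{-}$ operation, picking up the term $a \circ \psi$, matched on the syntactic side since $\psiof{\config{a}{e'}} = \psiof{e'}$ and the phase of $\config{a}{e'}$ adds $a$). For $e = \iota_i(e')$, use the induction hypothesis and \cref{lem:iota-P-composition}, which says $(\zero,\iota_i)$ precomposes without disturbing the phase. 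The two case-statement forms are the crux: for $\caseof{e'}{\PauliX \to e_x \mid \PauliZ \to e_z}$ we have $\interpP{e} = (\interpP{e_x} \boxplus \interpP{e_z}) \circ \interpP{e'}$, so $\interpP{e}\circ\interpP{v} = (\interpP{e_x}\boxplus\interpP{e_z})\circ(\interpP{e'}\circ\interpP{v})$, and by the induction hypothesis for $e'$ this equals $(\interpP{e_x}\boxplus\interpP{e_z})\circ\interpP{e'\{v/x\}} = \interpP{\caseof{e'\{v/x\}}{\ldots}} = \interpP{e\{v/x\}}$ (noting $e_x, e_z$ are closed so unaffected by substitution). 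The $Q_1 \ptensor Q_2$ case statement is analogous. For $e = \letin{y}{e'}{e''}$ we use $\interpP{e} = \interpP{e''}\circ\interpP{e'}$, apply the induction hypothesis to $e'$ (and note $e''$ has only $y$ free, so substitution of $v$ for $x$ does not touch it), giving $\interpP{e''}\circ\interpP{e'\{v/x\}} = \interpP{(\letin{y}{e'}{e''})\{v/x\}}$.

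The main obstacle will be the $\caseof{\cdot}{\PauliX \to e_x \mid \PauliZ \to e_z}$ case, and more precisely verifying that the phase arithmetic introduced by the $\boxplus$ operation (\cref{lem:boxplus-well-defined}) together with the $\kappa^\psi$ correction terms in the composition formula of \cref{defn:odot-morphisms} is genuinely associative, i.e.\ that $(\interpP{e_x}\boxplus\interpP{e_z})\circ(\interpP{e'}\circ\interpP{v}) = ((\interpP{e_x}\boxplus\interpP{e_z})\circ\interpP{e'})\circ\interpP{v}$. This is associativity of composition in $\PCat$, which one must check holds including the $\kappa$ terms; it follows from the fact that $\PCat$ really is a category (composition is built to match the group law on condensed encodings, \cref{eqn-composition-formula}), but it is worth isolating as a preliminary associativity lemma so the induction goes through cleanly. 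Once associativity of $\PCat$-composition and the interaction lemmas for $\config{-}$, $\iota_i$, and $\boxplus$ are in hand, each case of the induction is a short rewrite, and the full details can be relegated to the appendix as indicated.
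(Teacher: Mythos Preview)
Your proposal is correct and follows essentially the same approach as the paper: induction on the typing derivation $x:Q \vdashP e : Q'$, using associativity of composition in $\PCat$ for the case-statement, $\texttt{let}$, and $\iota_i$ forms, and \cref{lem:config-composition} together with \cref{lem:substitution-composition} (applied directly to the $\CCat$-expression $a$, not only via $\psiof{-}$) for the $\config{a}{e'}$ case. One small simplification you can make: the cases $e=a$, $e=e_1\cprod e_2$, and $e=\pow{e}{r}$ are vacuous because their typing rules require the empty $\PCat$-context, so they cannot arise when $\Theta = x:Q$.
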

\begin{proof}
    By induction on $x:\tau \vdashP t : \tau'$.

    \proofpart[$t=x$]
        If $t=x$ then $\interpP{t\{v/x\}}=\interpP{v}$ and $\interpP{t}=(\zero, id)$ is the identity, so $\interpP{x} \circ \interpP{v} = \interpP{v}$.

    \proofpart[$t=\letin{y}{t_1}{t_2}$] According to the typing rule for let statements, $x$ must be in the domain of $t_1$. Thus
        \begin{align*}
            \interpP{t\{v/x\}}
            &= \interpP{\letin{y}{t_1\{v/x\}}{t_2}} \\
            &= \interpP{t_2} \circ \interpP{t_1\{v/x\}}
                \tag{\cref{fig:cat-semantics-pauli}} \\
            &= \interpP{t_2} \circ \interpP{t_1} \circ \interpP{v}
                \tag{induction hypothesis} \\
            &= \interpP{\letin{y}{t_1}{t_2}} \circ \interpP{v}
                \tag{\cref{fig:cat-semantics-pauli}}
        \end{align*}

    \proofpart[$t=\config{a} t'$]
        In this case, $x$ occurs in the domain of both $a$ and $t'$.
        \begin{align*}
            \interpP{t\{v/x\}}
            &= \interpP{\config{a\{v/x\}} (t'\{v/x\})} \\
            &= \config{\interpL{a\{v/x\}}} \interpP{t'\{v/x\}} \\
            &= \config{\interpL{a} \circ \interpL{v}} \interpP{t'} \circ \interpP{v}
                \tag{\cref{lem:substitution-composition} and induction hypothesis}
        \end{align*}
        In \cref{lem:config-composition} we prove that we can pull $\interpL{v}$ from the phase $\interpL{a} \circ \interpL{v}$ since $\interpP{v} = (\zero,\interpL{v})$, to obtain
        \begin{align*}
            \left( \config{\interpL{a}} \interpP{t'} \right) \circ \interpP{v}
            &= \interpP{\config{a} t'} \circ \interpP{v}
        \end{align*}
    
    \proofpart[$t=\pow{t'}{a}$]
        It suffices to show that 
        $\pow{(\mu,\psi)}{a} \circ (r, v_1 \otimes v_2)
        =
        \pow{(\mu,\psi) \circ (r, v_1)}{a \circ v_2}$,
    which follows from \cref{app:lem:pow-spec}.

    \proofpart[$t = t_1 \cprod t_2$]
        It suffices to show that $(g_1 \cprod g_2) \circ f
    = (g_1 \circ f) \cprod (g_2 \circ f)$, which follows from \cref{app:lem:cprod-spec}.

    \proofpart[$t=\caseof{t'}{\PauliX \rightarrow t_x \mid \PauliZ \rightarrow t_z}$] \label{proof:P-substitution-sound-case}
        Follows directly from the induction hypothesis and associativity of morphisms in $\PCat$.
        \begin{align*}
            \interpP{t\{v/x\}}
            &= \interpP{\caseof{t'\{v/x\}}{\PauliX \rightarrow t_x \mid \PauliZ \rightarrow t_z}} \\
            &= \left(\interpP{t_x} \boxplus \interpP{t_z}\right) \circ \interpP{t'\{v/x\}} \\
            &= \left(\interpP{t_x} \boxplus \interpP{t_z}\right) \circ \interpP{t'} \circ \interpP{v}
                \tag{induction hypothesis} \\
            &= \interpP{\caseof{t'}{\PauliX \rightarrow t_x \mid \PauliZ \rightarrow t_z}} \circ \interpP{v}
        \end{align*}

    \proofpart[$t=\caseof{t'}{\inl{x_1} \rightarrow t_1 \mid \inr{x_2} \rightarrow t_2}$]
        Similar to the previous case.

    \proofpart{$t=\iota_i t'$}
        Also follows from associativity, as in \cref{proof:P-substitution-sound-case}.
\end{proof}

We can now prove the main soundness theorem.

\begin{theorem}[\cref{thm:phase-types-semantics-sound}] \label{app:thm:phase-types-semantics-sound}
    If $\cdot \vdashP t : \tau$ and $t \rightarrow t'$ then $\interpP{t} = \interpP{t'}$
\end{theorem}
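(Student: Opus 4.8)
The plan is to reduce to the $\beta$-reduction rules of \cref{fig:operational-semantics-Pauli} and verify each one by unfolding the categorical semantics of \cref{fig:cat-semantics-pauli} on both sides. For the congruence (call-by-value context) rules this is immediate: every syntactic form has its interpretation built from the interpretations of its immediate subexpressions using the operations $\circ$, $\config{-}$, $\cprod$, $\boxplus$, $\iota_i$, and precomposition with $\interpL r$, each of which is a genuine function of those pieces, so a step inside a subexpression cannot change the interpretation of the whole. The central tool for the $\beta$-cases is \cref{lem:phase-semantics-substitution-sound}, $\interpP{e\{v/x\}} = \interpP{e} \circ \interpP{v}$, together with its $\lambdaC$ analogue \cref{lem:substitution-composition} for phase subterms, and the structural lemmas \cref{lem:config-composition-0,lem:config-composition,lem:iota-P-composition,lem:boxplus-iota}.

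First I would clear the routine cases. For $\letin{x}{\config{r}{v}}{e'} \rightarrow_\beta \config{r}{e'\{v/x\}}$, unfold the left side to $\interpP{e'} \circ \config{\interpL r}{(\zero,\interpL v)}$, pull the phase out through the composition with \cref{lem:config-composition-0}, and apply \cref{lem:phase-semantics-substitution-sound}. For $\config{r'}(\config{r}{v}) \rightarrow_\beta \config{r'+r}{v}$ both sides equal $(\interpL{r'}+\interpL r,\, \interpL v)$ by the definition of $\config{-}$. For $\iota_i(\config{r}{v}) \rightarrow_\beta \config{r}{\iota_i(v)}$, \cref{lem:iota-P-composition} (which rests on $\kappa^{\iota_i}=0$) gives $(\zero,\iota_i)\circ(\interpL r,\interpL v) = (\interpL r,\iota_i\circ\interpL v)$. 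The $\eta$-rule $v \rightarrow_\eta \config 0 v$ holds because $\interpP{a} = (\zero,\interpL a) = \interpP{\config 0 a}$. The $\inl/\inr$ case rule is handled like the $\PauliX/\PauliZ$ case below, but with vanishing phase correction, using the side condition $\symplecticform{e_1}{e_2}\lrequiv\zero$ and \cref{lem:boxplus-well-defined}(2).

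The work is in the three rules carrying a nontrivial phase $k$. The $\cprod$-rule is essentially definitional: for closed $\config{r_i}{v_i}$ we have $\interpP{\config{r_i}{v_i}} = (\interpL{r_i},\interpL{v_i}) \in \PCat(\Punit,\Pobj V)$, and \cref{defn:odot-morphisms} was written precisely so that $(\interpL{r_1},\interpL{v_1}) \cprod (\interpL{r_2},\interpL{v_2})$ carries exactly the correction $\tfrac d2\!\left(\divd{\omega'(\underline{v_1},\underline{v_2})} + \divd{\underline{v_1}+\underline{v_2}}\right)$ of the operational rule. The $\pow$-rule, whose semantics is $\interpP{e}\circ\interpL{r'}$, is a short computation against the $\Punit \to \Pobj V$ composition formula of \cref{sec:encodings}, which produces the stated $k = \tfrac d2\sgn{\underline{r'}~\underline v}$.

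The hard case, and the one I expect to be the main obstacle, is the $\PauliX/\PauliZ$ case rule
\[
    \caseof{\config{r}{[r_x,r_z]}}{\PauliX \rightarrow e_x \mid \PauliZ \rightarrow e_z} \rightarrow_\beta \config{r+k}{(\pow{e_z}{r_z} \cprod \pow{e_x}{r_x})}, \qquad k = \tfrac d2 \sgn{\underline{r_x}~\underline{r_z}}.
\]
Here I would first establish a decomposition lemma expressing $\boxplus$ through $\cprod$: for $(\mu_1,\psi_1) \in \PCat(\alpha_1,\Pobj V)$, $(\mu_2,\psi_2)\in\PCat(\alpha_2,\Pobj V)$, and $v_i\in\alpha_i$,
\[
    \bigl((\mu_1,\psi_1)\boxplus(\mu_2,\psi_2)\bigr)\circ(\zero,v_1\oplus v_2)
    = \config{k'}{\Bigl( \bigl((\mu_1,\psi_1)\circ(\zero,v_1)\bigr) \cprod \bigl((\mu_2,\psi_2)\circ(\zero,v_2)\bigr) \Bigr)},
\]
with $k'(1) = \tfrac d2 \divd{\omega'(\underline{v_1\oplus 0},\underline{0\oplus v_2})}$. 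Its proof goes through the operator identity that projective Cliffords distribute over $\cprod$ (\cref{lem:gamma-distributes-cprod}), the relation $\Delta_{\underline u}\Delta_{\underline v} = \tau^{\underline{\omega(u,v)}}\Delta_{\underline u}\cprod\Delta_{\underline v}$, and \cref{lem:boxplus-iota}, the key simplification being $\underline{\omega(v_1\oplus 0,\, 0\oplus v_2)} - \omega'(\underline{v_1\oplus 0},\underline{0\oplus v_2}) = d\,\divd{\omega'(\underline{v_1\oplus 0},\underline{0\oplus v_2})}$. Given this, the case rule reduces to bookkeeping: after pulling $\interpL r$ out of the composition with \cref{lem:config-composition-0} and writing the input in terms of the standard basis $b^x,b^z$ of $\overline{\Pauli} = \R\oplus\R$, composing $\interpP{e_x}\boxplus\interpP{e_z}$ with it produces $\interpP{\pow{e_x}{r_x}}\cprod\interpP{\pow{e_z}{r_z}}$ up to a phase which, after evaluating $\divd{\omega'(\underline{r_x\oplus 0},\underline{0\oplus r_z})}$ and accounting for the swap of $e_x$ and $e_z$ (which flips the sign of the $\omega'$ term appearing in \cref{defn:odot-morphisms}), collapses to exactly $\tfrac d2\sgn{\underline{r_x}~\underline{r_z}}$. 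The main obstacle, then, is getting every phase correction to line up — the $\divd{\cdots}$ terms from $\cprod$ on morphisms, the $\kappa^\psi$ corrections hidden inside composition in $\PCat$, and the explicit $k$ of the operational rule — which requires carefully tracking signs through the $\Zd$-versus-$\Zd'$ distinction, and in particular re-deriving the $\boxplus$/$\cprod$ decomposition lemma before the case-statement soundness can even be stated cleanly.
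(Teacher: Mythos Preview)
Your plan is correct and matches the paper's proof on all the routine cases (let, nested $\config{-}$, $\cprod$, $\pow$, $\iota_i$, $\eta$), using exactly the same lemmas.

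The one place you diverge is in the two case rules. You propose to first isolate a general decomposition lemma
\[
\bigl((\mu_1,\psi_1)\boxplus(\mu_2,\psi_2)\bigr)\circ(\zero,v_1\oplus v_2)
= \config{k'}{\bigl((\mu_1,\psi_1)\circ(\zero,v_1)\bigr)\cprod\bigl((\mu_2,\psi_2)\circ(\zero,v_2)\bigr)}
\]
and then specialize it. The paper instead works directly at the operator level for each case rule: it passes to the projective Clifford $\gamma^\boxplus$ associated with $\interpP{e_x}\boxplus\interpP{e_z}$, decomposes the input Pauli as $\Delta_{\underline{[r_x,r_z]}} = (-1)^{\sgn{\underline{r_x}\,\underline{r_z}}}\,\Delta_{\underline{[0,r_z]}}\cprod\Delta_{\underline{[r_x,0]}}$ and then $\Delta_{\underline{[r_x,0]}} = (\Delta_{\underline{[1,0]}})^{\underline{r_x}}$, pushes $\gamma^\boxplus$ through $\cprod$ and the exponent via \cref{lem:gamma-distributes-cprod}, and finishes with $\gamma^\boxplus\Delta_{\underline{[1,0]}}=\zeta^{r_x'}\Delta_{\underline{v_x}}$ (essentially \cref{lem:boxplus-iota}). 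The $\inl/\inr$ case is the same with the simpler decomposition $\Delta_{\underline{v_1\oplus v_2}} = \Delta_{\underline{v_1\oplus 0}}\cprod\Delta_{\underline{0\oplus v_2}}$ and no phase. Your route is perfectly valid---indeed your decomposition lemma is itself proved via \cref{lem:gamma-distributes-cprod}, so the underlying content is identical---but the paper's version avoids stating and proving the intermediate lemma, and sidesteps some of the $\divd{\cdot}$ bookkeeping you anticipate by letting the operator-level $\Delta$ identities absorb the phase accounting in one step.
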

\begin{proof}
    By case analysis on the $\beta$ reduction rules.

    \proofpart[$\letin{x}{\config{r}{v}}{t'} \rightarrow_\beta \config{r} t'\{v/x\}$]

    Unfolding definitions, we see that
    \begin{align*}
        \interpP{\letin{x}{\config{r}{v}}{t'}}
        &= \interpP{t'} \circ \interpP{\config{r}{v}} \\
        &= \interpP{t'} \circ (\interpL{r}, \interpL{v})
    \end{align*}
    \cref{lem:config-composition-0} shows that the phase $\interpL{r}$ can be pulled to the front of the equation:
    \begin{align*}
        \config{\interpL{r}} \interpP{t'} \circ (\zero, \interpL{v})
    \end{align*}
    But this is equal to $\interpP{\config{r} t'\{v/x\}}$ by \cref{lem:phase-semantics-substitution-sound}.

    \proofpart[$\config{r'}\config{r}v \rightarrow_\beta \config{r'+r} v$]

    Follows from unfolding definitions:
    \begin{align*}
        \interpP{\config{r'} \config{r} v} 
        &= \config{\interpL{r'}} \config{\interpL{r}} (\zero,\interpL{v}) \\
        &= (\interpL{r'} + \interpL{r}, \interpL{v}) \\
        &= \interpP{\config{r'+r} v}
    \end{align*}

    \proofpart[$(\config{r_1}v_1) \cprod (\config{r_2} v_2) \rightarrow_\beta \config{r_1+r_2+k}(v_1+v_2)$]
    Follows directly from the definition of $\cprod$ on morphisms in $\PCat$.

    \proofpart[$\pow{\config{r}v}{r'} \rightarrow_\beta \config{r'r + k}{(r' \cdot v)}$]
    Also follows from definitions.
    \begin{align*}
        \interpP{\pow{\config{r}{v}}{r'}}
        &= (\interpL{r}, \interpL{v}) \circ \interpL{r'} \\
        &= (r' \cdot \interpL{r} + k, r' \cdot \interpL{v}) \\
        &= \interpP{\config{r' r + k}{(r' \cdot v)}}
    \end{align*}

    \proofpart[$\iota_i(\config{r}{v}) \rightarrow_\beta \config{r}{\iota_i(v)} $]
    Follows from \cref{lem:iota-P-composition}.

    \proofpart[$\caseof{\config{r}[r_x,r_z]}{\PauliX \rightarrow t_x \mid \PauliZ \rightarrow t_z}
    \rightarrow_\beta \config{r + k} \pow{t_x}{r_x} \cprod \pow{t_z}{r_z}$]
    Unfolding definitions, we see that
    \begin{align*}
        \interpP{\caseof{\config{r}[r_x,r_z]}{\PauliX \rightarrow t_x \mid \PauliZ \rightarrow t_z}}
        &= (\interpP{t_x} \boxplus \interpP{t_z}) \circ (\interpL{r}, \interpP{[r_x,r_z]}) \\
        &= \config{\interpL{r}} (\interpP{t_x} \boxplus \interpP{t_z}) \circ (\zero, 1 \mapsto r_x \oplus r_z)
    \end{align*}
    Let $\interpP{t_x}=(r_x',v_x)$ and $\interpP{t_z}=(r_z',v_z)$ both in $\PCat(\Punit,\Pobj{V})$, 
    and let $\gamma^\boxplus$ be the projective Clifford associated with $\interpP{t_x} \boxplus \interpP{t_z} = (r_x' \boxplus r_z', v_x \boxplus v_z)$.
    By the correspondence with the Pauli group, it suffices to show that
    \begin{align}
        \gamma^{\boxplus}(\Delta_{\underline{[r_x,r_z]}}) =
            \left(\zeta^{r_x'} \Delta_{\underline{v_x}}\right)^{r_x}
            \cprod 
            \left(\zeta^{r_z'} \Delta_{\underline{v_z}}\right)^{r_z}
            \label{eqn:Psoundness-pauli-case}
    \end{align}
    First, observe that $\Delta_{\underline{[r_x,r_z]}} = (-1)^{\sgn{\underline{r_x}~\underline{r_z}}} \Delta_{\underline{[0,r_z]}} \cprod \Delta_{\underline{[r_x,0]}}$. Furthermore, we have that
    \begin{align*}
        \Delta_{\underline{[r_x,0]}}
        &= \Delta_{\underline{r_x}~\underline{[1,0]}} = (\Delta_{\underline{[1,0]}})^{\underline{r_x}} \\
        \Delta_{\underline{[0,r_z]}}
        &= \Delta_{\underline{r_z}~\underline{[0,1]}} = (\Delta_{\underline{[0,1]}})^{\underline{r_z}}
    \end{align*}
    We can then distribute $\gamma^\boxplus$ inside the $\cprod$ operator due to \cref{lem:gamma-distributes-cprod} as well as under the exponent, since $U \Delta_v^r U^\dagger = (U \Delta_v U^\dagger)^r$.
    Thus we can see that \cref{eqn:Psoundness-pauli-case} is equal to
    \begin{align*}
        \zeta^{\tfrac d 2 \sgn{\underline{r_x}~\underline{r_z}}}
        \left( \gamma^{\boxplus} \Delta_{\underline{[0,1]}}\right)^{r_z}
        \cprod
        \left(\gamma^{\boxplus}\Delta_{\underline{[1,0]}}\right)^{r_x}
    \end{align*}
    Finally, it suffices to observe that $\gamma^{\boxplus}\Delta_{\underline{[1,0]}}$ is equal to $\zeta^{r_x'} \Delta_{\underline{v_x}}$ and similarly for $\gamma^{\boxplus}\Delta_{\underline{[0,1]}}$, which completes the proof.

    \proofpart[$\caseof{\config{r}[v_1,v_2]}{\inl{x_1} \rightarrow t_1 \mid \inr{x_2} \rightarrow t_2}
    \rightarrow_\beta \config{r} t_1\{v_1/x_1\} \cprod t_2\{v_2/x_2\}$]

    Let $\gamma^i$ be the projective Clifford associated with $\interpP{t_i}$ and let $\gamma^\boxplus$ be $\gamma^1 \boxplus \gamma^2$ As in the previous case, it suffices to show that
    \begin{align*}
        \gamma^{\boxplus} (\Delta_{\underline{v_1 \oplus v_2}}) = \gamma^1(\Delta_{\underline{v_1}}) \cprod \gamma^2(\Delta_{\underline{v_2}})
    \end{align*}
    Observe that $\Delta_{\underline{v_1 \oplus v_2}}=\Delta_{\underline{v_1 \oplus 0} + \underline{0 \oplus v_2}} = \Delta_{\underline{v_1 \oplus 0}} \cprod \Delta_{\underline{0 \oplus v_2}}$.
    Then, since $\gamma^\boxplus$ distributes over $\cprod$ by \cref{lem:gamma-distributes-cprod}, we have that
    \begin{align*}
        \gamma^{\boxplus} (\Delta_{\underline{v_1 \oplus v_2}})
        &= \gamma^{\boxplus} (\Delta_{\underline{v_1 \oplus 0}}) \cprod
            \gamma^{\boxplus} (\Delta_{\underline{0 \oplus v_2}})
    \end{align*}
    From there, it suffices to observe that $\gamma^{\boxplus} (\Delta_{\underline{v_1 \oplus 0}})=\gamma^1(\Delta_{\underline{v_1}})$ and vice versa for $0 \oplus v_2$.
\end{proof}

\section{Glossary for Haskell-style Pattern-matching Syntax}
\label{appendix:glossary}
\cref{fig:glossary} contains a translation between the Haskell-style pattern-matching syntax used in the examples in \cref{sec:introduction,sec:examples} and the formal $\lambda$-calculus used in the rest of the paper.
The relationship between the two is based on the definition of a generalized pattern-matching operation of the form $\caseof{e}{p_1 \rightarrow e_1 \mid \cdots \mid p_n \rightarrow e_n}$. In this context, a \emph{pattern} is a simple expression defined as follows:
\begin{align*}
    p &::= x \mid \PauliX \mid \PauliZ \mid \inl{p} \mid \inr{p} \mid \iota_i(p)  \tag{patterns} \\
\end{align*}
The generalized pattern-matching syntax is syntactic sugar defined as follows:

\begin{align*}
    \caseof{e}{x \mapsto e'} &\triangleq \letin{x}{e}{e'} \\ \\
    \caseof{e}{\begin{aligned}
        \inl{p_1} &\rightarrow e_1 \\
        \cdots \\
        \inl{p_m} &\rightarrow e_m \\
        \inr{p_1'} &\rightarrow e_1' \\
        \cdots \\
        \inr{p_{n}' &\rightarrow e_{n}'}
    \end{aligned}}
        &\triangleq 
            \caseof{e}{\begin{aligned}
                \inl{x_1} &\rightarrow \caseof{x_1}{\begin{aligned} 
                        p_1 &\rightarrow e_1 \\
                        &\cdots \\
                        p_m &\rightarrow e_m
                    \end{aligned}} \\
                \inr{x_2} &\rightarrow \caseof{x_2}{\begin{aligned} 
                        p_1' &\rightarrow e_1' \\
                        &\cdots \\
                        p_n' &\rightarrow e_n'
                    \end{aligned}}
            \end{aligned}}
            \tag{$x_1$, $x_2$ fresh}
        \\ \\
    \caseof{e}{\begin{aligned}
        \iota_i(p_1) &\rightarrow e_1 \\
        \cdots \\
        \iota_i(p_n) &\rightarrow e_n
    \end{aligned}}
        &\triangleq
            \caseof{e}{\iota_i(x) \rightarrow \caseof{x}{\begin{aligned} 
                    p_1 &\rightarrow e_1 \\
                    &\cdots \\
                    p_n &\rightarrow e_n
                \end{aligned}}}
            \tag{$x$ fresh}
\end{align*}

\begin{figure}
    \centering
    \begin{minipage}{0.3\textwidth}
\begin{lstlisting}
    foo :: |^ tau1 -o tau2 ^|
    foo |^ p1 ^| = e1
    ...
    foo |^ pn ^| = en
\end{lstlisting}
    \end{minipage}
    \begin{minipage}{0.1\textwidth}
        \centering
        \Large $\Leftrightarrow$
    \end{minipage}
    \begin{minipage}{0.5\textwidth}
    \begin{align*}
        \textrm{foo} = \lambda \lift{x}. \caseof{x}{\begin{aligned}
            p_1 &\rightarrow e_1 \\
            &\cdots \\
            p_n &\rightarrow e_n
        \end{aligned}}
        \tag{$x$ fresh}
    \end{align*}
    \end{minipage}
    
    \medskip \medskip
    \begin{minipage}{0.3\textwidth}
\begin{lstlisting}
    foo :: |^ tau1 -o tau2 ^|
    foo |^ p1 ^| *= e1
    ...
    foo |^ pn ^| *= en
\end{lstlisting}
    \end{minipage}
    \begin{minipage}{0.1\textwidth}
        \centering
        \Large $\Leftrightarrow$
    \end{minipage}
    \begin{minipage}{0.5\textwidth}
\begin{lstlisting}
    foo :: |^ tau1 -o tau2 ^|
    foo |^ p1 ^| = p1 * e1
    ...
    foo |^ pn ^| = p2 * en
\end{lstlisting}
    \end{minipage}
    \caption{Correspondence between Haskell-style pattern-matching syntax and $\lambdaP$ terms with generalized pattern-matching.}
    \label{fig:glossary}
    \Description{Top: A Haskell-style function \texttt{foo} expressed using nested pattern-matching and the equivalent $\lambdaP$ term using case-style nested pattern-matching. Bottom: A Haskell-style function \texttt{foo} expressed using star-equal notation and the equivalent function expressed without using star-equal notation.}
\end{figure}

\end{document}